\numberwithin{equation}{section}
\theoremstyle{definition}
 \newtheorem{thm}{Theorem}[section]
 \newtheorem{prp}[thm]{Proposition}
 \newtheorem{lem}[thm]{Lemma}
 \newtheorem{cor}[thm]{Corollary}
 \newtheorem{fct}[thm]{Fact}
 \newtheorem{dfn}[thm]{Definition}
 \newtheorem{rmk}[thm]{Remark}
\newcommand{\ket}[1]{\left|#1 \right>}
\newcommand{\bra}[1]{\left< #1 \right|}
\newcommand{\seteq}{\mathbin{:=}}
\newcommand{\simto}{\xrightarrow{\,\sim\,}}
\newcommand{\bbR}{\mathbb{R}}
\newcommand{\bbC}{\mathbb{C}}
\newcommand{\bbF}{\mathbb{F}}
\newcommand{\bbN}{\mathbb{N}}
\newcommand{\bbP}{\mathbb{P}}
\newcommand{\bbQ}{\mathbb{Q}}
\newcommand{\bbZ}{\mathbb{Z}}
\newcommand{\bfv}{\mathbf{v}}
\newcommand{\calF}{\mathcal{F}}
\newcommand{\calH}{\mathcal{H}}
\newcommand{\calO}{\mathcal{O}}
\newcommand{\calU}{\mathcal{U}}
\newcommand{\FF}{\widetilde{\bbF}}
\newcommand{\tPhi}{{\widetilde{\Phi}}}
\newcommand{\tPhis}{{\widetilde{\Phi}^*}}
\title{Quantum Algebraic Approach to 
Refined Topological Vertex}
\author{H.~Awata, B.~Feigin and J.~Shiraishi}
\address{HA: Graduate School of Mathematics, Nagoya University, Nagoya, 464-8602, Japan}
\email{awata@math.nagoya-u.ac.jp}
\address{BF: Landau Institute for Theoretical Physics,
Russia, Chernogolovka, 142432, prosp. Akademika Semenova, 1a,   
\\
Higher School of Economics, Russia, Moscow, 101000,  Myasnitskaya ul., 20, and
\\
Independent University of Moscow, Russia, Moscow, 119002,
Bol'shoi Vlas'evski per., 11}
\email{borfeigin@gmail.com}
\address{MK,JS: Graduate School of Mathematical Sciences, University of Tokyo, Komaba, Tokyo 153-8914, Japan}
\email{shiraish@ms.u-tokyo.ac.jp}
\begin{document}

\begin{abstract}
We establish the equivalence between the 
refined topological vertex of Iqbal-Kozcaz-Vafa and 
a certain representation theory  of the quantum algebra of type $W_{1+\infty}$ introduced by Miki.
Our construction involves trivalent intertwining operators $\Phi$ and $\Phi^*$ associated with 
triples of the bosonic Fock modules. Resembling the 
topological vertex,  a triple of 
vectors $\in \bbZ^2$ is attached to each intertwining operator, which satisfy the
Calabi-Yau and smoothness conditions. 
It is shown that 
certain matrix elements of $\Phi$ and $\Phi^*$ give the 
refined topological vertex $C_{\lambda\mu\nu}(t,q)$ of Iqbal-Kozcaz-Vafa.
With another choice of basis,  we recover the refined topological vertex ${C_{\lambda\mu}}^\nu(q,t)$ of Awata-Kanno.  
The gluing factors appears correctly when we consider any compositions 
of $\Phi$ and $\Phi^*$.
The spectral parameters attached to Fock spaces play the role of the K\"ahler parameters.
\end{abstract}

\maketitle

\section{Introduction}
The aim of the present paper is to study
the refined topological vertex $C_{\lambda\mu\nu}(t,q)$ of Iqbal, Kozcaz and Vafa 
\cite{IKV:2009}
from the point of view of the quantum algebra of type $W_{1+\infty}$ introduced by Miki \cite{Mi:2007}. 
We also treat the vertex  ${C_{\lambda\mu}}^\nu(q,t)$ considered by Awata and Kanno \cite{AK:2009}
in the same footing.

Let us first recall briefly the notion of the topological vertex \cite{AKMV:2005}, \cite{I:2002}. 
A trivalent graph plays an important role, since
it encodes the information where the cycles of  a $T^2$ fibration of a toric 3-fold degenerate.
The Calabi-Yau threefold is then mapped to a Feynman graph with fixed Schwinger terms (K\"ahler classes
of the threefold), and 
the topological vertex is associated with states in the threefold tensor product 
of bosonic Fock spaces. Each edges of the graph is an oriented straight line
labeled by a vector ${\bf v}\in \bbZ^2$ corresponding to 
the generator of $H_1(T^2)$ (shrinking cycles). 
If all the edges are incoming, we have the condition 
$\sum_i{\bf v}_i=0$ (Calabi-Yau condition), and
$|{\bf v}_i\wedge {\bf v}_j|=1$ for any pair of edges (smoothness condition).
Together with a `gluing rules,' on can calculate all genus amplitudes of the topological
A-model for non-compact toric Calabi-Yau threefolds. 
The topological vertex $C_{\lambda\mu\nu}$ is represented by 
Okounkov, Reshetikhin and Vafa using the 
skew Schur functions  \cite{ORV:2003}, \cite{OR:2007}
\begin{align}
C_{\lambda\mu\nu}(q)=
 q^{\kappa(\mu)\over 2}
s_{\nu'}(q^{-\rho})
\sum_\eta 
s_{\lambda'/\eta}(q^{-\nu-\rho})s_{\mu/\eta}(q^{-\nu'-\rho}),
\end{align}
where $\lambda,\mu,\nu$ are partitions labeling the states in the 
threefold tensor of the Fock spaces, $\lambda'$ denotes the transpose of $\lambda$,
$\rho=(-1/2,-3/2,-5/2,\cdots)$,
and 
$\kappa(\lambda)=\sum_i \lambda_i(\lambda_i+1-2 i)$.

In \cite{IKV:2009} a refined version of the topological vertex was introduced, 
based on the arguments of geometric engineering  concerning 
the $K$-theoretic lift of the Nekrasov partition functions \cite{N:2003}, \cite{FP:2003}. 
See also \cite{NY:2005:a}, \cite{NY:2005:b}.
In this refined version, 
one more parameter $t$ comes in and the theory seems to be deeply 
relate with the Macdonald functions $P_\lambda(x;q,t)$ \cite{Ma:1995}.
The formula is 
\begin{align}
&
C^{{\rm (IKV)}}_{\lambda\mu\nu}(t,q)=
\left(q\over t\right)^{||\mu||^2\over 2} t^{\kappa(\mu)\over 2} q^{||\nu||^2\over 2}
\widetilde{Z}_\nu(t,q)
\sum_\eta 
\left(q\over t\right)^{|\eta|+|\lambda|-|\mu|\over 2} 
s_{\lambda'/\eta}(t^{-\rho}q^{-\nu})s_{\mu/\eta}(t^{-\nu'}q^{-\rho}),\\
&
\widetilde{Z}_\nu(t,q)=\prod_{s\in \nu} (1-q^{a_\nu(s)}t^{\ell_\nu(s)+1})^{-1}
= t^{-{||\nu'||^2\over 2}}P_\nu(t^{-\rho};q,t),
\end{align}
where $||\lambda||^2=\sum_i\lambda_i^2$.
See \cite{IK:2011} for recent development, and a remark on their notational convention.

There is another approach by Awata and Kanno \cite{AK:2009},
where Macdonald  functions are used in some symmetric way 
\begin{align}
&
{C_{\mu\lambda}}^\nu(q,t)=
P_\lambda(t^\rho;q,t)
\sum_\sigma \iota P_{\mu'/\sigma'}(-t^{\lambda'}q^{\rho};t,q)
P_{\nu/\sigma}(q^\lambda t^\rho;q,t) (q^{1/2}/t^{1/2})^{|\sigma|-|\nu|}
f_\nu(q,t)^{-1}. \label{AK}
\end{align}
(See Section 4.1  as for the notations.)
Here they incorporated the `framing factor' 
\begin{align}
f_\lambda(q,t)=
(-1)^{|\lambda|} q^{n(\lambda')+|\lambda|/2} t^{-n(\lambda)-|\lambda|/2} , \label{f-lam}
\end{align}
which was introduced by Taki \cite{T:2008}.
It has been recognized that these two different formulas give us essentially the same result, and the 
difference should be superficial.  
As for the preliminary version of the formula (\ref{AK}), see \cite{AK:2005}. 
\bigskip

Now we turn to the quantum algebra side.
The algebra we consider was first introduced by Miki  in 
his study on the $W_{1+\infty}$ algebra. 
After the first discovery by Miki, essentially the same algebraic structure has been rediscovered by several 
authors. See 
\cite{FT:2009}, \cite{FHHSY:2009}, \cite{SV1:2009}, \cite{SV2:2009}, \cite{Sc1:2010},
\cite{Sc2:2010},
\cite{FFJMM1:2010}, \cite{FFJMM2:2010}, \cite{FJMM:2011}.
This verifies the naturalness 
and the richness of the algebra. Because of this, 
it has been called by several different names, and there is no good choice at this moment
than waiting for well established terminologies. In this paper, we denote the algebra by $\calU$.

Motivated by the construction in (refined) topological vertex,
we study a representation theory of the quantum algebra $\calU$ which 
includes the following ingredients:
\begin{enumerate}
\item triple of the Fock spaces and associated intertwining operators, 
\item trivalent vertex with edges labeled by vectors $\in \bbZ^2$ satisfying the 
Calabi-Yau and smoothness conditions,
\item spectral parameters playing the role of the K\"ahler parameters.
\end{enumerate}

It has been recognized that the quantum algebra $\calU$ has two central elements, and
they obey a certain transformation formula 
with respect to the $SL(2,\bbZ)$ action \cite{Mi:2007},
\cite{SV1:2009,SV2:2009}.
Namely, the $SL(2,\bbZ)$ action
preserves the structure of the algebra
up to the shift in the  central elements. 
As a consequence of the $SL(2,\bbZ)$ action, we have two types of the Fock representations of $\calU$, 
one in \cite{FT:2009} and the other in \cite{FHHSY:2009}.
After fixing convention suitably, 
one can say that the former has level $(0,1)$ (vertical), and 
the latter has level $(1,0)$ (horizontal).
The action of the $T$ generator of the $SL(2,\bbZ)$ can be 
easily treated, and we can modify the `horizontal' Fock representation
to level $(1,N)$ with $N\in \bbZ$. 
We restrict ourselves only to the family of the Fock modules
$\calF^{(0,1)}_u$ and $\calF^{(1,N)}_u$ ($N\in \bbZ$), where 
$u$ is the  spectral parameter. (See Sections 2.3, 2.4.)
Note if one of the edges (the preferred edge) is labeld by $(0,1)$,
then from the Calabi-Yau and the smoothness condition
the rest should be $(1,N)$ and $(-1,-N-1)$ where $N\in \bbZ$.

Consider the  intertwining operators of $\calU$-modules 
associated with three Fock modules of the forms
$\Phi=\Phi\left[{{\bf v}_3,u_3\atop {\bf v}_1,u_1;{\bf v}_2,u_2}\right]:
\calF^{{\bfv}_1}_{u_1}\otimes \calF^{{\bfv}_2}_{u_2}\rightarrow \calF^{{\bfv}_3}_{u_3}$
and 
$\Phi^*=\Phi^*
\left[{{\bf v}_2,u_2; {\bf v}_1,u_1\atop{\bf v}_3,u_3}\right]:
\calF^{{\bfv}_3}_{u_3}\rightarrow\calF^{{\bfv}_2}_{u_2}\otimes \calF^{{\bfv}_1}_{u_1}$.
The following particular cases are essential in our construction: 
\begin{align}
&
\Phi:\calF^{(0,1)}_v\otimes \calF^{(1,N)}_u\longrightarrow
\calF^{(1,N+1)}_{-vu},\qquad 
 a \Phi= \Phi \Delta(a) \qquad (\forall a\in \calU),\label{e-1}\\
&
 \Phi_\lambda (\alpha)=\Phi(P_\lambda \otimes \alpha )
\qquad (\forall  P_\lambda\otimes \alpha \in \calF^{(0,1)}_v\otimes \calF^{(1,N)}_u),
\qquad \Phi_\emptyset (1)=1+\cdots ,\label{e-2}\\
&
\Phi^*:
\calF^{(1,N+1)}_{-vu}\longrightarrow
\calF^{(1,N)}_v\otimes 
\calF^{(0,1)}_u,\qquad 
 \Delta(a) \Phi^*= \Phi^* a \qquad (\forall a\in \calU),\label{e-3}\\
 &
 \Phi^*( \alpha )=\sum_\lambda \Phi^*_\lambda (\alpha)\otimes Q_\lambda
\qquad (\forall \alpha \in \calF^{(1,N+1)}_{-vu}),
\qquad 
\Phi^*_\emptyset (1)=1+\cdots ,\label{e-4}
\end{align}
where $\Phi_\lambda$ and  $\Phi^*_\lambda$ are normalized components of 
$\Phi$ and $\Phi^*$.
We prove that such (normalized) intertwining operators $\Phi$ and $\Phi^*$ exist
uniquely. (Theorems \ref{thm-1}, \ref{thm-2}.)

Let $S_\lambda(q,t)$'s be the dual of the Schur function $s_\mu$'s with respect to
the Macdonald scalar product in (\ref{scalar}) satisfying 
$\langle S_\lambda(q,t),s_\mu\rangle_{q,t}=\delta_{\lambda,\mu}$.
We show that the refined topological vertex $C_{\lambda\mu\nu}^{\rm (KIV)}(t,q)$
coincides with the matrix element
$\bra{S_\mu(q,t)} \Phi^*_\nu \ket{s_{\lambda'}}$ ut to 
a simple factor. (Proposition \ref{mat-el-IKV}.)
If we use the bases $(\iota P_\lambda)$ and  $(\iota Q_\mu)$, then
 the refined topological vertex ${C_{\lambda\mu}}^\nu$ 
arises as the 
 matrix element
$\bra{\iota P_\nu} \Phi^*_\lambda \ket{\iota Q_\mu}$. (Proposition \ref{mat-el}.)

We cheek that any types of the compositions of the intertwining operators 
$\Phi$ and $\Phi^*$ produces contractions of topological vertices involving 
correct gluing factors (see Definition \ref{gluing-rules}). 
Thereby proving the equivalence of the topological vertex 
and our representation theory. (Theorem \ref{equiv}.)
Since the discovery of Alday, Gaiotto and Tachikawa  \cite{AGT:2010}, 
it has been intensively studied that 
we have the representation theory of the Virasoro and $W$ algebras 
playing a profound role in the Nekrasov instanton partition function
\cite{HJS:2010},
\cite{FL:2010},
\cite{MMS:2011},
\cite{AFLT:2011}. As for the $K$-theoretic version, see \cite{AY:2010}, \cite{AY:2010:2},
\cite{AFHKSY:2011}.
Our quantum algebraic approach extends this idea in such a way that the topological A-model
and the topological vertex are involved.
We hope this will give us better understandings both in string theory side and in 
quantum integrable system side.

We remark that the intertwining operator proposed in \cite{AFHKSY:2011}
can be explicitly constructed by composing our $\Phi$'s and $\Phi^*$'s in a suitable manner.
However it still remains unclear how to 
describe the the structure of the `integral basis' proposed there. 
Therefore we do not go in this direction here.
\bigskip

This paper is organized as follows.
In Section 2, we recall our notations for the algebra $\calU$ 
and introduce  the family of the Fock modules
$\calF^{(0,1)}_u$ and $\calF^{(1,N)}_u$ ($N\in \bbZ$).
In Section 3, the trivalent intertwining operators $\Phi$ and $\Phi^*$ are 
defined (Definitions \ref{def-phi} and \ref{def-phis}). The existence and the uniqueness of them are stated in 
Theorems \ref{thm-1} and \ref{thm-2}. 
Section 4 is devoted to establishing the equivalence with 
topological vertex of Iqbal-Kozcaz-Vafa, and Awata-Kanno.
For this purpose, we calculate the matrix elements of  $\Phi$ and $\Phi^*$ 
(Propositions \ref{mat-el-IKV} and \ref{mat-el}). 
Then we check the gluing rules for all the possible compositions (Propositions 
\ref{gl-pro-1}, \ref{gl-pro-2}, \ref{gl-pro-3}, \ref{gl-pro-4} and \ref{gl-pro-5}).
Our main theorem is stated in Theorem \ref{equiv}.
In Section 5, we present two examples of calculations which involve 
Nekrasov partition functions, and investigate the meaning of the spectral parameters.
Proofs of Theorems \ref{thm-1} and \ref{thm-2} are given in Section 6.
A proof of Proposition  \ref{propos-1} is stated in Section 7.

\section{Preliminaries}
\subsection{Algebra $\calU$}
Let $q$ and $t$ be independent indeterminates, and set 
$\bbF=\bbQ(q,t)$. Set also
$\FF=\bbQ(q^{1/4},t^{1/4})$. We sometimes work over the field 
$\FF$ to keep the notation considerably symmetric
for our dual constructions. 
We briefly recall our notation for the algebra $\calU$. 
We follow the notation in \cite{FHHSY:2009} which is based on \cite{DI:1997}.
Let
\begin{align}
g(z)=\dfrac{G^+(z)}{G^-(z)}\in \bbQ(q,t)[[z]],\qquad
G^\pm(z)=(1-q^{\pm1}z)(1-t^{\mp 1}z)(1-q^{\mp1}t^{\pm 1}z).\label{g}
\end{align}

\begin{dfn}\label{dfn:calU(q,t)}
Let $\calU$ be a unital associative algebra over $\bbF$ generated by
the Drinfeld currents 
$x^\pm(z)=\sum_{n\in \bbZ}x^\pm_n z^{-n},
\psi^\pm(z)=\sum_{\pm n\in \bbN}\psi^\pm_n z^{-n},
$
and the central element $\gamma^{\pm 1/2}$, satisfying the defining relations
\begin{align}
&\psi^\pm(z) \psi^\pm(w)= \psi^\pm(w) \psi^\pm(z),\qquad 
\psi^+(z)\psi^-(w)=
\dfrac{g(\gamma^{+1} w/z)}{g(\gamma^{-1}w/z)}\psi^-(w)\psi^+(z),\\
&\psi^+(z)x^\pm(w)=g(\gamma^{\mp 1/2}w/z)^{\mp1} x^\pm(w)\psi^+(z),\\
&\psi^-(z)x^\pm(w)=g(\gamma^{\mp 1/2}z/w)^{\pm1} x^\pm(w)\psi^-(z),\\
&
[x^+(z),x^-(w)]=\dfrac{(1-q)(1-1/t)}{1-q/t}
\bigg( \delta(\gamma^{-1}z/w)\psi^+(\gamma^{1/2}w)-
\delta(\gamma z/w)\psi^-(\gamma^{-1/2}w) \bigg),\\
&G^{\mp}(z/w)x^\pm(z)x^\pm(w)=G^{\pm}(z/w)x^\pm(w)x^\pm(z),
\end{align}
where $\delta(z)=\sum_{n\in \bbZ} z^n$.
\end{dfn}

\begin{prp}\label{prop:Hopf-alg}
The algebra $\calU$ has a Hopf algebra structure defined by the 
coproduct $\Delta$:
\begin{align*}
&\Delta(\gamma^{\pm 1/2})=\gamma^{\pm 1/2} \otimes \gamma^{\pm 1/2},\\
&\Delta (x^+(z))=
x^+(z)\otimes 1+
\psi^-(\gamma_{(1)}^{1/2}z)\otimes x^+(\gamma_{(1)}z),\\
&\Delta (x^-(z))=
x^-(\gamma_{(2)}z)\otimes \psi^+(\gamma_{(2)}^{1/2}z)+1 \otimes x^-(z),\\
&\Delta (\psi^\pm(z))=
\psi^\pm (\gamma_{(2)}^{\pm 1/2}z)\otimes \psi^\pm (\gamma_{(1)}^{\mp 1/2}z),
\end{align*}
where $\gamma_{(1)}^{\pm 1/2}=\gamma^{\pm 1/2}\otimes 1$
and $\gamma_{(2)}^{\pm 1/2}=1\otimes \gamma^{\pm 1/2}$.
We omit the counit $\varepsilon$ and the antipode $a$ since we do not need them here.
\end{prp}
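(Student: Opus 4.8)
The plan is to establish the two properties that make $\Delta$ a coproduct: that it extends from the generators to an algebra homomorphism $\calU \to \calU \otimes \calU$, and that it is coassociative. Since the statement explicitly sets aside the counit and antipode, I would not treat those; they are constructed in the usual way for a Drinfeld-type Hopf algebra. As $\calU$ is presented by generators and relations, the content of the homomorphism property is that the series $\Delta(x^\pm(z))$, $\Delta(\psi^\pm(z))$ and the grouplike $\Delta(\gamma^{\pm 1/2})$ satisfy, in $\calU \otimes \calU$, the same defining relations listed in Definition~\ref{dfn:calU(q,t)}. Throughout, $\gamma$ is central, so the whole verification reduces to leg-by-leg use of the relations combined with careful bookkeeping of the shifts $\gamma_{(1)}, \gamma_{(2)}$ and of the functional identities satisfied by $g$ and $G^\pm$.

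I would order the relations by increasing difficulty. First the $\psi$--$\psi$ relations: since $\Delta(\psi^\pm(z))$ is a pure tensor of shifted $\psi$-currents, commutativity of $\psi^\pm(z)\psi^\pm(w)$ and the $\psi^+$--$\psi^-$ exchange follow immediately by applying the corresponding relation in each tensor factor, the $\gamma$-shift arguments matching because $\gamma$ is central. Next the mixed relations $\psi^\pm(z) x^\pm(w)$ and $\psi^\pm(z) x^\mp(w)$: substituting the coproducts and moving the shifted $\psi$ in each leg past the shifted $x$ produces exactly the factor $g(\gamma^{\mp 1/2} w/z)^{\mp 1}$ (respectively its $\psi^-$ analogue), the $\gamma^{\mp 1/2}$ being supplied by the shifts in $\Delta(\psi^\pm)$ and the scalar contributions from the two legs multiplying to the single prefactor on the right. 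The quadratic relation $G^\mp(z/w) x^\pm(z) x^\pm(w) = G^\pm(z/w) x^\pm(w) x^\pm(z)$ is then obtained by expanding $\Delta(x^\pm(z))\Delta(x^\pm(w))$: the diagonal terms reduce to the relation in each factor, while the cross terms of shape $\psi^\mp(\cdots) \otimes x^\pm(\cdots)$ are reordered using the $\psi$--$x$ exchange relations just verified, so that the $G^\pm$ prefactors agree on both sides.

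The main obstacle is the commutator $[x^+(z), x^-(w)]$, whose right-hand side carries the delta functions and the $\psi^\pm$ currents. Expanding $\Delta(x^+(z)) \Delta(x^-(w)) - \Delta(x^-(w)) \Delta(x^+(z))$ yields four terms in each ordering; the two diagonal contributions reproduce, in each leg, the bracket $[x^+, x^-]$ and hence the delta-function terms, and the genuinely new point is that the remaining cross terms must reassemble into the prescribed expressions $\delta(\gamma^{-1} z/w) \psi^+(\gamma^{1/2} w)$ and $\delta(\gamma z/w) \psi^-(\gamma^{-1/2} w)$. Here I would use the support identity $\delta(a z/w) f(z) = \delta(a z/w) f(w/a)$ to evaluate arguments on the delta locus, together with the $\psi$--$x$ relations, to collapse the cross terms onto the diagonal and cancel all spurious contributions; confirming that the shifts in the coproduct are tuned so that this happens, and that the prefactor $(1-q)(1-1/t)/(1-q/t)$ is reproduced exactly, is the delicate computation.

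Finally, coassociativity is checked directly by applying $\Delta \otimes \mathrm{id}$ and $\mathrm{id} \otimes \Delta$ to each generator. Because nested coproducts introduce shifts that add in the exponent of the central $\gamma$, the two iterated coproducts agree term by term, and the equality $(\Delta \otimes \mathrm{id})\Delta = (\mathrm{id} \otimes \Delta)\Delta$ reduces once more to the associativity of the $\gamma$-shift bookkeeping.
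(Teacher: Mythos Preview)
The paper does not supply a proof of this proposition at all: it is stated as a known structural fact, with the notation taken from \cite{FHHSY:2009} and ultimately from Ding--Iohara \cite{DI:1997}, and the text moves on immediately to a remark and the next definition. So there is nothing in the paper to compare your argument against.

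Your outline is the standard direct verification and is correct in spirit. A couple of points where your description is slightly loose: in the commutator check, the four-by-four expansion does not split cleanly into ``two diagonal contributions'' that each give a bracket plus ``cross terms'' that cancel. Rather, one term produces $[x^+(z),x^-(\gamma_{(2)}w)]\otimes\psi^+(\gamma_{(2)}^{1/2}w)$ and another produces $\psi^-(\gamma_{(1)}^{1/2}z)\otimes[x^+(\gamma_{(1)}z),x^-(w)]$; each of these already carries a $\psi$-dressing in the other leg, and it is the combination of the $\psi^+$ delta-term from the first with the $\psi^+$ leg, and of the $\psi^-$ delta-term from the second with the $\psi^-$ leg, that assembles into $\Delta(\psi^\pm)$ on the delta support. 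The genuinely mixed terms $x^+(z)\otimes x^-(w)$ and $\psi^-(\cdots)x^-(\cdots)\otimes x^+(\cdots)\psi^+(\cdots)$ must then be shown to commute with their opposite-order counterparts, and this is where the $\psi$--$x$ exchange relations and the functional equation $g(z)g(1/z)=1$ are used. None of this changes your strategy, but the bookkeeping is a bit different from what you wrote.
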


\begin{rmk}
The $\psi^\pm_0$ are central elements in $\calU$.
\end{rmk}

\begin{dfn}
Let $M$ be a left $\calU$-module over $\FF$. If we have 
\begin{align}
&\gamma^{1/2} \alpha= (t/q)^{l_1/4} \alpha,\qquad 
 (\psi^+_0)^{-1}\psi_0^- \alpha= (t/q)^{l_2} \alpha
\end{align}
for any $\alpha\in M$ and for some fixed $l_1,l_2\in \bbZ$,
we call $M$ of level $(l_1,l_2)$.

\end{dfn}

\subsection{Macdonald symmetric functions and Fock space $\calF$}

We basically follow \cite{Ma:1995} for the notations. 
A partition $\lambda$ is a series of 
nonnegative integers $\lambda=(\lambda_1,\lambda_2,\ldots)$ 
such that $\lambda_1\ge\lambda_2\ge\cdots$ with finitely many nonzero entries.
We use the following symbols:
$|\lambda|  \seteq \sum_{i\geq 1} \lambda_i$, 
$n(\lambda) \seteq \sum_{i\geq 1}(i-1)\lambda_i$. If $\lambda_l>0$ and $\lambda_{l+1}=0$,
we write $\ell(\lambda) \seteq l$ and call it  the length of $\lambda$.
The conjugate partition of $\lambda$ is denoted by $\lambda'$ which corresponds to 
the transpose of the diagram $\lambda$.
The empty sequence is denoted by $\emptyset$.
The dominance ordering is defined by $\lambda\ge\mu$ $\Leftrightarrow$
$|\lambda|=|\mu|$ and 
$\sum_{k=1}^i \lambda_k \ge \sum_{k=1}^i \mu_k$ for all $i=1,2,\ldots$.

We also follow \cite{Ma:1995} for the convention of the Young diagram.
Namely, the first coordinate $i$ (the row index) increases as one goes downwards,
and the second coordinate $j$ (the column index) increases 
as one goes rightwards. 
We denote by $\square=(i,j)$ the box located at the coordinate $(i,j)$.
For a box $\square=(i,j)$ and a partition $\lambda$, we use the following notations:
\begin{align*}
i(\square)\seteq i,\quad 
j(\square)\seteq j,\quad
a_{\lambda}(\square)\seteq \lambda_i-j,\quad
\ell_{\lambda}(\square)\seteq \lambda'_j-i. 
\end{align*}

Let $\Lambda$ be the ring of symmetric functions in $x=(x_1,x_2,\ldots)$ over $\bbZ$,
and let $\Lambda_{\bbQ(q,t)}\seteq\Lambda\otimes_{\bbZ}\bbQ(q,t)$. 
Let $m_\lambda$ be the monomial symmetric functions.
Denote the power sum function by $p_n=\sum_{i\geq 1}x_i^n$. For a partition $\lambda$, 
we write
$p_\lambda=\prod_i p_{\lambda_i}$.
Macdonald's scalar product on $\Lambda_{\bbF}$ is
\begin{align}
\langle p_\lambda,p_\mu \rangle_{q,t}=\delta_{\lambda,\mu}
z_\lambda \prod_{i=1}^{\ell(\lambda)} {1-q^{\lambda_i}\over 1-t^{\lambda_i}},\qquad 
z_\lambda=\prod_{i\geq 1} i^{m_i} \cdot m_i!, \label{scalar}
\end{align}
Here we denote by $m_i$ 
the number of entries  in $\lambda$ equal to $i$.
\begin{fct}
The Macdonald  symmetric function $P_\lambda(x;q,t)$ is uniquely characterized by 
the conditions \cite[Chap. VI, (4.7)]{Ma:1995}.
\begin{align*}
& P_\lambda= m_\lambda+\sum_{\mu<\lambda} u_{\lambda\mu}m_\mu
\qquad (u_{\lambda\mu}\in \bbQ(q,t)),\\
& \langle P_\lambda,P_\mu \rangle_{q,t}=0\qquad (\lambda\neq \mu). 
\end{align*}
\end{fct}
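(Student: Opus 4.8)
The plan is to prove this classical characterization of Macdonald's $P_\lambda$ one degree at a time. Since $\langle\cdot,\cdot\rangle_{q,t}$ respects the grading, it suffices to argue inside the finite-dimensional space $\Lambda^n_{\bbF}$ of homogeneous symmetric functions of a fixed degree $n=|\lambda|$, whose monomial basis $\{m_\mu : |\mu|=n\}$ is indexed by the finite poset of partitions of $n$ under dominance. First I would record that the form is non-degenerate on each $\Lambda^n_{\bbF}$: by (\ref{scalar}) the power sums $\{p_\mu : |\mu|=n\}$ are an orthogonal basis with norms $z_\mu\prod_i(1-q^{\mu_i})/(1-t^{\mu_i})$, and these are nonzero in $\bbF=\bbQ(q,t)$ because $q,t$ are transcendental, so the Gram matrix is invertible. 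Non-degeneracy will be used repeatedly below, in particular to guarantee that an orthogonal basis has no isotropic vector.

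For existence I would realise the $P_\lambda$ as eigenfunctions of a self-adjoint triangular operator. Introduce Macdonald's $q$-difference operator $D=\sum_{i}\bigl(\prod_{j\ne i}\tfrac{tx_i-x_j}{x_i-x_j}\bigr)T_{q,x_i}$ on symmetric polynomials in $n$ variables, where $T_{q,x_i}$ is the shift $x_i\mapsto qx_i$, and pass to $\Lambda$ by the usual stabilisation in the number of variables. I need three properties: (i) $D$ is self-adjoint for $\langle\cdot,\cdot\rangle_{q,t}$; (ii) $D$ is triangular in the monomial basis, $D\,m_\lambda=c_\lambda m_\lambda+\sum_{\mu<\lambda}(\ast)m_\mu$, with eigenvalue $c_\lambda=\sum_{i}q^{\lambda_i}t^{n-i}$ depending only on $\lambda$; and (iii) the $c_\lambda$ are pairwise distinct as elements of $\bbF$. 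Granting these, triangularity forces the $c_\lambda$-eigenvector to have the shape $P_\lambda=m_\lambda+\sum_{\mu<\lambda}u_{\lambda\mu}m_\mu$, which is the first condition, while self-adjointness together with distinct eigenvalues forces eigenvectors for different eigenvalues to be orthogonal, which is the second. The triangularity (ii) is a direct expansion of the action of $D$ on $m_\lambda$, but verifying the self-adjointness (i) rests on a key bilinear identity, and controlling both the distinctness (iii) and the stabilisation that lets $D$ descend to $\Lambda$ is where the real work lies; this is the main obstacle.

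Finally I would settle uniqueness by a purely formal argument that needs only the two conditions and non-degeneracy. Suppose $\{P_\lambda\}$ and $\{\tilde P_\lambda\}$ both satisfy the stated conditions in degree $n$. By the triangularity condition each is a unitriangular transform of $\{m_\mu\}$, hence of the other, so $\tilde P_\lambda=\sum_{\mu\le\lambda}a_{\lambda\mu}P_\mu$ with $a_{\lambda\lambda}=1$ and, symmetrically, $P_\mu=\sum_{\nu\le\mu}b_{\mu\nu}\tilde P_\nu$. Since $\{P_\lambda\}$ is an orthogonal basis of the non-degenerate space $\Lambda^n_{\bbF}$, no self-product $\langle P_\mu,P_\mu\rangle$ vanishes (an isotropic basis vector would be orthogonal to everything). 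Now fix $\mu<\lambda$. Computing $\langle\tilde P_\lambda,P_\mu\rangle$ from the first expansion and orthogonality of $\{P\}$ gives $a_{\lambda\mu}\langle P_\mu,P_\mu\rangle$; computing it from the second expansion and orthogonality of $\{\tilde P\}$ gives $\sum_{\nu\le\mu}b_{\mu\nu}\langle\tilde P_\lambda,\tilde P_\nu\rangle=0$, because every $\nu\le\mu<\lambda$ satisfies $\nu\ne\lambda$. Hence $a_{\lambda\mu}\langle P_\mu,P_\mu\rangle=0$, so $a_{\lambda\mu}=0$ for all $\mu<\lambda$, and therefore $\tilde P_\lambda=P_\lambda$.
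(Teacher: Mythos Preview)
The paper gives no proof of this Fact; it simply cites Macdonald \cite[Chap.~VI, (4.7)]{Ma:1995} and moves on. Your outline is precisely the argument Macdonald gives in that reference: existence via the self-adjoint $q$-difference operator $D$ with upper-triangular action on the $m_\lambda$ and pairwise distinct eigenvalues, and uniqueness by a Gram--Schmidt style argument from non-degeneracy. Your uniqueness paragraph is clean and complete. In the existence paragraph you correctly flag that self-adjointness of $D$ with respect to $\langle\cdot,\cdot\rangle_{q,t}$ and distinctness of the eigenvalues $c_\lambda=\sum_i q^{\lambda_i}t^{n-i}$ are the substantive points; these are exactly the lemmas Macdonald proves in Chapter~VI, \S\S3--4, so invoking them is appropriate here. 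There is nothing to compare against in the paper itself.
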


Denote 
$Q_\lambda \seteq P_\lambda / \langle P_\lambda,P_\lambda \rangle_{q,t}$.
Then $(Q_\lambda)$ and $(P_\lambda)$ are dual bases of $\Lambda_{\bbF}$.
We have
 $\langle P_\lambda,P_\lambda\rangle_{q,t}= c'_\lambda / c_\lambda$
 where
\begin{align}
c_\lambda \seteq
 \prod_{\square\in \lambda }
 (1-q^{a_\lambda(\square)}t^{\ell_\lambda(\square)+1}),\quad
c'_\lambda \seteq
 \prod_{\square\in \lambda }
 (1-q^{a_\lambda(\square)+1}t^{\ell_\lambda(\square)}). \label{c-lam}
\end{align}

Let $x=(x_1,x_2,\ldots)$ and $y=(y_1,y_2,\ldots)$ are two infinite sets of 
independent indeterminates.
The skew Macdonald polynomials $P_{\lambda/\mu}$ satisfy 
$P_\lambda(x,y)=\sum_\mu P_\mu(x) P_{\lambda/\mu}(y) $
 \cite[Chap. VI, (7.$9'$)]{Ma:1995}.

Let $\calH$ be the Heisenberg algebra over $\bbF$ 
with generators $\{a_n \mid n\in\bbZ\}$ satisfying 
\begin{align*}
 [a_m,a_n]=m\dfrac{1-q^{|m|}}{1-t^{|m|}}\delta_{m+n,0} \, a_0.
\end{align*}
Let  $|0\rangle$ be the vacuum state
satisfying the annihilation conditions for 
the positive Fourier modes $a_n|0\rangle=0$ ($n\in\bbZ_{>0}$).
For a partition $\lambda=(\lambda_1,\lambda_2,\ldots)$, we 
denote 
$\ket{a_{\lambda} }=a_{-\lambda_1} a_{-\lambda_2}\cdots |0\rangle$ for short.
Denote by $\calF$ the Fock space 
having the basis $(\ket{a_{\lambda} })$.

As graded vector spaces, the space of the symmetric functions $\Lambda_{\bbF}$
and the Fock space $\calF$ are isomorphic, and 
we may identify them:
\begin{align}
\calF \simto \Lambda_{\bbF},\quad 
      \ket{ a_{\lambda}} \mapsto p_\lambda.
\end{align}
We give an $\calH$-module structure on $\Lambda_{\bbF}$ by setting
$a_0v=v$ and
\begin{align*}
a_{-n} v=p_n v,\quad
a_{n}  v=n\dfrac{1-q^n}{1-t^n}\dfrac{\partial v}{\partial p_n},\qquad (n>0,v\in \Lambda_{\bbF}).
\end{align*}

Let $\bra{0}$ be the dual vacuum satisfying $\bra{0} a_n=0$ ($n\in \bbZ_{<0}$), and
 $\bra{a_\lambda}=\bra{0}a_{\lambda_1}a_{\lambda_2}\cdots$.
The dual Fock space $\calF^*$ has the basis $(\bra{a_\lambda})$.
We identify symmetric functions with states in $\calF$ (or $\calF^*$) when it is convenient, and 
write $\ket{P_\lambda}$ (or $\bra{P_\lambda}$) for $P_\lambda$ for example.
With this notation we have 
$\bra{P_\lambda}\calO\ket {P_\mu}=\langle P_\lambda, \calO P_\mu \rangle_{q,t}$
for any  $\calO\in U(\calH)$.

\subsection{Level $(0,1)$ module $\calF_u^{(0,1)}$}
\begin{dfn}
Let $\lambda=(\lambda_1,\lambda_2,\ldots)$ be a partition, and $i\in \bbZ_{\geq0}$. 
Set $A^\pm_{\lambda,i} \in \bbQ(q,t), B^\pm_{\lambda}(z) \in \bbQ(q,t)[[z]]$ by 
\begin{align}
&A^+_{\lambda,i}=(1-t)
\prod_{j=1}^{i-1}
{(1-q^{\lambda_i-\lambda_j}t^{-i+j+1})
(1-q^{\lambda_i-\lambda_j+1}t^{-i+j-1})\over 
(1-q^{\lambda_i-\lambda_j}t^{-i+j})
(1-q^{\lambda_i-\lambda_j+1}t^{-i+j})},\label{A+}\\
&A^-_{\lambda,i}=(1-t^{-1})
{1-q^{\lambda_{i+1}-\lambda_i} \over 1-q^{\lambda_{i+1}-\lambda_i+1} t^{-1}}
\prod_{j=i+1}^{\infty}
{(1-q^{\lambda_j-\lambda_i+1}t^{-j+i-1})
(1-q^{\lambda_{j+1}-\lambda_i}t^{-j+i})\over 
(1-q^{\lambda_{j+1}-\lambda_i+1}t^{-j+i-1})
(1-q^{\lambda_j-\lambda_i}t^{-j+i})} ,\label{A-}\\
&
B^+_\lambda(z)=
{1-q^{\lambda_{1}-1} t z \over 1-q^{\lambda_{1}} z}
\prod_{i=1}^{\infty}
{(1-q^{\lambda_i}t^{-i}z)
(1-q^{\lambda_{i+1}-1}t^{-i+1}z)\over 
(1-q^{\lambda_{i+1}}t^{-i}z)
(1-q^{\lambda_i-1}t^{-i+1}z)},\label{B+}\\
&
B^-_\lambda(z)=
{1-q^{-\lambda_{1}+1} t^{-1} z \over 1-q^{-\lambda_{1}} z}
\prod_{i=1}^{\infty}
{(1-q^{-\lambda_i}t^{i}z)
(1-q^{-\lambda_{i+1}+1}t^{i-1}z)\over 
(1-q^{-\lambda_{i+1}}t^{i}z)
(1-q^{-\lambda_i+1}t^{i-1}z)}.\label{B-}
\end{align}
\end{dfn}

Note that if $\lambda_i=\lambda_{i-1}$ then $A^+_{\lambda,i}=0$, and
if $\lambda_i=\lambda_{i+1}$ then $A^-_{\lambda,i}=0$.
If $\lambda_i<\lambda_{i-1}$, we may obtain a new partition by 
adding one box to the $i$-th row, and we denote it by
$\lambda+{\bf 1}_i=
(\lambda_1,\lambda_2,\ldots,\lambda_{i-1},\lambda_i+1,\lambda_{i+1},\ldots)
$
for simplicity. 
If $\lambda_i>\lambda_{i+1}$, we may obtain a new partition by 
removing one box from the $i$-th row, and we write
$
\lambda-{\bf 1}_i=
(\lambda_1,\lambda_2,\ldots,\lambda_{i-1},\lambda_i-1,\lambda_{i+1},\ldots).
$

\begin{prp} Let $u$ be an indeterminate.
We can endow a left $\calU$-module structure over $\FF$
to $\calF$ by setting
\begin{align}
&
\gamma^{1/2} P_\lambda=P_\lambda,\\
&
x^+(z) P_\lambda=
\sum_{i=1}^{\ell(\lambda)+1} 
A^+_{\lambda,i}\,
\delta(q^{\lambda_i}t^{-i+1}u/z) 
P_{\lambda+{\bf 1}_i},\\
&
x^-(z) P_\lambda=q^{1/2}t^{-1/2}
\sum_{i=1}^{\ell(\lambda)} 
A^-_{\lambda,i}\,
\delta(q^{\lambda_i-1}t^{-i+1}u/z)
P_{\lambda-{\bf 1}_i},\\
&
\psi^+(z)P_\lambda=q^{1/2}t^{-1/2}
B^+_\lambda(u/z)P_\lambda,\\
&
\psi^-(z)P_\lambda=q^{-1/2}t^{1/2}
\,B^-_\lambda(z/u)P_\lambda.
\end{align}
This is a level $(0,1)$ module. 
We denote this $\calU$-module by $\calF^{(0,1)}_u$.
\end{prp}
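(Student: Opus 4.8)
The plan is to verify the defining relations of Definition~\ref{dfn:calU(q,t)} directly on the basis $(P_\lambda)$ of $\calF$. Since the prescribed action has $\gamma^{1/2}=1$, we have $\gamma=1$ throughout, which collapses the $\gamma$-dependence in every relation. Because $\psi^+(z)$ and $\psi^-(z)$ act diagonally, with eigenvalues $q^{1/2}t^{-1/2}B^+_\lambda(u/z)$ and $q^{-1/2}t^{1/2}B^-_\lambda(z/u)$, the relations $\psi^\pm(z)\psi^\pm(w)=\psi^\pm(w)\psi^\pm(z)$ hold trivially, and the mixed relation $\psi^+(z)\psi^-(w)=\psi^-(w)\psi^+(z)$ (its $\gamma=1$ form, since the prefactor $g(w/z)/g(w/z)=1$) is immediate as well.

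For the $\psi$--$x$ relations I would use that $x^\pm(w)$ is a finite sum of box-adding/removing maps, each carrying a factor $\delta(c\,u/w)$ with $c$ a monomial in $q,t$. By the formal identity $f(w)\delta(cu/w)=f(cu)\delta(cu/w)$, the relation reduces, for each summand, to a scalar identity comparing the eigenvalue ratio $B^\pm_{\lambda\pm{\bf 1}_i}/B^\pm_\lambda$ against the value of $g$ at the localized argument. Concretely, writing $\zeta=u/z$, the $\psi^+$--$x^+$ relation becomes
\[
\frac{B^+_{\lambda+{\bf 1}_i}(\zeta)}{B^+_\lambda(\zeta)}=g\bigl(q^{\lambda_i}t^{-i+1}\zeta\bigr)^{-1},
\]
a rational identity in $\zeta$ that follows by matching zeros, poles, and leading coefficients of the telescoping products (\ref{B+})--(\ref{B-}); the products are designed precisely so that inserting one box in row $i$ alters $B^\pm_\lambda$ by exactly the factor dictated by $g$.

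The crux, and the step I expect to be hardest, is the commutator relation. Applying $x^+(z)x^-(w)$ and $x^-(w)x^+(z)$ to $P_\lambda$ yields a double sum over a box added in row $i$ and removed in row $j$. I split into off-diagonal terms ($i\neq j$, landing in $P_{\lambda+{\bf 1}_i-{\bf 1}_j}$) and diagonal terms ($i=j$, returning to $P_\lambda$). For $i\neq j$ the two moves commute as operations on Young diagrams, the two delta-supports coincide, and vanishing of the commutator demands the structure-constant identity
\[
A^+_{\lambda-{\bf 1}_j,i}\,A^-_{\lambda,j}=A^+_{\lambda,i}\,A^-_{\lambda+{\bf 1}_i,j},
\]
the delicate combinatorial identity encoded in (\ref{A+})--(\ref{A-}). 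For $i=j$ the two delta factors collapse via $\delta(a/z)\delta(a/w)=\delta(a/z)\delta(z/w)$ onto a common $\delta(z/w)$, and I must show that the residual sum over $i$ reconstructs exactly $\tfrac{(1-q)(1-1/t)}{1-q/t}$ times $\psi^+(w)-\psi^-(w)$. The structural input here is that $q^{1/2}t^{-1/2}B^+_\lambda(u/w)$ and $q^{-1/2}t^{1/2}B^-_\lambda(w/u)$ are the expansions at $w=\infty$ and $w=0$, respectively, of one rational function $\Psi_\lambda(w)$, whose poles sit exactly at the removable and addable box positions; its partial-fraction (residue) decomposition is what the diagonal commutator terms must reproduce.

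Finally, the quadratic relations $G^\mp(z/w)\,x^\pm(z)x^\pm(w)=G^\pm(z/w)\,x^\pm(w)x^\pm(z)$ are verified by comparing, for each target partition, the coefficient on both sides. For two boxes added in distinct rows $i\neq j$ the matching of supports reduces the relation to the cocycle-type identity
\[
G^-(\zeta)\,A^+_{\lambda,i}A^+_{\lambda+{\bf 1}_i,j}=G^+(\zeta)\,A^+_{\lambda,j}A^+_{\lambda+{\bf 1}_j,i},\qquad \zeta=q^{\lambda_j-\lambda_i}t^{i-j},
\]
provable from the product form of $A^+$, with the analogous identity handling $x^-$. The potentially singular same-row case ($z/w=q^{\pm1}$) is automatic, since $G^-(q)=G^+(q^{-1})=0$ annihilate both sides simultaneously, and the boundary cases where some $A^\pm$ vanishes (the degenerations noted after (\ref{B-})) are checked to be consistent with the corresponding zeros of $G^\mp$. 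Assembling these verifications establishes the $\calU$-action; since $B^\pm_\lambda(0)=1$ gives $\psi^+_0=q^{1/2}t^{-1/2}$ and $\psi^-_0=q^{-1/2}t^{1/2}$, whence $\gamma^{1/2}=1$ and $(\psi^+_0)^{-1}\psi^-_0=t/q$, the level is identified as $(0,1)$.
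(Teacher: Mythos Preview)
The paper does not give its own proof of this proposition; immediately after the statement it simply records ``This was obtained in \cite{FT:2009}, \cite{FFJMM1:2010}.'' Your proposal is therefore not competing with an argument in the paper but supplying one where the paper defers to the literature. The direct-verification strategy you outline---checking each defining relation of Definition~\ref{dfn:calU(q,t)} on the basis $(P_\lambda)$, using the diagonal action of $\psi^\pm$ to reduce the $\psi$--$x$ relations to ratios $B^\pm_{\lambda\pm{\bf1}_i}/B^\pm_\lambda=g(\cdots)^{\mp1}$, splitting the $[x^+,x^-]$ commutator into off-diagonal cancellations and a diagonal residue computation reproducing $\psi^+-\psi^-$, and handling the quadratic $x^\pm x^\pm$ relations via cocycle identities for $A^\pm$---is exactly the approach taken in the cited references, and your identification of the level from $B^\pm_\lambda(0)=1$ is correct.

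One small point worth tightening: in the same-row case of the quadratic relation you say $G^-(q)=G^+(q^{-1})=0$ ``annihilate both sides simultaneously.'' In fact only one ordering of the two box-additions is admissible (you cannot add the box at column $\lambda_i+2$ before the one at $\lambda_i+1$), so one side is already zero by the vanishing of the relevant $A^+$; the role of $G^\mp(q^{\pm1})=0$ is to kill the \emph{other} side. This is consistent with what you wrote but the asymmetry is worth making explicit.
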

 This was obtained in \cite{FT:2009}, \cite{FFJMM1:2010}.

\subsection{Level $(1,N)$ module $\calF_u^{(1,N)}$}
\begin{dfn}
Set 
\begin{align}
&\eta(z)=
\exp\Big( \sum_{n=1}^{\infty} \dfrac{1-t^{-n}}{n}a_{-n} z^{n} \Big)
\exp\Big(-\sum_{n=1}^{\infty} \dfrac{1-t^{n} }{n}a_n    z^{-n}\Big),\\
&\xi(z)=
\exp\Big(-\sum_{n=1}^{\infty} \dfrac{1-t^{-n}}{n}
q^{-n/2}t^{n/2}
a_{-n} z^{n}\Big)
\exp\Big( \sum_{n=1}^{\infty} \dfrac{1-t^{n}}{n} q^{-n/2}t^{n/2} a_n z^{-n}\Big),\\
&\varphi^{+}(z)=
\exp\Big(
 -\sum_{n=1}^{\infty} \dfrac{1-t^{n}}{n} (1-t^n q^{-n})
 q^{n/4}t^{-n/4} a_n z^{-n}
    \Big),
\\
&\varphi^{-}(z)=
\exp\Big(
 \sum_{n=1}^{\infty} \dfrac{1-t^{-n}}{n} (1-t^n q^{-n}) q^{n/4}t^{-n/4} a_{-n}z^{n}
    \Big).
\end{align}
\end{dfn}

\begin{prp}
Let $u$ be an indeterminate, and $N\in \bbZ$.
We can endow a left $\calU$-module structure over $\FF$ to $\calF$ by setting 
\begin{align}
&
\gamma^{1/2} P_\lambda=(t/q)^{1/4} P_\lambda,\\
&
x^+(z) P_\lambda= 
u z^{-N} q^{-N/2}t^{N/2}\eta(z) P_\lambda,\\
&
x^-(z) P_\lambda=
u^{-1} z^{N}q^{N/2}t^{-N/2} \xi(z) P_\lambda,\\
&
\psi^+(z)P_\lambda=
q^{N/2}t^{-N/2}
\varphi^+(z) P_\lambda,\\
&
\psi^-(z)P_\lambda=
q^{-N/2}t^{N/2}\varphi^-(z) P_\lambda.
\end{align}
This is a level $(1,N)$ module.
We denote this $\calU$-module by $\calF^{(1,N)}_u$.
\end{prp}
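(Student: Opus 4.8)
The plan is to check that the operators $\eta(z),\xi(z),\varphi^\pm(z)$ assembled as in the statement satisfy, as identities of formal-series-valued endomorphisms of $\calF$, all the defining relations of Definition \ref{dfn:calU(q,t)}, and then to read off the level from the zero modes. The most economical route is to isolate the $N$-dependence. Writing the assignments for general $N$ as $x^+(z)=u\,q^{-N/2}t^{N/2}z^{-N}\eta(z)$, $x^-(z)=u^{-1}q^{N/2}t^{-N/2}z^{N}\xi(z)$ and $\psi^\pm(z)=q^{\pm N/2}t^{\mp N/2}\varphi^\pm(z)$, I note that the $N=0$ operators are exactly the horizontal level $(1,0)$ representation of \cite{FHHSY:2009}, which I may take as known. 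It then suffices to prove (i) that this representation satisfies the relations at $N=0$, and (ii) that the monomial/scalar twist by the powers of $q,t,z$ preserves every relation while shifting the level to $(1,N)$.

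For step (ii) the $\psi$--$\psi$ relations, the $\psi$--$x$ relations, and the quadratic $x^\pm x^\pm$ relations are immediate: in each of these the extra monomials $z^{\mp N}$ (and $w^{\mp N}$) enter symmetrically in $z,w$ or cancel against the constant twists of $\psi^\pm$, so the rational coefficients $g$ and $G^\pm$ are untouched. The only delicate relation is the commutator $[x^+(z),x^-(w)]$. Here the two scalar twists multiply to $1$ and the monomials combine to $(z/w)^{-N}$, so the twisted commutator equals $(z/w)^{-N}$ times the $N=0$ commutator. I then use that $\delta(\gamma^{\mp 1}z/w)$ is supported on $z=\gamma^{\pm1}w$, where $(z/w)^{-N}=\gamma^{\mp N}$; since $\gamma=(t/q)^{1/2}$ these evaluate to $q^{N/2}t^{-N/2}$ and $q^{-N/2}t^{N/2}$, which are precisely the constants relating $\psi^\pm$ at level $N$ to those at level $0$. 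Thus the two delta-function terms are carried exactly onto $\psi^+(\gamma^{1/2}w)$ and $\psi^-(\gamma^{-1/2}w)$ at level $N$, and the relation survives.

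For step (i), were a self-contained argument wanted, I would run the standard free-field computation. The only input is the Heisenberg contraction $[a_n,a_{-n}]=n\frac{1-q^n}{1-t^n}$ for $n>0$, from which normal-ordering formulas follow, e.g. $\eta(z)\eta(w)=\frac{(1-w/z)(1-qt^{-1}w/z)}{(1-qw/z)(1-t^{-1}w/z)}\,:\!\eta(z)\eta(w)\!:$, together with the companions for $\eta\xi$, $\xi\xi$, $\varphi^\pm\eta$, $\varphi^\pm\xi$, $\varphi^+\varphi^-$. Each relation then collapses to a rational-function identity between the two orderings; for instance the quadratic relation reduces to $G^-(z/w)f(w/z)=G^+(z/w)f(z/w)$ for the scalar $f(x)=\frac{(1-x)(1-qt^{-1}x)}{(1-qx)(1-t^{-1}x)}$, which one verifies by clearing denominators.

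The main obstacle is, in both steps, the $[x^+,x^-]$ commutation relation. The subtlety is that $\eta(z)\xi(w)$ and $\xi(w)\eta(z)$ produce the same normal-ordered operator times rational prefactors with simple poles at $z=\gamma^{\pm1}w$; their difference must be reorganized into the two delta functions $\delta(\gamma^{-1}z/w)$ and $\delta(\gamma z/w)$, and the residues at these poles identified with $\psi^+(\gamma^{1/2}w)$ and $\psi^-(\gamma^{-1/2}w)$ together with the precise overall constant $\frac{(1-q)(1-1/t)}{1-q/t}$. Getting this prefactor and the exact arguments of $\psi^\pm$ right is the delicate point. The level is then read off at once: $\gamma^{1/2}$ acts by $(t/q)^{1/4}$, so $l_1=1$, while the $z$-independent parts of $\varphi^\pm$ are $1$, giving $\psi^+_0=q^{N/2}t^{-N/2}$ and $\psi^-_0=q^{-N/2}t^{N/2}$, hence $(\psi^+_0)^{-1}\psi^-_0=(t/q)^N$ and $l_2=N$; thus the module is level $(1,N)$.
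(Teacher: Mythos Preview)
Your proposal is correct and is precisely the approach the paper indicates: the paper gives no detailed proof but simply remarks that the module ``is obtained as an easy modification of the representation constructed in \cite{FHHSY:2009},'' and you have carried out exactly that modification, taking the $N=0$ horizontal Fock module as known and checking that the monomial twist $z^{\mp N}$ together with the scalar factors $q^{\pm N/2}t^{\mp N/2}$ preserves every defining relation. Your treatment of the $[x^+,x^-]$ commutator via $(z/w)^{-N}\delta(\gamma^{\mp1}z/w)=\gamma^{\mp N}\delta(\gamma^{\mp1}z/w)$ is the key step and is handled correctly.
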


This is obtained  as an easy modification of the representation constructed in \cite{FHHSY:2009}.


\section{Trivalent Intertwining Operators $\Phi$ and $\Phi^*$}

\subsection{Intertwining operator $\Phi$}
Let $N\in \bbZ$ and $u,v,w$ be independent indeterminates.
\begin{dfn}\label{def-phi}
Let $\Phi=\Phi\left[(1,N+1),w \atop (0,1),v\, ;\, (1,N),u \right]$ 
be the trivalent intertwining operator satisfying the conditions
\begin{align}
&
\Phi:\calF^{(0,1)}_v\otimes \calF^{(1,N)}_u\longrightarrow
\calF^{(1,N+1)}_w,\\
& a \Phi= \Phi \Delta(a) \qquad (\forall a\in \calU). \label{IR-1}
\end{align}
Introduce the components $\Phi_\lambda$ by setting 
\begin{align}
\Phi_\lambda (\alpha)=\Phi(P_\lambda \otimes \alpha )
\qquad (\forall 
P_\lambda \otimes \alpha \in \calF^{(0,1)}_v\otimes \calF^{(1,N)}_u).
\end{align}
We normalize $\Phi$ by requiring 
$\Phi_\emptyset (1)=1+\cdots $.
\end{dfn} 

\begin{lem}\label{lemma-1}
The intertwining relations (\ref{IR-1}) read
\begin{align}
&
\sum_{i=1}^{\ell(\lambda)+1}
A^+_{\lambda,i}\, \delta(q^{\lambda_i} t^{-i+1}v/z) \Phi_{\lambda+{\bf 1}_i}+
q^{-1/2}t^{1/2} B^-_\lambda(z/v) \Phi_\lambda x^+(z)=x^+(z)  \Phi_\lambda,\\
&
q^{1/2}t^{-1/2}
\sum_{i=1}^{\ell(\lambda)}
A^-_{\lambda,i}\, \delta(q^{\lambda_i-1} t^{-i+1}v/z) \Phi_{\lambda-{\bf 1}_i}
\psi^+(q^{1/4}t^{-1/4}z)+
\Phi_\lambda x^-(q^{1/2}t^{-1/2}z)=x^-(q^{1/2}t^{-1/2}z)\Phi_\lambda,\\
&
q^{1/2}t^{-1/2}
B^+_\lambda(v/z)\Phi_\lambda \psi^+(q^{1/4}t^{-1/4}z )
=\psi^+(q^{1/4}t^{-1/4}z )
\Phi_\lambda ,\\
&
q^{-1/2}t^{1/2} B^-_\lambda(z/v)\Phi_\lambda \psi^-(q^{-1/4}t^{1/4}z )
=\psi^-(q^{-1/4}t^{1/4}z )
\Phi_\lambda .
\end{align}
\end{lem}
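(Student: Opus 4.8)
The plan is to use that $\calU$ is generated by the Drinfeld currents $x^\pm(z)$, $\psi^\pm(z)$ together with the central $\gamma^{\pm 1/2}$, so that the intertwining relation $a\Phi=\Phi\Delta(a)$ for all $a\in\calU$ is equivalent to its validity on these four currents. The case $a=\gamma^{\pm 1/2}$ is automatic, since both sides act by the same scalar; this is precisely the statement that the levels add as $(0,1)+(1,N)=(1,N+1)$, matched against the target $\calF^{(1,N+1)}_w$. First I would record, from Proposition \ref{prop:Hopf-alg}, the four coproducts $\Delta(x^\pm(z))$ and $\Delta(\psi^\pm(z))$ and specialize the auxiliary central charges $\gamma_{(1)},\gamma_{(2)}$ to the two tensor factors: on $\calF^{(0,1)}_v$ one has $\gamma^{1/2}=1$, so $\gamma_{(1)}^{\pm 1/2}=1$, whereas on $\calF^{(1,N)}_u$ one has $\gamma^{1/2}=(t/q)^{1/4}$, so $\gamma_{(2)}^{1/2}=(t/q)^{1/4}$ and $\gamma_{(2)}=(t/q)^{1/2}$. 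This collapses each coproduct to a two-term (or, for $\psi^\pm$, one-term) expression with explicit scalar shifts of the spectral variable in the second slot.

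Next I would apply each specialized coproduct to a generic vector $P_\lambda\otimes\alpha$ and evaluate the first tensor factor by the explicit level-$(0,1)$ formulas for $\calF^{(0,1)}_v$. For $\Delta(x^+(z))=x^+(z)\otimes 1+\psi^-(z)\otimes x^+(z)$, the first slot produces $x^+(z)P_\lambda=\sum_i A^+_{\lambda,i}\,\delta(q^{\lambda_i}t^{-i+1}v/z)P_{\lambda+{\bf 1}_i}$ and $\psi^-(z)P_\lambda=q^{-1/2}t^{1/2}B^-_\lambda(z/v)P_\lambda$; composing with $\Phi$ and reading off the definition $\Phi_\mu(\alpha)=\Phi(P_\mu\otimes\alpha)$ turns the equation into the first stated relation, in which the outer $x^+(z)$ is the target-module action while the $x^+(z)$ riding on $\Phi_\lambda$ is the second-factor action carried over from the coproduct. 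The same procedure applied to $\Delta(x^-(z))$, $\Delta(\psi^+(z))$, $\Delta(\psi^-(z))$ yields three analogous identities after inserting the level-$(0,1)$ actions of $x^-(z),\psi^+(z),\psi^-(z)$ on $P_\lambda$, the $\gamma_{(2)}$-shifts appearing inside the arguments of $x^-$ and $\psi^+$ being exactly those produced by the nontrivial level of the second factor.

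The last step is a purely cosmetic rescaling of the spectral variable to bring each identity into the displayed form: substituting $z\mapsto q^{1/2}t^{-1/2}z$ in the $x^-$ relation converts $\psi^+(q^{-1/4}t^{1/4}z)$ into $\psi^+(q^{1/4}t^{-1/4}z)$ and the delta argument into $q^{\lambda_i-1}t^{-i+1}v/z$, while substituting $z\mapsto q^{1/4}t^{-1/4}z$ in the $\psi^+$ relation and $z\mapsto q^{-1/4}t^{1/4}z$ in the $\psi^-$ relation collapses the arguments of $B^+_\lambda$ and $B^-_\lambda$ to $v/z$ and $z/v$ respectively. I expect the only genuine obstacle to be the bookkeeping of the half-integer powers of $q$ and $t$ introduced by $\gamma_{(2)}^{1/2}=(t/q)^{1/4}$ as they pass through the coproduct: one must verify that the shift in the delta-support, the shift in the $\psi^+$-argument, and the final rescaling are mutually consistent, since any exponent slip there would misalign the supports and break the identity. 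Everything else is direct substitution.
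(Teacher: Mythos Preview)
Your proposal is correct and is exactly the computation the paper leaves implicit: the lemma is stated without proof because it is a direct unpacking of $a\Phi=\Phi\Delta(a)$ using the coproduct from Proposition~\ref{prop:Hopf-alg}, the specialization $\gamma_{(1)}^{1/2}=1$, $\gamma_{(2)}^{1/2}=(t/q)^{1/4}$, and the level-$(0,1)$ action on $P_\lambda$. Your bookkeeping of the $(q/t)^{1/4}$ shifts and the final rescalings $z\mapsto q^{1/2}t^{-1/2}z$, $z\mapsto q^{\pm 1/4}t^{\mp 1/4}z$ is right, and nothing further is needed.
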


\begin{thm}\label{thm-1}
The normalized intertwining operator $\Phi$ exists only when $w=-v u$. 
In this case, it is determined uniquely
and written 
in terms of the Heisenberg generators as
\begin{align}
&\Phi_\lambda\left[(1,N+1),-vu \atop (0,1),v\, ;\, (1,N),u \right]
=t(\lambda,u,v,N) \tPhi_\lambda(v),\\
&t(\lambda,u,v,N)=(-vu)^{|\lambda|} (-v)^{-(N+1)|\lambda|}
f_\lambda^{-N-1}.
\label{t(lam)}
\end{align}
Here we have used the notations
\begin{align}
&
\tPhi_\lambda(v)={q^{n(\lambda')}\over c_\lambda}
:\Phi_{\emptyset}(v) \eta_\lambda(v):,\\
&\tPhi_{\emptyset}(v) =
 \exp \Bigl(
 -\sum_{n=1}^{\infty} \dfrac{1}{n}\dfrac{1}{1-q^n} a_{-n}v^n 
      \Bigr)
 \exp\Bigl(
-  \sum_{n=1}^{\infty} \dfrac{1}{n}\dfrac{q^n}{1-q^{n}} a_{n}v^{-n}
     \Bigr),\\
&
\eta_\lambda(v)=\,
:\prod_{i=1}^{\ell(\lambda)}\prod_{j=1}^{\lambda_i}
\eta(q^{j-1}t^{-i+1} v):,
\end{align}
where the symbol $: \cdots:$ denotes the usual normal ordering, 
and $f_\lambda$ is Taki's framing factor (\ref{f-lam}).

\end{thm}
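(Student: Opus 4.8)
The plan is to solve the intertwining relations of Lemma~\ref{lemma-1} directly, exploiting that the level $(1,N)$ action is given by \emph{vertex operators} while the level $(0,1)$ action is combinatorial, so that the operator $\Phi_\lambda$ is forced to be a normally ordered exponential of Heisenberg modes times a scalar. First I would analyze the two $\psi^\pm$ relations, which do not change $\lambda$ and therefore pin down the ``shape'' of each $\Phi_\lambda$ as an operator. Writing $\Phi_\lambda = c_\lambda^{-1} q^{n(\lambda')} :\!\Phi_\emptyset(v)\,\eta_\lambda(v)\!:$ times an unknown scalar, I would compute the commutation of $\Phi_\lambda$ past $\psi^\pm(q^{\mp1/4}t^{\pm1/4}z)$ using the explicit $\varphi^\pm(z)$ in the level $(1,N)$ representation and the standard Heisenberg contraction formulas. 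The requirement that these contractions reproduce the scalar factors $q^{\mp1/2}t^{\pm1/2}B^\pm_\lambda(v/z)^{\mp1}$ prescribed on the left-hand side is essentially a check that the generating function built from the boxes of $\lambda$, namely $\eta_\lambda(v)=\,:\!\prod_{i,j}\eta(q^{j-1}t^{-i+1}v)\!:$, contracts with $\varphi^\pm$ to give exactly $B^\pm_\lambda$; this is a product-over-boxes identity that I would verify by telescoping the infinite products in (\ref{B+})--(\ref{B-}).

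Next I would turn to the $x^+(z)$ relation, which is the one that relates $\Phi_\lambda$ to $\Phi_{\lambda+\mathbf1_i}$ and hence fixes the recursion for the scalars $t(\lambda,u,v,N)$. The key mechanism is that moving $x^+(z)=u z^{-N}q^{-N/2}t^{N/2}\eta(z)$ through $\Phi_\lambda$ produces, via the contraction of $\eta(z)$ with $\eta_\lambda(v)$, a rational function in $z$ whose poles sit exactly at the addable-box positions $z=q^{\lambda_i}t^{-i+1}v$. At each such pole the residue must match the coefficient $A^+_{\lambda,i}$ (combined with the normalization $c_\lambda^{-1}q^{n(\lambda')}$), while the regular part must match the $B^-_\lambda(z/v)\Phi_\lambda x^+(z)$ term. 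Equating residues gives a multiplicative recursion for $t(\lambda+\mathbf1_i,\dots)/t(\lambda,\dots)$, and I would solve it by inspection: each added box contributes a factor $(-vu)$ from the spectral parameters, a factor $(-v)^{-(N+1)}$ from the $z^{-N}$ prefactor evaluated at the box position, and a ratio of framing factors, which assembles into the closed form $t(\lambda,u,v,N)=(-vu)^{|\lambda|}(-v)^{-(N+1)|\lambda|}f_\lambda^{-N-1}$. The $x^-(z)$ relation then gives a consistency check rather than new information, since $\Delta$ and the Hopf structure tie the two Drinfeld currents together.

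The constraint $w=-vu$ emerges from the overall $\gamma$-grading: the coproduct formulas in Proposition~\ref{prop:Hopf-alg} shift the spectral parameter multiplicatively by the $\gamma_{(1)},\gamma_{(2)}$ insertions, and matching the scalar in front of $\eta(z)$ on the source $\calF^{(0,1)}_v\otimes\calF^{(1,N)}_u$ against the target $\calF^{(1,N+1)}_w$ forces $w=-vu$; I would extract this by comparing the action of the central combination $(\psi_0^+)^{-1}\psi_0^-$ and the spectral prefactors on both sides. The main obstacle I anticipate is the residue computation in the $x^+$ relation: verifying that the contraction of $\eta(z)$ with the \emph{multi-box} operator $\eta_\lambda(v)$ has simple poles \emph{only} at addable positions, with vanishing residues at the non-addable positions $q^{\lambda_i}t^{-i+1}v$ where $\lambda_i=\lambda_{i-1}$ (so that $A^+_{\lambda,i}=0$), requires a careful bookkeeping of how numerator zeros of the contraction kernel $g(z)$ cancel the spurious poles. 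This cancellation is precisely what encodes the combinatorial factors $A^+_{\lambda,i}$ representation-theoretically, and establishing it cleanly—ideally by recognizing the contraction as the Macdonald-operator eigenvalue structure already built into $A^+_{\lambda,i}$ and $B^\pm_\lambda$—is the crux of both existence and uniqueness.
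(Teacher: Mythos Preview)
Your proposal follows essentially the same route as the paper's proof: use the $\psi^\pm$ relations to force $\Phi_\lambda$ to be proportional to $\tPhi_\lambda(v)$ (the paper packages the needed contraction identity as Proposition~\ref{prop-1}, resting on Lemma~\ref{phi+-eta2}), then use the $x^+$ relation to fix the scalar recursion (Proposition~\ref{prop-2}, with the residue/pole bookkeeping isolated in Lemmas~\ref{eta-etalam2}, \ref{delta}, \ref{eta-phi}), and treat $x^-$ as a consistency check (Proposition~\ref{prop-3}).

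One point needs correcting. The constraint $w=-vu$ does \emph{not} come from the central combination $(\psi_0^+)^{-1}\psi_0^-$: on $\calF^{(1,M)}_u$ that combination acts as $q^{-M}t^{M}$, independent of the spectral parameter, so it only pins down the level shift $N\mapsto N+1$ and carries no information about $u,v,w$. In the paper the constraint is read off from the $x^+$ relation itself. After rewriting the intertwining condition as (\ref{shiki-3}), the coefficient in front of $B^-_\lambda(z/v)\Phi_\lambda\eta(z)$ is $-uz/w$; comparison with Proposition~\ref{prop-2}, where the corresponding coefficient is $z/v$, forces $-u/w=1/v$, i.e.\ $w=-vu$. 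So in your write-up, extract $w=-vu$ from matching the regular (non-delta) part of the $x^+$ equation rather than from the centre.
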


A proof of this will be given in Section \ref{section-6}.

\subsection{Intertwining operator $\Phi^*$}

\begin{dfn}\label{def-phis}
Let $\Phi^*=\Phi^*\left[ (1,N),v\, ;\, (0,1),u  \atop (1,N+1),w \right]$ 
be the trivalent intertwining operator satisfying the conditions
\begin{align}
&
\Phi^*:
\calF^{(1,N+1)}_w\longrightarrow
\calF^{(1,N)}_v\otimes 
\calF^{(0,1)}_u,\\
& \Delta(a) \Phi^*= \Phi^* a \qquad (\forall a\in \calU).\label{IR-2}
\end{align}
Introduce the components $\Phi_\lambda$ by setting 
\begin{align}
\Phi^*( \alpha )=\sum_\lambda \Phi^*_\lambda (\alpha)\otimes Q_\lambda
\qquad (\forall \alpha \in \calF^{(1,N+1)}_w).
\end{align}
We normalize $\Phi^*$ by requiring 
$\Phi^*_\emptyset (1)=1+\cdots $.
\end{dfn} 

\begin{lem}\label{lemma-2}
The intertwining relations (\ref{IR-2}) read
\begin{align}
&
\Phi^*_\lambda x^+(q^{1/2}t^{-1/2}z)=x^+(q^{1/2}t^{-1/2}z)\Phi^*_\lambda
-\psi^-(q^{1/4}t^{-1/4}z)
\sum_{i=1}^{\ell(\lambda)}
q A^-_{\lambda,i}\, \delta(q^{\lambda_i-1} t^{-i+1}u/z) 
\Phi^*_{\lambda-{\bf 1}_i},\\
&
\Phi^*_\lambda x^-(z)
=q^{1/2}t^{-1/2}B^+_\lambda(u/z)x^-(z)\Phi^*_\lambda-
q^{1/2}t^{-1/2}
\sum_{i=1}^{\ell(\lambda)+1}
q^{-1}A^+_{\lambda,i}\, \delta(q^{\lambda_i} t^{-i+1}u/z) \Phi^*_{\lambda+{\bf 1}_i},\\
&
\Phi^*_\lambda  \psi^+(q^{-1/4}t^{1/4}z )
=
q^{1/2}t^{-1/2}
B^+_\lambda(u/z)\psi^+(q^{-1/4}t^{1/4}z )\Phi^*_\lambda ,\\
&
\Phi^*_\lambda  \psi^-(q^{1/4}t^{-1/4}z )=
q^{-1/2}t^{1/2} B^-_\lambda(z/u) \psi^-(q^{1/4}t^{-1/4}z )\Phi^*_\lambda .
\end{align}
\end{lem}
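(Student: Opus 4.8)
The plan is to substitute each Drinfeld generator $a\in\{x^\pm(z),\psi^\pm(z)\}$ into the abstract intertwining relation (\ref{IR-2}), expand $\Delta(a)$ by Proposition \ref{prop:Hopf-alg}, evaluate both sides on an arbitrary vector $\alpha\in\calF^{(1,N+1)}_w$ using the explicit level $(1,N)$ and level $(0,1)$ actions, and then read off the coefficient of each basis vector of the vertical factor. Writing $\Phi^*(\alpha)=\sum_\mu\Phi^*_\mu(\alpha)\otimes Q_\mu$, a single component $\Phi^*_\lambda(\alpha)$ is extracted by pairing the second tensor slot against $P_\lambda$ via the Macdonald scalar product, since $\langle P_\lambda,Q_\mu\rangle_{q,t}=\delta_{\lambda\mu}$. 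This is the exact dual of the computation behind Lemma \ref{lemma-1}, so the bookkeeping of the $\gamma$-shifts proceeds in the same way.

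First I would treat the Cartan generators $\psi^\pm(z)$, which are the easy cases. Their coproduct $\Delta(\psi^\pm(z))=\psi^\pm(\gamma_{(2)}^{\pm1/2}z)\otimes\psi^\pm(\gamma_{(1)}^{\mp1/2}z)$ acts diagonally on the vertical factor: on $\calF^{(0,1)}_u$ the central element $\gamma^{1/2}$ is the identity, so $\psi^\pm$ multiplies $Q_\lambda$ by $q^{\pm1/2}t^{\mp1/2}B^\pm_\lambda$, while the horizontal factor carries $\psi^\pm$ with argument shifted by $\gamma_{(1)}^{\mp1/2}=(t/q)^{\mp1/4}$. Projecting onto $Q_\lambda$ is therefore immediate and, after the uniform rescaling $z\mapsto q^{1/2}t^{-1/2}z$ that lines up the arguments, yields exactly the third and fourth relations of the lemma.

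The substance is in the currents $x^\pm(z)$. Take $x^+(z)$: from $\Delta(x^+(z))=x^+(z)\otimes1+\psi^-(\gamma_{(1)}^{1/2}z)\otimes x^+(\gamma_{(1)}z)$, the first summand contributes the term $x^+\Phi^*_\lambda$ on the horizontal side, while the second summand acts on the vertical factor by the box-adding operator $x^+$ of $\calF^{(0,1)}_u$. After projecting onto $Q_\lambda$, only those $\mu$ with $\mu+{\bf 1}_i=\lambda$ survive, i.e. $\mu=\lambda-{\bf 1}_i$, producing $\psi^-$ on the horizontal factor times $\Phi^*_{\lambda-{\bf 1}_i}$; the $\delta$-function support and the argument shift follow by tracking $\gamma_{(1)}$ and the same rescaling as above. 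The surviving coefficient is the matrix element $\langle P_\lambda,\,x^+(\cdot)Q_{\lambda-{\bf 1}_i}\rangle_{q,t}$ of the box-adding operator read in the $Q$-basis. The case of $x^-(z)$ is parallel, using $\Delta(x^-(z))=x^-(\gamma_{(2)}z)\otimes\psi^+(\gamma_{(2)}^{1/2}z)+1\otimes x^-(z)$, with the diagonal $\psi^+$ furnishing the $B^+_\lambda$ prefactor and the box-removing $x^-$ producing $\Phi^*_{\lambda+{\bf 1}_i}$.

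The main obstacle is to identify these $Q$-basis matrix elements with the coefficients $qA^-_{\lambda,i}$ and $q^{-1}A^+_{\lambda,i}$ appearing in the lemma. Converting with $P_\mu=\langle P_\mu,P_\mu\rangle_{q,t}\,Q_\mu$ and $\langle P_\lambda,P_\lambda\rangle_{q,t}=c'_\lambda/c_\lambda$, the coefficient of $Q_\lambda$ in $x^+Q_{\lambda-{\bf 1}_i}$ equals $A^+_{\lambda-{\bf 1}_i,i}$ times the norm ratio $(c'_\lambda/c_\lambda)(c_{\lambda-{\bf 1}_i}/c'_{\lambda-{\bf 1}_i})$, and one must show this collapses to $A^-_{\lambda,i}$ up to an explicit scalar (a monomial in $q,t$ and a sign). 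The cleanest route I would take is to establish that, up to such a scalar, $x^-$ is the adjoint of $x^+$ on the vertical module with respect to $\langle\,,\,\rangle_{q,t}$; then $\langle P_\lambda,x^+(z)Q_{\lambda-{\bf 1}_i}\rangle_{q,t}$ equals (up to that scalar) $\langle x^-(z')P_\lambda,Q_{\lambda-{\bf 1}_i}\rangle_{q,t}$, which is precisely the $P$-basis coefficient $A^-_{\lambda,i}$. Failing a ready adjointness statement, the identity can be verified directly from the product formulas (\ref{A+}), (\ref{A-}), (\ref{c-lam}); this is the one genuinely computational point, while everything else is formal manipulation of the coproduct and the delta functions.
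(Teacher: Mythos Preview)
Your proposal is correct and matches the paper's (implicit) approach: expand $\Delta(a)$ on $\Phi^*(\alpha)=\sum_\mu\Phi^*_\mu(\alpha)\otimes Q_\mu$, use that $\psi^\pm$ act diagonally on the vertical factor while $x^\pm$ shift boxes, and project onto $Q_\lambda$. The paper does not write out the proof of Lemma~\ref{lemma-2} separately, but supplies exactly the missing ingredient you flag---the $Q$-basis action of $x^\pm$ on $\calF^{(0,1)}_u$---as the first lemma of Section~\ref{section-6}: the identity $(c'_{\lambda+{\bf 1}_i}/c'_\lambda)(c_\lambda/c_{\lambda+{\bf 1}_i})A^+_{\lambda,i}=-qA^-_{\lambda+{\bf 1}_i,i}$ is proved there by the direct product-formula computation you describe as your fallback, and the $Q$-basis formulas $x^+(z)Q_\lambda=-\sum_i qA^-_{\lambda+{\bf 1}_i,i}\delta(\cdots)Q_{\lambda+{\bf 1}_i}$, $x^-(z)Q_\lambda=-q^{1/2}t^{-1/2}\sum_i q^{-1}A^+_{\lambda-{\bf 1}_i,i}\delta(\cdots)Q_{\lambda-{\bf 1}_i}$ are recorded explicitly. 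So your ``main obstacle'' is precisely the content of (\ref{ccA=}); the adjointness route you suggest is not used in the paper, but the direct verification you propose as an alternative is exactly what is done.
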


\begin{thm}\label{thm-2}
The intertwining operator $\Phi^*$ exists uniquely only when $w=-v u$. 
In this case, it is written 
in terms of the Heisenberg generators as
\begin{align}
&\Phi^*_\lambda\left[ (1,N),v\, ;\, (0,1),u  \atop (1,N+1),-vu \right]=t^*(\lambda,u,v,N) \tPhi^*_\lambda(u),\\
&t^*(\lambda,u,v,N)=
(q^{-1} v)^{-|\lambda|} (-u)^{N|\lambda|}
f_\lambda^{N}.
\label{ts(lam)}
\end{align}
Here $f_\lambda$ is given in (\ref{f-lam}), and 
\begin{align}
&
 \tPhi^*_\lambda(u)=
 {q^{n(\lambda')}\over c_\lambda}
:\tPhis_{\emptyset}(u) \xi_\lambda(u):,\\
     &
\tPhis_{\emptyset}(u) =
 \exp \Bigl(
\sum_{n=1}^{\infty} \dfrac{1}{n}\dfrac{1}{1-q^n} q^{-n/2}t^{n/2}a_{-n}u^n 
      \Bigr)
 \exp\Bigl(
  \sum_{n=1}^{\infty} \dfrac{1}{n}\dfrac{q^n}{1-q^{n}} q^{-n/2}t^{n/2}a_{n}u^{-n}
     \Bigr),\\
     &\xi_\lambda(u)=\,
:\prod_{i=1}^{\ell(\lambda)}\prod_{j=1}^{\lambda_i}
\xi(q^{j-1}t^{-i+1} u):.
\end{align} 
\end{thm}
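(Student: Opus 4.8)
The plan is to mirror the proof of Theorem \ref{thm-1}, exploiting the fact that both the source $\calF^{(1,N+1)}_{-vu}$ and the horizontal factor of the target are level-$(1,N)$ Fock modules on which $\calU$ acts through vertex operators built from the single Heisenberg algebra $\calH$. Since $\Phi^*$ lands in $\calF^{(1,N)}_v\otimes\calF^{(0,1)}_u$ with the vertical factor carrying the partition label through $Q_\lambda$, each component $\Phi^*_\lambda(u):\calF^{(1,N+1)}_{-vu}\to\calF^{(1,N)}_v$ is simply an operator on the bosonic Fock space, and the four relations of Lemma \ref{lemma-2} become explicit operator identities among power series in $z$. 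First I would substitute the level-$(1,N)$ formulas for $x^\pm(z),\psi^\pm(z)$ on both sides, so that everything is expressed through $\eta,\xi,\varphi^\pm$ and the scalars $q^{N/2}t^{-N/2}$, $u^{\pm1}$.

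The diagonal relations, namely the last two identities of Lemma \ref{lemma-2} involving $\psi^\pm$, are the easiest and come first. Because $\psi^\pm(z)$ act through $\varphi^\pm(z)$, these reduce to commutation (contraction) identities in which moving $\varphi^\pm$ past $\Phi^*_\lambda$ produces a scalar rational function of $z$; requiring that this function equal $q^{\pm1/2}t^{\mp1/2}B^\pm_\lambda(\cdot)$ pins down the operator content of $\Phi^*_\lambda$. Concretely, the contraction of a single $\xi(x)$ with $\varphi^\pm$ is a simple rational factor, and I would show that the normal-ordered product $:\prod_{i}\prod_{j}\xi(q^{j-1}t^{-i+1}u):$ over the boxes $\square=(i,j)\in\lambda$ reproduces $B^\pm_\lambda$ exactly, because the product telescopes along rows and columns just as $c_\lambda$ and $n(\lambda')$ are assembled from arm and leg lengths. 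This forces $\Phi^*_\lambda(u)=(\text{scalar})\cdot{:}\tPhis_{\emptyset}(u)\xi_\lambda(u){:}$ and leaves only an overall $\lambda$-dependent constant undetermined.

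Next I would turn to the off-diagonal relations, the first two identities of Lemma \ref{lemma-2}, which link $\Phi^*_\lambda$ to $\Phi^*_{\lambda\pm{\bf 1}_i}$. Here $x^\pm(z)$ acts through $\eta(z)$ or $\xi(z)$, and the factors $\delta(q^{\lambda_i-1}t^{-i+1}u/z)$ and $\delta(q^{\lambda_i}t^{-i+1}u/z)$ pick out residues at the box positions. Matching the residue of each side at $z=q^{\lambda_i}t^{-i+1}u$ (and at the dual positions) converts the operator identity into a scalar recursion among the undetermined constants, with coefficients $A^\pm_{\lambda,i}$; solving this recursion from the normalization $\Phi^*_\emptyset(1)=1+\cdots$ yields the factor $q^{n(\lambda')}/c_\lambda$ together with the prefactor $t^*(\lambda,u,v,N)=(q^{-1}v)^{-|\lambda|}(-u)^{N|\lambda|}f_\lambda^{N}$. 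The restriction $w=-vu$ drops out of the zero-mode bookkeeping: comparing the overall powers of $z$ and the $u,v,w$-dependent scalars on the two sides of the $x^\pm$ relations forces $w/(vu)=-1$, and every other choice of $w$ makes the recursion inconsistent, giving both existence and uniqueness.

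The main obstacle is this off-diagonal step: verifying that the explicit vertex-operator ansatz satisfies the $x^\pm$-recursions requires the delicate combinatorial identities relating the box-by-box contractions of $\xi_\lambda(u)$ with $\eta(z)$ and $\xi(z)$ to the Macdonald data $A^\pm_{\lambda,i}$, $c_\lambda$, and $n(\lambda')$, essentially the same arm/leg cancellations that underlie Theorem \ref{thm-1}, now carried through with the extra $q^{-n/2}t^{n/2}$ twist that distinguishes $\xi$ from $\eta$. I expect the cleanest route is to observe that $\Phi^*$ is the formal adjoint of $\Phi$ with respect to the Macdonald pairing: the relation $\Delta(a)\Phi^*=\Phi^* a$ is dual to $a\Phi=\Phi\Delta(a)$ once the coproduct is transposed, so most contraction identities can be imported from Theorem \ref{thm-1} rather than recomputed, leaving only the twist factors and the prefactor $t^*$ to be checked directly.
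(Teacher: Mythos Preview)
Your proposal is correct and follows essentially the same route as the paper: rewrite the four relations of Lemma~\ref{lemma-2} in terms of $\eta,\xi,\varphi^\pm$, use the two $\psi^\pm$ relations (via Proposition~\ref{prop-1s}) to force $\Phi^*_\lambda$ to be a scalar multiple of $\tPhis_\lambda(u)$, and then use the two $x^\pm$ relations (via Propositions~\ref{prop-2s} and~\ref{prop-3s}) to determine both the constraint $w=-vu$ and the prefactor $t^*(\lambda,u,v,N)$.

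The only place you diverge is your suggested shortcut of treating $\Phi^*$ as the formal adjoint of $\Phi$ to import the contraction identities. The paper does not do this: it directly establishes the parallel OPE lemmas for $\tPhis_\lambda$ (Lemmas~\ref{Phis0}, \ref{xi-phis}, \ref{eta-phis}) and the corresponding delta-function identities (Propositions~\ref{prop-2s}, \ref{prop-3s}), reusing the shared combinatorial Lemmas~\ref{phi+-eta2}, \ref{eta-etalam2}, \ref{delta}, \ref{delta-2} already proved for Theorem~\ref{thm-1}. Your adjointness idea is plausible in spirit, but note that the coproduct $\Delta$ is not symmetric and the two Fock factors in $\Phi$ and $\Phi^*$ sit in opposite tensor slots, so making the duality precise would itself require tracking the $q^{-n/2}t^{n/2}$ twist and the swap $P_\lambda\leftrightarrow Q_\lambda$; the paper's direct recomputation sidesteps this and costs little since the hard combinatorics is already in the shared lemmas.
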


In Section \ref{section-6}, we give a proof of this.

\section{Identification with refined topological vertex}

\subsection{Notations}
Let $s_\lambda(x)\in \Lambda_\bbZ$ be the Schur function, and $c_{\lambda\mu}^\nu$ 
be the Littlewood-Richardson coefficient determined by 
$s_\lambda s_\mu =\sum_\nu c_{\lambda\mu}^\nu s_\nu$.
The skew Schur function is defined by $s_{\lambda/\mu}=\sum_\nu c^\lambda_{\mu\nu} s_\nu$,
and we have  $s_\lambda(x,y)=\sum_\mu s_{\lambda/\mu}(x) s_\mu(y)$
 \cite[Chap. I. (5.9)]{Ma:1995}.
 Let $S_\lambda(x;q,t)\in \Lambda_\bbF$ be the dual of $s_\lambda$ with respect to the 
scalar product (\ref{scalar}), namely  $\langle S_\lambda(q,t),s_\mu\rangle_{q,t}=\delta_{\lambda,\mu} $.
Set $S_{\lambda/\mu}(x;q,t)=\sum_\nu c^\lambda_{\mu\nu} S_\nu(x;q,t)$.
We have $S_\lambda(x,y;q,t)=\sum_\mu S_{\lambda/\mu}(x;q,t) S_\mu(y;q,t)$.

Recall the $\bbF$-algebra endomorphism $\omega_{u,v}$ of Macdonald  \cite[Chap. VI. (2.14)]{Ma:1995}.
\begin{align}
\omega_{u,v} (p_n)=-(-1)^n {1-u^n\over 1-v^n} p_n.
\end{align}
It is convenient to have  two operations $\iota$ and $\varepsilon^\pm_\lambda$ acting on $\Lambda_\bbF$ 
introduced in
\cite{AK:2009}.
The $\iota$ is defined to be the involution  on $\Lambda_\bbF$ given by
\begin{align}
\iota: \Lambda_\bbF\rightarrow \Lambda_\bbF,
\qquad \iota(p_n)=-p_n\quad (n\in \bbZ_{>0}).
\end{align}
The $\varepsilon^\pm_\lambda=\varepsilon^\pm_{\lambda,q,t}$
is defined to be the algebra homomorphism 
\begin{align}
\varepsilon^\pm_\lambda:\Lambda_\bbF\rightarrow\bbF,
\qquad 
\varepsilon^\pm_\lambda (p_n)=
\sum_{i=1}^\infty (q^{\pm \lambda_i n}-1)t^{\mp (i-1/2)n}+{t^{\mp n/2}\over 1-t^{\mp n}}, \label{epsilon-pm}
\end{align}
For any symmetric functions,
we shall use the shorthand notations such as
\begin{align}
&
\varepsilon^\pm_\lambda (s_\mu)=s_\mu(q^{\pm \lambda}t^{\pm \rho}),
\qquad
\varepsilon^\pm_\lambda (\iota s_\mu)=\iota s_\mu(q^{\pm \lambda}t^{\pm \rho}),
\end{align}
since we may have the interpretation $\rho=(-1/2,-3/2,-5/2,\ldots)$ in mind.

We have
\begin{align}
&\iota s_\lambda(x)=s_{\lambda'}(-x)=(-1)^{|\lambda|} s_{\lambda'}(x),\\
&
S_\lambda(x;q,t)=
\iota \,\omega_{t,q} s_\lambda(-x),
\end{align}
and
\begin{align}
&
\varepsilon^+_{\lambda,q,t} (p_n(x))=
\varepsilon^-_{\lambda',t,q}\, \omega_{q,t}(p_n(-q^{-1/2}t^{1/2}x) ).
\end{align}
Hence
\begin{align}
&
\varepsilon^+_{\lambda,q,t} 
\iota S_\mu(x;q,t)= (q^{1/2}t^{-1/2})^{-|\mu|} 
\varepsilon^-_{\lambda',t,q}\, 
s_\mu(x). \label{S(q,t)}
\end{align}
In the shorthand notation this is written as 
$\iota S_\mu(q^\lambda t^\rho ;q,t)= (q^{1/2}t^{-1/2})^{-|\mu|} 
s_\mu(t^{-\lambda'} q^{-\rho})
$.

\subsection{Matrix elements of $\Phi$ and $\Phi^*$}
We have simple but important formulas 
which essentially control the property of our intertwining operators.
\begin{prp}\label{evalu}
We have
\begin{align}
&
:\Phi_\emptyset (q^{1/2}v) \eta_\lambda(q^{1/2}v):\label{Phi-1}\\
&=
\exp\left(\sum_{n=1}^\infty {1\over n}{1-t^n\over 1-q^n} a_{-n} (q^{1/2 }t^{-1/2})^n v^n  
\varepsilon^+_\lambda(p_n) \right)
\exp\left(-\sum_{n=1}^\infty {1\over n}{1-t^n\over 1-q^n} a_{n} (q^{1/2 }t^{-1/2})^n v^{-n}  
\varepsilon^-_\lambda(p_n) \right),\nonumber\\
&
:\Phi^*_\emptyset (q^{1/2}u) \xi_\lambda(q^{1/2}u):\label{Phis-1}\\
&=
\exp\left(-\sum_{n=1}^\infty {1\over n}{1-t^n\over 1-q^n} a_{-n}u^n  
\varepsilon^+_\lambda(p_n) \right)
\exp\left(\sum_{n=1}^\infty {1\over n}{1-t^n\over 1-q^n} a_{n}  u^{-n}  
\varepsilon^-_\lambda(p_n) \right).\nonumber
\end{align}
\end{prp}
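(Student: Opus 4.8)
The plan is to compute both normal-ordered products directly, exploiting the fact that every factor involved is already presented as a product of an exponential in the creation modes $a_{-n}$ and an exponential in the annihilation modes $a_n$. Since the $a_{-n}$ mutually commute and so do the $a_n$, normal ordering $:\tPhi_\emptyset(q^{1/2}v)\,\eta_\lambda(q^{1/2}v):$ (using the explicit vertex-operator expressions for $\tPhi_\emptyset$ and $\eta_\lambda$ recorded in Theorem \ref{thm-1}) amounts to nothing more than adding up, mode by mode, the coefficient of $a_{-n}$ coming from $\tPhi_\emptyset$ and from each factor $\eta(q^{j-1}t^{-i+1}\cdot q^{1/2}v)$ with $1\le i\le \ell(\lambda)$, $1\le j\le\lambda_i$, and likewise the coefficients of $a_n$. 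Thus the entire statement reduces to two scalar identities, one on the creation side and one on the annihilation side.

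First I would treat the creation modes. Collecting the coefficient of $a_{-n}$ in the exponent, the $\eta_\lambda$ factors contribute $\frac{1-t^{-n}}{n}q^{n/2}v^n\sum_{i=1}^{\ell(\lambda)}t^{(-i+1)n}\sum_{j=1}^{\lambda_i}q^{(j-1)n}$, while $\tPhi_\emptyset(q^{1/2}v)$ contributes $-\frac{1}{n}\frac{1}{1-q^n}q^{n/2}v^n$. Performing the inner geometric sum $\sum_{j=1}^{\lambda_i}q^{(j-1)n}=(q^{\lambda_i n}-1)/(q^n-1)$ and factoring out $\frac{q^{n/2}v^n}{n(1-q^n)}$, the bracketed remainder becomes $-1+(1-t^n)\sum_{i\ge 1}(q^{\lambda_i n}-1)t^{-in}$, where the sum may be extended to all $i$ since $\lambda_i=0$ for $i>\ell(\lambda)$. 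Comparing with the explicit formula (\ref{epsilon-pm}) for $\varepsilon^+_\lambda(p_n)$, and using the elementary identity $(1-t^n)\frac{t^{-n}}{1-t^{-n}}=-1$, one sees that this remainder equals exactly $(1-t^n)t^{-n/2}\varepsilon^+_\lambda(p_n)$; restoring the prefactor reproduces the coefficient $\frac{1}{n}\frac{1-t^n}{1-q^n}(q^{1/2}t^{-1/2})^nv^n\varepsilon^+_\lambda(p_n)$ on the right-hand side of (\ref{Phi-1}). The annihilation side is handled by the mirror-image computation, in which the geometric sum over $q^{-(j-1)n}$ and the $t^{(i-1)n}$ weights combine with the constant term of (\ref{epsilon-pm}) taken with the lower signs to produce $\varepsilon^-_\lambda(p_n)$.

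Finally, the second identity (\ref{Phis-1}) is obtained by repeating the same two-step calculation with $\eta$ and $\tPhi_\emptyset$ replaced by $\xi$ and $\tPhis_\emptyset$; the only changes are the overall factors $q^{-n/2}t^{n/2}$ built into $\xi$ and $\tPhis_\emptyset$ and the accompanying signs, which conspire to give the shifted prefactors and the sign pattern displayed. I expect the only delicate point to be the bookkeeping of the half-integer powers of $q$ and $t$ together with the signs: one must verify that the constant term $\frac{t^{\mp n/2}}{1-t^{\mp n}}$ in the definition of $\varepsilon^\pm_\lambda(p_n)$ is precisely what absorbs the $\tPhi_\emptyset$ (resp. $\tPhis_\emptyset$) contribution, so that the ``vacuum'' vertex operator and the ``box'' insertions coming from $\eta_\lambda$ (resp. $\xi_\lambda$) fuse into a single clean evaluation map. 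Once this matching is checked for one sign, the remaining three follow by the same mechanism.
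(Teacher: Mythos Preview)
Your proposal is correct and is exactly the direct computation the paper has in mind: in the paper Proposition~\ref{evalu} is stated without proof, as a ``simple but important'' identity to be checked by collecting the coefficients of $a_{\pm n}$ in the exponents. Your bookkeeping of the geometric sum over the boxes of $\lambda$ and the matching of the $\tPhi_\emptyset$ (resp.\ $\tPhis_\emptyset$) contribution with the constant term $\tfrac{t^{\mp n/2}}{1-t^{\mp n}}$ of $\varepsilon^\pm_\lambda(p_n)$ via $(1-t^n)\tfrac{t^{-n}}{1-t^{-n}}=-1$ is precisely the verification required.
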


\begin{cor}\label{P-Phi-Q}
We have
\begin{align}
&\bra{S_\nu(q,t)} :\tPhi_\emptyset (q^{1/2}v) \eta_\lambda(q^{1/2}v):\ket{s_\mu}\label{sk-1-s}\\
&=
v^{|\nu|-|\mu|}(q^{1/2}t^{-1/2})^{|\nu|+|\mu|}
\sum_\sigma S_{\nu/\sigma}(q^\lambda t^\rho ;q,t)
 \iota s_{\mu/\sigma}(q^{-\lambda}t^{-\rho}) 
(q^{1/2}t^{-1/2})^{-2|\sigma|},\nonumber\\
&\bra{S_\nu(q,t)} :\tPhis_\emptyset (q^{1/2}u) \xi_\lambda(q^{1/2}u):\ket{s_\mu}\label{sk-2-s}\\
&=
u^{|\nu|-|\mu|}
\sum_\sigma \iota S_{\nu/\sigma}(q^\lambda t^\rho ;q,t)
 s_{\mu/\sigma}(q^{-\lambda}t^{-\rho}) 
,\nonumber
 \end{align}
 and
 \begin{align}
&\bra{P_\nu} :\tPhi_\emptyset (q^{1/2}v) \eta_\lambda(q^{1/2}v):\ket{P_\mu}\label{sk-1}\\
&=
v^{|\nu|-|\mu|}(q^{1/2}t^{-1/2})^{|\nu|+|\mu|}
\sum_\sigma P_{\nu/\sigma}(q^\lambda t^\rho )
 \iota P_{\mu/\sigma}(q^{-\lambda}t^{-\rho}) 
 \langle P_\sigma,P_\sigma\rangle_{q,t}(q^{1/2}t^{-1/2})^{-2|\sigma|},\nonumber\\
&\bra{P_\nu} :\tPhis_\emptyset (q^{1/2}u) \xi_\lambda(q^{1/2}u):\ket{P_\mu}\label{sk-2}\\
&=
u^{|\nu|-|\mu|}
\sum_\sigma \iota P_{\nu/\sigma}(q^\lambda t^\rho )
 P_{\mu/\sigma}(q^{-\lambda}t^{-\rho}) 
 \langle P_\sigma,P_\sigma\rangle_{q,t}.\nonumber
\end{align}
\end{cor}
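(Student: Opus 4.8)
The plan is to deduce Corollary \ref{P-Phi-Q} from Proposition \ref{evalu} by a standard bosonic computation whose only nontrivial ingredient is a skew-Cauchy collapse of a double sum into a single one. By Proposition \ref{evalu} (note that $\tPhi_\emptyset=\Phi_\emptyset$ and $\tPhis_\emptyset=\Phi^*_\emptyset$, since $\eta_\emptyset=\xi_\emptyset=1$), each normal-ordered product in the matrix elements is an explicit vertex operator $V=V_+V_-$, where $V_+=\exp\bigl(\sum_{n\ge1}\tfrac1n\tfrac{1-t^n}{1-q^n}a_{-n}A_n\bigr)$ is a creation part and $V_-=\exp\bigl(-\sum_{n\ge1}\tfrac1n\tfrac{1-t^n}{1-q^n}a_nB_n\bigr)$ an annihilation part (with the two signs swapped for the $\tPhis,\xi$ case), and $A_n,B_n$ are the scalars read off from \eqref{Phi-1}, \eqref{Phis-1}. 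Since $\varepsilon^\pm_\lambda(p_n)=p_n(q^{\pm\lambda}t^{\pm\rho})$, I would absorb all prefactors into two specialization alphabets: a creation alphabet $X$ with $p_n(X)=A_n$ and an annihilation alphabet $Y$ with $p_n(Y)=B_n$, each being $q^{\pm\lambda}t^{\pm\rho}$ rescaled by the appropriate monomial in $q^{1/2}t^{-1/2}$ and $v$ (or $u$). The minus sign in one of the two exponents is recorded plethystically as passage to the signed alphabet, i.e.\ as the involution $\iota$.

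Next I would compute $\bra{S_\nu(q,t)}V\ket{s_\mu}=\langle S_\nu(q,t),Vs_\mu\rangle_{q,t}$, using the identification of the Fock pairing with the Macdonald scalar product recalled before this subsection. Under $a_{-n}\leftrightarrow p_n$ (multiplication) and $a_n\leftrightarrow p_n^\perp$ (its Macdonald adjoint $n\tfrac{1-q^n}{1-t^n}\partial_{p_n}$), the operator $V_+$ is multiplication by the reproducing kernel $\Pi(x;X)=\exp\bigl(\sum_n\tfrac1n\tfrac{1-t^n}{1-q^n}p_n(x)p_n(X)\bigr)$, while $V_-$ is the plethystic shift $g(x)\mapsto g(x\cup(-Y))$. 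Moving $V_+$ onto the bra as its adjoint skewing operator $S_\nu(x)\mapsto S_\nu(x\cup X)$ and expanding through the coproduct identities $S_\nu(x\cup X)=\sum_\alpha S_{\nu/\alpha}(x)S_\alpha(X)$ and $s_\mu(x\cup(-Y))=\sum_\beta s_{\mu/\beta}(x)s_\beta(-Y)$ of Section 4.1, the matrix element becomes the double sum $\sum_{\alpha,\beta}S_\alpha(X)\,s_\beta(-Y)\,\langle S_{\nu/\alpha},s_{\mu/\beta}\rangle_{q,t}$.

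The heart of the argument is the collapse of this double sum. From $\langle S_a,s_b\rangle_{q,t}=\delta_{a,b}$ one gets $\langle S_{\nu/\alpha},s_{\mu/\beta}\rangle_{q,t}=\sum_\sigma c^\nu_{\alpha\sigma}c^\mu_{\beta\sigma}$, and the symmetry of the Littlewood-Richardson coefficients together with $\sum_\alpha c^\nu_{\alpha\sigma}S_\alpha(X)=S_{\nu/\sigma}(X)$ and $\sum_\beta c^\mu_{\beta\sigma}s_\beta(-Y)=s_{\mu/\sigma}(-Y)$ turns it into the single sum $\sum_\sigma S_{\nu/\sigma}(X)s_{\mu/\sigma}(-Y)$, which is precisely a skew-Cauchy identity for the dual pair $(S_\bullet,s_\bullet)$. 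I would then extract the scalar prefactors by homogeneity: $S_{\nu/\sigma}$ has degree $|\nu|-|\sigma|$ and $\iota s_{\mu/\sigma}$ degree $|\mu|-|\sigma|$, so with $X=q^{1/2}t^{-1/2}v\,q^\lambda t^\rho$ and $Y=q^{1/2}t^{-1/2}v^{-1}q^{-\lambda}t^{-\rho}$ the $\sigma$-dependence in $v$ cancels and one is left with $v^{|\nu|-|\mu|}(q^{1/2}t^{-1/2})^{|\nu|+|\mu|}(q^{1/2}t^{-1/2})^{-2|\sigma|}$, together with $s_{\mu/\sigma}(-Y)=\iota s_{\mu/\sigma}(q^{-\lambda}t^{-\rho})$ up to the same rescaling; this reproduces \eqref{sk-1-s}. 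Formula \eqref{sk-2-s} follows verbatim, the swapped sign now placing $\iota$ on the creation factor $S_{\nu/\sigma}$ and the absence of the $q^{1/2}t^{-1/2}$-rescaling in \eqref{Phis-1} removing the corresponding prefactors. For \eqref{sk-1}, \eqref{sk-2} I would run the identical computation in the Macdonald basis; the only change is that $\langle P_a,P_b\rangle_{q,t}$ is not $\delta_{a,b}$, so the analogous collapse produces $\sum_\sigma \langle P_\sigma,P_\sigma\rangle_{q,t}\,P_{\nu/\sigma}(X)P_{\mu/\sigma}(-Y)$, which is exactly the source of the extra norm factor.

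I expect the genuinely delicate point to be the sign and degree bookkeeping rather than any conceptual step: one must track which exponent carries the minus sign, hence on which skew factor the involution $\iota$ lands, and at the same time verify that every $|\sigma|$-dependent power of $v$ (or $u$) cancels so that only $v^{|\nu|-|\mu|}$ (resp.\ $u^{|\nu|-|\mu|}$) survives. The skew-Cauchy collapse and the reproducing-kernel manipulations are routine once the coproduct and duality identities of Section 4.1 are invoked.
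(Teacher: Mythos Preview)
Your proposal is correct and follows essentially the same approach as the paper. The paper's proof is a two-line sketch: it conjugates the Heisenberg modes $a_{\pm n}$ through the vertex operator of Proposition~\ref{evalu} to obtain the shifted modes $a_{\nu_i}+(\text{scalar})$ and $a_{-\mu_j}+(\text{scalar})$, then invokes ``the property of the skew functions''; your reproducing-kernel / plethystic-shift formulation and the Littlewood--Richardson collapse $\langle S_{\nu/\alpha},s_{\mu/\beta}\rangle_{q,t}=\sum_\sigma c^\nu_{\alpha\sigma}c^\mu_{\beta\sigma}$ are exactly the detailed content of that invocation, just phrased in symmetric-function language rather than in terms of Heisenberg conjugation.
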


\begin{proof}
{}From Proposition \ref{evalu}
we have
\begin{align*}
&\bra{0} \prod_{i} a_{\nu_i}
 :\tPhi_\emptyset (q^{1/2}v) \eta_\lambda(q^{1/2}v):
 \prod_{j} a_{-\mu_j}\ket{0}\\
&\qquad =
\bra{0}\prod_{i}
(a_{\nu_i}+(q^{1/2}t^{-1/2}v)^{\nu_i}\varepsilon^+_\lambda(p_{\nu_i}))
\cdot 
\prod_{j}
(a_{-\mu_j}- (q^{1/2}t^{-1/2}v^{-1})^{\mu_j}\varepsilon^-_\lambda(p_{\mu_j}))
 \ket{0},\nonumber \\
&
\bra{0} \prod_{i} a_{\nu_i}
 :\tPhis_\emptyset (q^{1/2}u) \xi_\lambda(q^{1/2}u): 
  \prod_{j} a_{-\mu_j}\ket{0}
\\
&\qquad =
\bra{0} \prod_{i} 
(a_{\nu_i}-u^{\nu_i} \varepsilon^+_\lambda(p_{\nu_i}))\cdot
 \prod_{j} 
(a_{-\mu_j}+u^{-\mu_j}\varepsilon^-_\lambda(p_{\mu_j})) \ket{0}.
\end{align*}
Then (\ref{sk-1-s}), (\ref{sk-2-s}), (\ref{sk-1}) and (\ref{sk-2}) follow from 
the property of the skew functions.
\end{proof}

\subsection{Topological vertex of Iqbal-Kozcaz-Vafa}

\begin{dfn}[Iqbal-Kozcaz-Vafa]
The refined topological vertex is defined by
\begin{align}
C^{{\rm (IKV)}}_{\lambda\mu\nu}(t,q)=
\left(q\over t\right)^{||\mu||^2\over 2} t^{\kappa(\mu)\over 2} q^{||\nu||^2\over 2}
{1\over c_\lambda}
\sum_\eta 
\left(q\over t\right)^{|\eta|+|\lambda|-|\mu|\over 2} 
s_{\lambda'/\eta}(t^{-\rho}q^{-\nu})s_{\mu/\eta}(t^{-\nu'}q^{-\rho}),
\end{align}
where $c_\lambda$ is defined in (\ref{c-lam}), $||\lambda||^2=\sum_i\lambda_i^2$,
$\kappa(\lambda)=\sum_i \lambda_i(\lambda_i+1-2 i)$.
\end{dfn}

\begin{prp}\label{mat-el-IKV}
The matrix elements of the intertwining operators $\Phi$ and $\Phi^*$
are written in terms of the refined topological vertex as
\begin{align}
&{1\over \langle P_\lambda,P_\lambda\rangle_{q,t}}
\bra{S_\mu(q,t)}\Phi_\lambda\left[  (1,N+1),-vu \atop (0,1),v; (1,N) ,u\right] 
\ket{s_\nu}\\
&\qquad
=\left( q^{-1/2}u\over (-v)^{N} \right)^{|\lambda|}
f_\lambda^{-N}\cdot \,\,
(-q^{-1/2}v)^{-|\nu|} f_\nu \cdot 
(t^{-1/2}v)^{|\mu|} \cdot 
 (-1)^{|\mu|+|\nu|+|\lambda|}
\,\,C^{\rm(IKV)}_{\mu\nu' \lambda'}(q,t),\nonumber\\
&
\bra{S_\nu(q,t)}\Phi^*_\lambda\left[  (1,N) ,v;(0,1),u \atop (1,N+1),-vu\right] 
\ket{s_\mu}\\
&\qquad
=\left( (-u)^N\over q^{-1/2} v \right)^{|\lambda|}
f_\lambda^N \cdot \,\,
(-q^{-1/2}u)^{|\nu|} f_\nu^{-1} \cdot 
(t^{-1/2}u)^{-|\mu|} \cdot 
 \,C^{\rm (IKV)}_{\mu'\nu \lambda}(t,q).\nonumber
\end{align}
\end{prp}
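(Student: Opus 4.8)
The plan is to compute the two matrix elements directly by reducing them, via Theorems \ref{thm-1} and \ref{thm-2}, to the evaluation formulas already established in Corollary \ref{P-Phi-Q}, and then to match the resulting skew-function sums against the definition of $C^{\rm (IKV)}_{\lambda\mu\nu}(t,q)$. First I would substitute the explicit form of the normalized components. For the first identity, Theorem \ref{thm-1} gives $\Phi_\lambda = t(\lambda,u,v,N)\,\tPhi_\lambda(v)$ with $\tPhi_\lambda(v)=(q^{n(\lambda')}/c_\lambda):\tPhi_\emptyset(v)\eta_\lambda(v):$, while Theorem \ref{thm-2} gives $\Phi^*_\lambda = t^*(\lambda,u,v,N)\,\tPhi^*_\lambda(u)$ for the second. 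The scalar prefactors $t(\lambda,u,v,N)=(-vu)^{|\lambda|}(-v)^{-(N+1)|\lambda|}f_\lambda^{-N-1}$ and $t^*(\lambda,u,v,N)=(q^{-1}v)^{-|\lambda|}(-u)^{N|\lambda|}f_\lambda^{N}$ should account for precisely the $u,v,f_\lambda,q^{-1/2}$ powers appearing in front of the vertex on the right-hand sides, so the first bookkeeping step is to isolate these and check the exponents of $(-v)^{N}$, $q^{-1/2}u$, and $f_\lambda$ match after rescaling the argument from $v$ to $q^{1/2}v$ (respectively $u$ to $q^{1/2}u$) as demanded by Corollary \ref{P-Phi-Q}.

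Next I would invoke Corollary \ref{P-Phi-Q} in its Schur-dual form \eqref{sk-1-s}, \eqref{sk-2-s}. These give the normal-ordered matrix elements $\bra{S_\nu(q,t)}:\tPhi_\emptyset(q^{1/2}v)\eta_\lambda(q^{1/2}v):\ket{s_\mu}$ and its $\Phi^*$ analogue as explicit sums over $\sigma$ of products of skew functions $S_{\nu/\sigma}(q^\lambda t^\rho;q,t)$ and $\iota s_{\mu/\sigma}(q^{-\lambda}t^{-\rho})$ (and their transposed-variable counterparts). To convert these into the shape of $C^{\rm (IKV)}$, I would apply the identities compiled in Section 4.1: the duality $S_\lambda(x;q,t)=\iota\,\omega_{t,q}s_\lambda(-x)$ together with \eqref{S(q,t)}, namely $\iota S_\mu(q^\lambda t^\rho;q,t)=(q^{1/2}t^{-1/2})^{-|\mu|}s_\mu(t^{-\lambda'}q^{-\rho})$, to replace the dual Schur skew functions by ordinary skew Schur functions evaluated at the specialized arguments $t^{-\rho}q^{-\nu}$ and $t^{-\nu'}q^{-\rho}$. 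At this stage the summand should read $s_{\mu/\sigma}(t^{-\rho}q^{-\nu})s_{\nu'/\sigma'}(t^{-\lambda'}q^{-\rho})$ up to a monomial in $q,t$, matching the $\eta$-sum in the IKV definition after the substitution $\eta=\sigma$ (or $\sigma'$) and the index permutation $(\lambda,\mu,\nu)\mapsto(\mu,\nu',\lambda')$ dictated by the right-hand side.

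The bulk of the argument is therefore a careful matching of prefactors, and this is where I expect the main obstacle. There are several sources of power-of-$(q,t)$ discrepancy that must be reconciled simultaneously: the $(q^{1/2}t^{-1/2})$ factors in \eqref{sk-1-s}, the $\widetilde Z_\nu(t,q)=t^{-||\nu'||^2/2}P_\nu(t^{-\rho};q,t)$-type normalization hidden inside $c_\lambda^{-1}$ and the $q^{n(\lambda')}$ arising from $\tPhi_\lambda$, the $(q/t)^{\|\mu\|^2/2}t^{\kappa(\mu)/2}q^{\|\nu\|^2/2}$ collected in the IKV vertex, and the various $\iota$- and transposition-induced signs $(-1)^{|\mu|+|\nu|+|\lambda|}$. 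The delicate point is that the transpositions $\lambda\mapsto\lambda'$, $\nu\mapsto\nu'$ force the use of the conjugation symmetries of $\kappa$, $n(\lambda)$, and $\|\lambda\|^2$ — in particular $\kappa(\lambda)=\|\lambda\|^2-\|\lambda'\|^2$ and $n(\lambda')=\sum_i\binom{\lambda_i}{2}$ — and these must be threaded through consistently so that the framing factors $f_\lambda=(-1)^{|\lambda|}q^{n(\lambda')+|\lambda|/2}t^{-n(\lambda)-|\lambda|/2}$ emerge with exactly the exponents claimed. I would organize the verification by first fixing $\sigma=\eta$ and confirming the $\sigma$-independent prefactor, then separately checking the $\sigma$-dependent weights $(q^{1/2}t^{-1/2})^{-2|\sigma|}$ against the $(q/t)^{|\eta|/2}$ in the IKV sum; the empty-partition case $\lambda=\mu=\nu=\emptyset$ provides a useful normalization check throughout.
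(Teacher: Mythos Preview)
Your proposal is correct and follows essentially the same route as the paper: the paper's proof consists of the single sentence ``Using Corollary \ref{P-Phi-Q} and (\ref{S(q,t)}) we have the results,'' which is exactly your plan of reducing to the skew-function sums \eqref{sk-1-s}, \eqref{sk-2-s} and then converting via the identity $\iota S_\mu(q^\lambda t^\rho;q,t)=(q^{1/2}t^{-1/2})^{-|\mu|}s_\mu(t^{-\lambda'}q^{-\rho})$. Your detailed bookkeeping of the prefactors $t(\lambda,u,v,N)$, $q^{n(\lambda')}/c_\lambda$, and the framing/transposition identities is precisely the routine verification the paper leaves implicit.
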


\begin{proof}
Using Corollary \ref{P-Phi-Q} and (\ref{S(q,t)}) we have the results.
\end{proof}

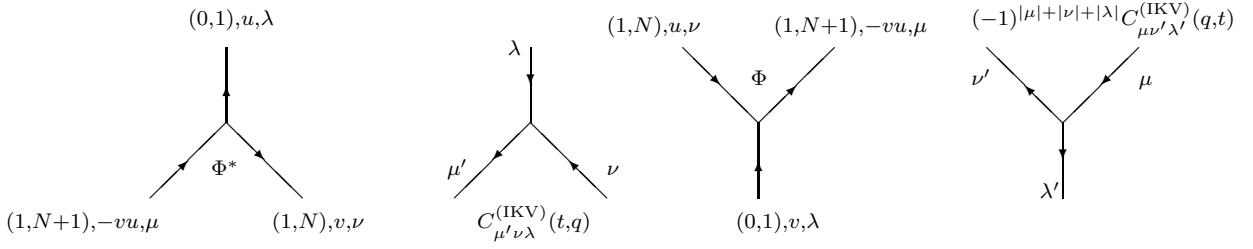
\begin{figure}
\begin{center}
\begin{picture}(60,90)(90,-40)\setlength{\unitlength}{1.mm}

\put(-10,00){\vector(0,1){5}}\put(-10,5){\line(0,1){5}}
\put(-20,-10){\vector(1,1){5}}\put(-15,-5){\line(1,1){5}}
\put(-10,0){\vector(1,-1){5}}\put(-5,-5){\line(1,-1){5}}

\put(30,10){\vector(0,-1){5}}\put(30,5){\line(0,-1){5}}
\put(20,-10){\line(1,1){5}}\put(30,0){\vector(-1,-1){5}}
\put(30,0){\line(1,-1){5}}\put(40,-10){\vector(-1,1){5}}

\put(50,10){\vector(1,-1){5}}\put(55,5){\line(1,-1){5}}
\put(60,-10){\vector(0,1){5}}\put(60,-5){\line(0,1){5}}
\put(60,0){\vector(1,1){5}}\put(65,5){\line(1,1){5}}

\put(90,10){\line(1,-1){5}}\put(100,0){\vector(-1,1){5}}
\put(100,0){\vector(0,-1){5}}\put(100,-10){\line(0,1){5}}
\put(110,10){\vector(-1,-1){5}}\put(105,5){\line(-1,-1){5}}






\put(-12,-7){$\scriptstyle \Phi^*$}

\put(59,5){$\scriptstyle \Phi$}

\put(23,-14){$\scriptstyle C^{\rm(IKV)}_{\mu'\nu\lambda}(t,q)$}

\put(88,13){$\scriptstyle (-1)^{|\mu|+|\nu|+|\lambda|} C^{\rm(IKV)}_{\mu\nu'\lambda'}(q,t) $ }

\put(-39,-14){$\scriptstyle (1,N+1),-vu, \mu$}
\put(-4,-14){$\scriptstyle (1,N),v,\nu$}
\put(-15,13){$\scriptstyle (0,1),u,\lambda$}

\put(19,-7){$\scriptstyle \mu'$}
\put(40,-7){$\scriptstyle \nu$}
\put(27,9){$\scriptstyle \lambda$}

\put(40,12){$\scriptstyle (1,N),u,\nu$}
\put(57,-14){$\scriptstyle (0,1),v,\lambda$}
\put(62,12){$\scriptstyle (1,N+1),-vu,\mu$}

\put(88,5){$\scriptstyle \nu'$}
\put(97,-10){$\scriptstyle \lambda'$}
\put(110,5){$\scriptstyle \mu$}

\end{picture}
\caption{Comparison between  $\Phi,\Phi^*$ and 
 $(-1)^{|\mu|+|\nu|+|\lambda|}C^{\rm(IKV)}_{\mu\nu'\lambda'}(q,t) ,
C^{\rm(IKV)}_{\mu'\nu\lambda}(t,q)$.}
\label{comparison-IKV}
\end{center}
\end{figure}

\begin{rmk}
Note that in the formulation of Iqbal-Kozcaz-Vafa, 
the transpose of the partition is assigned to each outgoing edge. 
To identify the refined topological vertices with vertices for $\Phi$, $\Phi^*$,
all the arrows should be reversed as shown in Fig. \ref{comparison-IKV}.
\end{rmk}

\subsection{Topological vertex of Awata-Kanno}

\begin{dfn}[Awata-Kanno]
The refined topological vertices ${C_{\mu\lambda}}^\nu(q,t), {C^{\mu\lambda}}_\nu(q,t)$
are defined by
\begin{align}
&
{C_{\mu\lambda}}^\nu(q,t)=
P_\lambda(t^\rho;q,t)
\sum_\sigma \iota P_{\mu'/\sigma'}(-t^{\lambda'}q^{\rho};t,q)
P_{\nu/\sigma}(q^\lambda t^\rho;q,t) (q^{1/2}/t^{1/2})^{|\sigma|-|\nu|}
f_\nu(q,t)^{-1},\\
 &
{C^{\mu\lambda}}_\nu(q,t)=
(-1)^{|\lambda|+|\mu|+|\nu|}
{C_{\mu'\lambda'}}^{\nu'}(t,q)
\end{align}
where $f_\lambda$ being defined in (\ref{f-lam}).
\end{dfn}

\begin{prp}\label{mat-el}
The matrix elements of the intertwining operators $\Phi$ and $\Phi^*$
are written in terms of the refined topological vertices as
\begin{align}
&{1\over \langle P_\lambda,P_\lambda\rangle_{q,t}}
\bra{\iota P_\mu}\Phi_\lambda\left[  (1,N+1),-vu \atop (0,1),v; (1,N) ,u\right] 
\ket{\iota Q_\nu}\\
&\qquad
=\left( -t^{1/2}u\over q(-v)^{N} \right)^{|\lambda|}
f_\lambda^{-N}\,\,
(t^{-1/2}v)^{|\mu|-|\nu|} 
f_\nu^{-1} \,\,{C^{\mu\lambda}}_\nu(q,t),\nonumber\\
&
\bra{\iota P_\nu}\Phi^*_\lambda\left[  (1,N) ,v;(0,1),u \atop (1,N+1),-vu\right] 
\ket{\iota Q_\mu}\\
&\qquad
=\left( q(-u)^N\over -t^{1/2} v \right)^{|\lambda|}
f_\lambda^N\,\,
(t^{-1/2}u)^{-|\mu|+|\nu|} 
f_\nu \,\,{C_{\mu\lambda}}^\nu(q,t).\nonumber
\end{align}
\end{prp}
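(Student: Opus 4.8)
The final statement (Proposition \ref{mat-el}) asserts that the matrix elements of $\Phi_\lambda$ and $\Phi^*_\lambda$ in the bases $(\iota P_\mu)$, $(\iota Q_\nu)$ produce the Awata--Kanno vertices ${C^{\mu\lambda}}_\nu(q,t)$ and ${C_{\mu\lambda}}^\nu(q,t)$, up to the explicit prefactors. My strategy is to reduce everything to Corollary \ref{P-Phi-Q}, exactly as the companion Proposition \ref{mat-el-IKV} reduces the IKV identification to the same corollary. Let me think about how this goes.

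First I would substitute the explicit form of $\Phi_\lambda$ and $\Phi^*_\lambda$ from Theorems \ref{thm-1} and \ref{thm-2}, $\Phi_\lambda = t(\lambda,u,v,N)\,\tPhi_\lambda(v)$ and $\Phi^*_\lambda = t^*(\lambda,u,v,N)\,\tPhi^*_\lambda(u)$, pulling out the scalar factors $t(\lambda,u,v,N)=(-vu)^{|\lambda|}(-v)^{-(N+1)|\lambda|}f_\lambda^{-N-1}$ and $t^*(\lambda,u,v,N)=(q^{-1}v)^{-|\lambda|}(-u)^{N|\lambda|}f_\lambda^{N}$. Writing $\tPhi_\lambda(v)=(q^{n(\lambda')}/c_\lambda):\tPhi_\emptyset(v)\eta_\lambda(v):$ and similarly for $\tPhi^*_\lambda$, the core object is the normal-ordered matrix element, which Corollary \ref{P-Phi-Q} evaluates in the $(P_\nu)$, $(P_\mu)$ basis as a sum over $\sigma$ of skew-Macdonald products $P_{\nu/\sigma}(q^\lambda t^\rho)\,\iota P_{\mu/\sigma}(q^{-\lambda}t^{-\rho})\langle P_\sigma,P_\sigma\rangle_{q,t}$.

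Next I would convert bases. The corollary is stated for $\bra{P_\nu}\cdots\ket{P_\mu}$, whereas the proposition uses $\bra{\iota P_\mu}\cdots\ket{\iota Q_\nu}$. The conversion is essentially bookkeeping: $\iota$ is the involution $\iota(p_n)=-p_n$, and the transition between $(P,Q)$ and $(\iota P,\iota Q)$ together with the duality $\langle Q_\lambda,P_\mu\rangle_{q,t}=\delta_{\lambda\mu}$ and $Q_\lambda=P_\lambda/\langle P_\lambda,P_\lambda\rangle_{q,t}$ supplies the needed scalar-product factors. I would use the argument-shift relation between $:\tPhi_\emptyset(v)\eta_\lambda(v):$ and $:\tPhi_\emptyset(q^{1/2}v)\eta_\lambda(q^{1/2}v):$ (a rescaling $v\mapsto q^{1/2}v$, producing powers of $q^{1/2}$) to bring the expression into the exact form of the corollary, and likewise $u\mapsto q^{1/2}u$ for $\Phi^*$. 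At this stage the sum over $\sigma$ with the $\iota P_{\nu/\sigma}(q^\lambda t^\rho)\,P_{\mu/\sigma}(q^{-\lambda}t^{-\rho})$ structure should be recognizable, after applying $\iota$ and the specialization shorthand $\varepsilon^\pm_\lambda$, as precisely the defining sum for ${C_{\mu\lambda}}^\nu(q,t)$; the factor $P_\lambda(t^\rho;q,t)$ in the Awata--Kanno definition must be matched against the prefactor $q^{n(\lambda')}/c_\lambda$ coming from $\tPhi_\lambda$, since $P_\lambda(t^\rho;q,t)$ is a known principal-specialization product closely tied to $c_\lambda$.

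The main obstacle will be the precise reconciliation of all the prefactors and of the framing factor $f_\nu$. There are many multiplicative pieces in play — $t(\lambda,u,v,N)$ and $t^*$, the specialization $P_\lambda(t^\rho;q,t)$, the powers $(q^{1/2}/t^{1/2})^{|\sigma|-|\nu|}$ and $(q^{1/2}t^{-1/2})^{\pm}$ from the argument shifts, the norms $\langle P_\sigma,P_\sigma\rangle_{q,t}$, and the framing factors $f_\lambda^{\pm N}$, $f_\nu^{\pm1}$ — and matching them to the compact prefactors $\left(-t^{1/2}u/(q(-v)^N)\right)^{|\lambda|}f_\lambda^{-N}(t^{-1/2}v)^{|\mu|-|\nu|}f_\nu^{-1}$ and its $\Phi^*$ counterpart requires tracking $\iota$ acting inside the skew functions (which flips each $P_{\mu'/\sigma'}\leftrightarrow \iota P_{\mu/\sigma}$ via conjugation and sign $(-1)^{|\mu|-|\sigma|}$) together with the $(t,q)\leftrightarrow(q,t)$ swap implicit in the Awata--Kanno conventions. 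Once the $\sigma$-sum is identified, the rest is a careful but routine exponent count; given Corollary \ref{P-Phi-Q} and the specialization identity \eqref{S(q,t)}, the proof reduces to this bookkeeping, mirroring the one-line proof of Proposition \ref{mat-el-IKV}.
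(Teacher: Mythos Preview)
Your proposal is correct and follows essentially the same route as the paper: both arguments reduce the matrix elements to Corollary \ref{P-Phi-Q} after pulling out the scalar prefactors from Theorems \ref{thm-1}, \ref{thm-2}, and then match the resulting $\sigma$-sum against the Awata--Kanno definition. The paper presents this in reverse order, first rewriting ${C_{\mu\lambda}}^\nu$ and ${C^{\mu\lambda}}_\nu$ as explicit $\sigma$-sums of skew Macdonald specializations (with the prefactor $P_\lambda(t^\rho;q,t)$ already converted to $q^{n(\lambda')}/c_\lambda$-type factors) and then invoking Corollary \ref{P-Phi-Q}, but the content is the same bookkeeping you describe.
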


\begin{proof}
Note that we have 
\begin{align}
&
{C_{\mu\lambda}}^\nu(q,t)=
{(-1)^{|\lambda|}q^{n(\lambda')}t^{|\lambda|/2}\over c_\lambda}
(q^{1/2}t^{-1/2})^{|\mu|-|\nu|}f_\nu^{-1} {1\over  \langle P_\mu,P_\mu\rangle_{q,t}}\\
&\qquad \qquad\times
\sum_\sigma P_{\nu/\sigma}(q^\lambda t^\rho )
\iota  P_{\mu/\sigma}(q^{-\lambda}t^{-\rho}) 
 \langle P_\sigma,P_\sigma\rangle_{q,t},\nonumber\\
 &
{C^{\mu\lambda}}_\nu(q,t)=
{q^{|\lambda|/2}t^{n(\lambda)}\over c'_\lambda}
(q^{1/2}t^{-1/2})^{2|\nu|}f_\nu {1\over  \langle P_\nu,P_\nu\rangle_{q,t}}\\
&\qquad\qquad\times
\sum_\sigma
\iota  P_{\mu/\sigma}(q^\lambda t^\rho )
 P_{\nu/\sigma}(q^{-\lambda}t^{-\rho}) 
 \langle P_\sigma,P_\sigma\rangle_{q,t}
 (q^{1/2}t^{-1/2})^{-2|\sigma|} .\nonumber 
 \end{align}
Using Corollary \ref{P-Phi-Q} we have the results.
\end{proof}

\begin{rmk}
We note that all the vertical arrows should be get reversed to establish a correspondence
between $\Phi,\Phi^*$ and  ${C^{\mu\lambda}}_\nu,{C_{\mu\lambda}}^\nu$
as seen in Fig. \ref{comparison}.
\end{rmk}

\begin{figure}
\begin{center}
\begin{picture}(60,90)(90,-40)\setlength{\unitlength}{1.mm}

\put(-10,00){\vector(0,1){5}}\put(-10,5){\line(0,1){5}}
\put(-20,-10){\vector(1,1){5}}\put(-15,-5){\line(1,1){5}}
\put(-10,0){\vector(1,-1){5}}\put(-5,-5){\line(1,-1){5}}

\put(30,10){\vector(0,-1){5}}\put(30,5){\line(0,-1){5}}
\put(20,-10){\vector(1,1){5}}\put(25,-5){\line(1,1){5}}
\put(30,0){\vector(1,-1){5}}\put(35,-5){\line(1,-1){5}}

\put(50,10){\vector(1,-1){5}}\put(55,5){\line(1,-1){5}}
\put(60,-10){\vector(0,1){5}}\put(60,-5){\line(0,1){5}}
\put(60,0){\vector(1,1){5}}\put(65,5){\line(1,1){5}}

\put(90,10){\vector(1,-1){5}}\put(95,5){\line(1,-1){5}}
\put(100,0){\vector(0,-1){5}}\put(100,-5){\line(0,-1){5}}
\put(100,0){\vector(1,1){5}}\put(105,5){\line(1,1){5}}






\put(-12,-7){$\scriptstyle \Phi^*$}

\put(59,5){$\scriptstyle \Phi$}

\put(27,-14){$\scriptstyle {C_{\mu\lambda}}^\nu$}

\put(97,13){$\scriptstyle {C^{\mu\lambda}}_\nu$}

\put(-39,-14){$\scriptstyle (1,N+1),-vu, \mu$}
\put(-4,-14){$\scriptstyle (1,N),v,\nu$}
\put(-15,13){$\scriptstyle (0,1),u,\lambda$}

\put(19,-7){$\scriptstyle \mu$}
\put(40,-7){$\scriptstyle \nu$}
\put(27,9){$\scriptstyle \lambda$}

\put(40,12){$\scriptstyle (1,N),u,\nu$}
\put(57,-14){$\scriptstyle (0,1),v,\lambda$}
\put(62,12){$\scriptstyle (1,N+1),-vu,\mu$}

\put(88,5){$\scriptstyle \nu$}
\put(97,-10){$\scriptstyle \lambda$}
\put(110,5){$\scriptstyle \mu$}

\end{picture}
\caption{Comparison between  $\Phi,\Phi^*$ and  ${C^{\mu\lambda}}_\nu,{C_{\mu\lambda}}^\nu$.}
\label{comparison}
\end{center}
\end{figure}
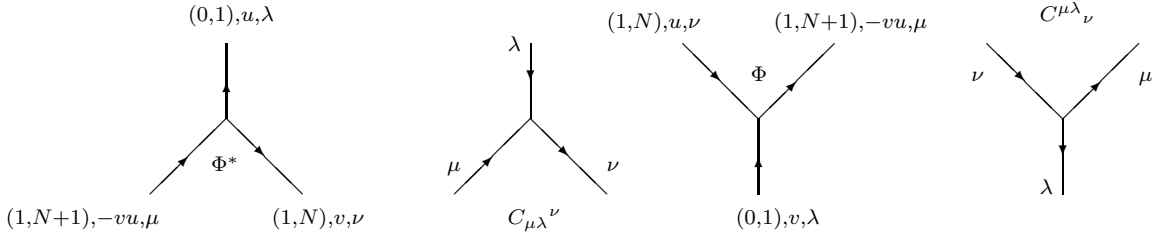

\subsection{Gluing rules}

Consider a trivalent vertex with edges, say,  $i,j$ and $k$,
with two component vectors 
${\bf v}_i,{\bf v}_j,{\bf v}_k\in \bbZ^2$ attached respectively (see Fig. \ref{gluing} (a)).
Here we regard all the vectors being outgoing. 
We assume that they satisfy the (Calabi-Yau and smoothness) conditions 
\begin{align}
{\bf v}_i+{\bf v}_j+{\bf v}_k={\bf 0}, 
\qquad
{\bf v}_i\wedge {\bf v}_j=1,\label{sum}
\end{align}
where we have used the notation $(a,b)\wedge (c,d)=ad-bc$.
Note that these mean ${\bf v}_j\wedge {\bf v}_k={\bf v}_k\wedge {\bf v}_i=1$.

\begin{dfn} \label{gluing-rules}
Let ${\bf v}_i,{\bf v}_j,{\bf v}_k,{\bf v}_{i'},{\bf v}_{j'}\in \bbZ^2$,
and consider a graph as in Fig. \ref{gluing} (b).
Let $\lambda_k$ be a partition, and $Q_k$ be a parameter 
(K\"ahler parameter).
To the internal edge, with the data ${\bf v}_k$, $\lambda_k$, $Q_k$ attached, 
we associate the `gluing factor'
\begin{align}
Q_k^{|\lambda_k|} (f_{\lambda_k})^{ {\bf v}_i\wedge  {\bf v}_{i'}}.\label{factor}
\end{align}
The refined topological vertices are contracted 
by multiplying the gluing factor
and making summation with respect to the repeated indices.

\end{dfn}

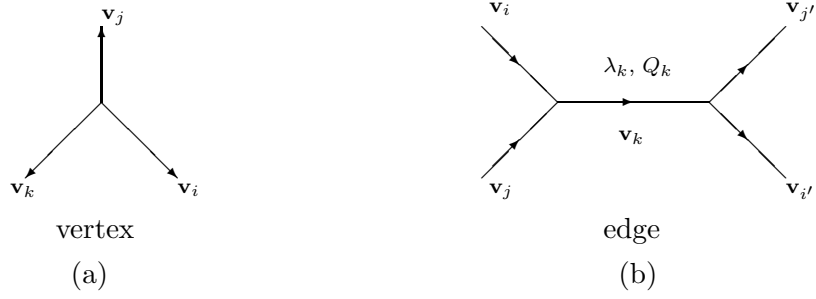
\begin{figure}
\begin{center}
\begin{picture}(60,100)(90,-60)\setlength{\unitlength}{1.mm}

\put(0,0){\vector(0,1){10}}
\put(0,0){\vector(1,-1){10}}
\put(0,0){\vector(-1,-1){10}}

\put(10,-12){$\scriptstyle  {\bf v}_i$}
\put(0,11){$\scriptstyle  {\bf v}_j$}
\put(-12,-12){$\scriptstyle {\bf v}_k$}

\put(50,10){\vector(1,-1){5}}\put(55,5){\line(1,-1){5}}
\put(50,-10){\vector(1,1){5}}\put(55,-5){\line(1,1){5}}

\put(80,0){\vector(1,-1){5}}\put(85,-5){\line(1,-1){5}}
\put(80,0){\vector(1,1){5}}\put(85,5){\line(1,1){5}}

\put(60,0){\vector(1,0){10}}\put(70,0){\line(1,0){10}}

\put(68,-5){$\scriptstyle {\bf v}_k$}

\put(51,12){$\scriptstyle  {\bf v}_i$}
\put(51,-12){$\scriptstyle  {\bf v}_j$}

\put(90,12){$\scriptstyle  {\bf v}_{j'}$}
\put(90,-12){$\scriptstyle  {\bf v}_{i'}$}

\put(66,4){$\scriptstyle  \lambda_k, \,\,Q_k$}

\put(-6,-18){\small vertex}
\put(66,-18){\small edge}

\put(-4,-24){\small (a)}
\put(68,-24){\small (b)}

\end{picture}
\caption{Gluing rules.}
\label{gluing}
\end{center}
\end{figure}

\subsection{Check of gluing rules}


Consider any intertwining operator of the $\calU$ modules 
$\calF^{{\bf  v}_1}_{u_1}\otimes \cdots \otimes \calF^{{\bf  v}_m}_{u_m}
\rightarrow  \calF^{{\bf  v}'_1}_{u'_1}\otimes \cdots \otimes \calF^{{\bf  v}'_n}_{u'_n}$
obtained by composing the trivalent intertwining operators $\Phi$ and $\Phi^*$ in a certain way.
The matrix elements 
can be evaluated by virtue of Proposition \ref{mat-el-IKV} or Proposition \ref{mat-el}. Then we 
have a (not necessarily connected) graph with trivalent vertices, with the following structure associated:
\begin{enumerate}
\item a spectral parameter and a vector $\in \bbZ^2$ is attached to each edge,
\item the condition (\ref{sum}) is satisfied with respect to every vertex,
\item to eace vertex a refined topological vertex is associated (Propositions \ref{mat-el-IKV}, \ref{mat-el}), 
\item each internal edge gives a contraction of refined topological vertices.
\end{enumerate}

Hence if it is shown that the correct gluing factor (\ref{factor}) 
appears to every internal edge in the matrix element, our 
approach from the representation theory of the algebra $\calU$ precisely
reproduces the quantity derived from the refined topological vertex, 
up to a factor depending on the data attached to the external edges.
One can show that this is the case by checking it 
for all the possible (local) processes stated in Propositions 
\ref{gl-pro-1}, \ref{gl-pro-2}, \ref{gl-pro-3}, \ref{gl-pro-4}, \ref{gl-pro-5} below. 
We demonstrate these for the case 
of Awata-Kanno construction, since our notation gets a little simpler. The calculations for 
the topological vertex of Iqbal-Kozcaz-Vafa goes exactly the same way, and we omit them.

\begin{thm}\label{equiv}
Suppose we choose the preferred direction to be vertical $(0,1)$ in the web diagram.
The matrix element of the composition of the  trivalent intertwining operators $\Phi$ and $\Phi^*$,
and the corresponding quantity derived from the theory of the 
refined topological vertex of Iqbal-Kozcaz-Vafa or Awata-Kanno coincide, up to a factor depending on the data attached to external edges.
\end{thm}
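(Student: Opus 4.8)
The plan is to reduce Theorem~\ref{equiv} to a finite, purely local verification. The key structural observation is that any intertwining operator obtained by composing copies of $\Phi$ and $\Phi^*$ decomposes its matrix elements, via Proposition~\ref{mat-el-IKV} or Proposition~\ref{mat-el}, into a product over the trivalent vertices of the web (each contributing a refined topological vertex) times the prefactors recorded in those propositions. Since the refined vertex already sits at each node, the only thing that can go wrong is the structure attached to the \emph{internal} edges: each internal edge carries a summation over a partition $\lambda_k$ (because composing $\Phi$ with $\Phi^*$ couples an outgoing $(0,1)$-leg of one vertex to an incoming $(0,1)$-leg of the next, and the identity $\sum_\lambda \ket{Q_\lambda}\bra{P_\lambda} = \mathrm{id}$, or its $S_\lambda$/$s_\mu$ analogue, is the resolution of identity in the Fock space), and it must produce precisely the gluing factor $Q_k^{|\lambda_k|}(f_{\lambda_k})^{{\bf v}_i\wedge {\bf v}_{i'}}$ of Definition~\ref{gluing-rules}. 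So the whole theorem comes down to matching, edge by edge, the $\lambda_k$-dependent prefactors from the two flanking vertices together with the spectral-parameter dependence against this prescribed factor.

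First I would enumerate the possible local configurations of an internal $(0,1)$-edge. Because we have fixed the preferred direction to be vertical and restricted to the modules $\calF^{(0,1)}_u$ and $\calF^{(1,N)}_u$, the Calabi-Yau and smoothness conditions force the two non-preferred legs at each vertex to be $(1,N)$ and $(-1,-N-1)$; consequently an internal edge is glued between an operator of type $\Phi$ and one of type $\Phi^*$ in one of finitely many relative orientations. Each such configuration is exactly one of the five local processes of Propositions~\ref{gl-pro-1}--\ref{gl-pro-5}. My plan is therefore: (i) state that the matrix element factorizes over vertices with the prefactors of Propositions~\ref{mat-el-IKV}, \ref{mat-el}; (ii) insert the Fock-space resolution of identity on each internal edge, turning the contraction into a sum over $\lambda_k$ weighted by the spectral parameter $u_k$ of that edge raised to the appropriate power; (iii) invoke each of the five gluing propositions to identify, in the relevant local configuration, that the collected $f_{\lambda_k}$-powers from the two adjacent prefactors combine to $(f_{\lambda_k})^{{\bf v}_i\wedge {\bf v}_{i'}}$ and that the spectral-parameter power assembles into $Q_k^{|\lambda_k|}$ with $Q_k$ read off as the appropriate ratio of spectral parameters.

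The main obstacle will be the bookkeeping of the framing factors $f_{\lambda_k}$. In Propositions~\ref{mat-el-IKV} and \ref{mat-el} each vertex already carries explicit powers of $f_\nu$ (or $f_\lambda$) attached to its legs, with signs and exponents depending on $N$; when two vertices are glued along an edge these contributions must be shown to telescope into exactly the exponent ${\bf v}_i\wedge {\bf v}_{i'}$ dictated by the \emph{global} geometry of the two outer legs, rather than some accidental local value. Verifying that $\wedge$-products computed from the attached vectors match the $N$-dependent exponents coming out of $t(\lambda,u,v,N)$ and $t^*(\lambda,u,v,N)$ in Theorems~\ref{thm-1} and \ref{thm-2} is the delicate step; the sign factors $(-1)^{|\lambda|}$ and the $q^{n(\lambda')}t^{-n(\lambda)}$ pieces of $f_\lambda$ must all be tracked consistently. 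I expect the spectral-parameter identification $Q_k^{|\lambda_k|}$ to fall out more easily, since $u_k$ enters the relevant matrix elements only through an overall power $u_k^{|\lambda_k|}$ by Corollary~\ref{P-Phi-Q}.

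Once these five local matches are established, the global statement follows by a routine induction on the number of internal edges: peel off one internal edge at a time, apply the corresponding gluing proposition, and observe that the remaining factors depend only on the data of the external edges. I would conclude by noting that the Iqbal-Kozcaz-Vafa case is handled identically, with the bases $(S_\lambda(q,t))$, $(s_\mu)$ replacing $(\iota P_\lambda)$, $(\iota Q_\mu)$ throughout, so that the demonstration given for the Awata-Kanno construction transfers verbatim. This completes the reduction of Theorem~\ref{equiv} to Propositions~\ref{gl-pro-1}--\ref{gl-pro-5}.
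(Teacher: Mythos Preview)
Your plan is essentially the paper's own argument: reduce the global statement to the five local gluing checks of Propositions~\ref{gl-pro-1}--\ref{gl-pro-5}, using Propositions~\ref{mat-el-IKV} and \ref{mat-el} at each vertex and a resolution of identity on each internal edge, then remark that the Iqbal--Kozcaz--Vafa case goes through verbatim with the dual Schur bases. One correction to your description, though: the internal edges are \emph{not} all of type $(0,1)$, and they do \emph{not} always join a $\Phi$ to a $\Phi^*$. Only Proposition~\ref{gl-pro-1} has a vertical internal edge; Propositions~\ref{gl-pro-2}--\ref{gl-pro-5} glue along a horizontal $(1,N)$-edge, and in particular Propositions~\ref{gl-pro-4} and \ref{gl-pro-5} compose $\Phi\circ\Phi$ and $\Phi^*\circ\Phi^*$ respectively. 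Your enumeration should reflect both edge types and all four pairings $(\Phi,\Phi^*)$, $(\Phi^*,\Phi)$, $(\Phi,\Phi)$, $(\Phi^*,\Phi^*)$ along the non-preferred direction together with the single vertical case; that is precisely what the five propositions cover.
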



\begin{prp}\label{gl-pro-1}
The matrix element of the composition (see Fig \ref{case-1})
\begin{align}
&
\calF^{(1,L)}_{u}\otimes \calF^{(1,M)}_{v}
\mathop{\longrightarrow}^{\Phi^*\otimes {\rm id}}
\calF^{(1,L-1)}_{u/w}\otimes \calF^{(0,1)}_{-w} \otimes \calF^{(1,M)}_{v}
 \mathop{\longrightarrow}^{{\rm id}\otimes \Phi}
\calF^{(1,L-1)}_{u/w}\otimes  \calF^{(1,M+1)}_{vw},\nonumber
\end{align}
with respect to 
$\bra{\iota P_\nu\otimes \iota P_\sigma} $
and $\ket{\iota Q_\mu\otimes \iota Q_\rho}$
is 
\begin{align}
&(-t^{-1/2}w)^{-|\mu|+|\nu|-|\rho|+|\sigma|} f_\nu f_\rho^{-1} 
\sum_\lambda 
\left(w^{L-M} v/u \right)^{|\lambda|} f_\lambda^{L-M-1}
{C_{\mu\lambda}}^{\nu} {C^{\sigma\lambda}}_{\rho}. \nonumber
\end{align}
Recall we should reverse the vertical arrow, and apply the rule for 
calculating the gluing factor (\ref{factor}).
We have $(1,M)\wedge (1,L-1)=L-M-1$, and thus
the factor $\left(w^{L-M} v/u \right)^{|\lambda|} f_\lambda^{L-M-1}$
agrees with the gluing factor  (\ref{factor}).
\end{prp}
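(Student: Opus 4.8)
The plan is to reduce the composition to its normalized components and then apply Proposition~\ref{mat-el} to each of the two vertices separately. First I would make the internal $(0,1)$ leg explicit. By Definition~\ref{def-phis} the operator $\Phi^*$ attaches the basis element $Q_\lambda$ to the vertical leg, $\Phi^*(\alpha)=\sum_\lambda \Phi^*_\lambda(\alpha)\otimes Q_\lambda$, whereas by Definition~\ref{def-phi} the operator $\Phi$ reads off the vertical leg through $\Phi_\lambda(\gamma)=\Phi(P_\lambda\otimes\gamma)$. Since $Q_\lambda=P_\lambda/\langle P_\lambda,P_\lambda\rangle_{q,t}$, applying ${\rm id}\otimes\Phi$ to $(\Phi^*\otimes{\rm id})(\alpha\otimes\gamma)$ yields
\begin{align}
({\rm id}\otimes\Phi)(\Phi^*\otimes{\rm id})(\alpha\otimes\gamma)
=\sum_\lambda \frac{1}{\langle P_\lambda,P_\lambda\rangle_{q,t}}\,\Phi^*_\lambda(\alpha)\otimes\Phi_\lambda(\gamma).\nonumber
\end{align}
This is the structural heart of the argument: contracting the internal vertical edge supplies exactly the factor $1/\langle P_\lambda,P_\lambda\rangle_{q,t}$.

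Next I would pair with $\bra{\iota P_\nu\otimes\iota P_\sigma}$ and $\ket{\iota Q_\mu\otimes\iota Q_\rho}$. As the two external horizontal legs lie in independent tensor factors, the bracket factorizes into
\begin{align}
\sum_\lambda \bra{\iota P_\nu}\Phi^*_\lambda\ket{\iota Q_\mu}\cdot
\frac{1}{\langle P_\lambda,P_\lambda\rangle_{q,t}}\bra{\iota P_\sigma}\Phi_\lambda\ket{\iota Q_\rho}.\nonumber
\end{align}
The factor $1/\langle P_\lambda,P_\lambda\rangle_{q,t}$ is precisely the normalization built into the $\Phi$-formula of Proposition~\ref{mat-el}, so both brackets now have the form treated there. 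I would then substitute the vertex data: for $\Phi^*$ take $N=L-1$ with vertical parameter $-w$ and horizontal parameter $u/w$ (so that $-(u/w)(-w)=u$); for $\Phi$ take $N=M$ with vertical parameter $-w$ and horizontal parameter $v$ (so that $-(-w)v=vw$). Proposition~\ref{mat-el} then turns the first bracket into ${C_{\mu\lambda}}^\nu$ and the normalized second bracket into ${C^{\sigma\lambda}}_\rho$.

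Finally I would collect the prefactors. The $\lambda$-independent pieces combine into $(-t^{-1/2}w)^{-|\mu|+|\nu|-|\rho|+|\sigma|}f_\nu f_\rho^{-1}$, while the $\lambda$-dependent bases multiply as $\bigl(qw^L/(-t^{1/2}u)\bigr)^{|\lambda|}\bigl(-t^{1/2}v/(qw^M)\bigr)^{|\lambda|}=(w^{L-M}v/u)^{|\lambda|}$ and the framing powers add as $f_\lambda^{L-1}f_\lambda^{-M}=f_\lambda^{L-M-1}$. This is the asserted formula, and it identifies the gluing factor $(w^{L-M}v/u)^{|\lambda|}f_\lambda^{L-M-1}$ with $(\ref{factor})$ because $(1,M)\wedge(1,L-1)=L-M-1$. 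The only genuine obstacle is the bookkeeping in this last step: tracking the signs and half-integer powers of $q$ and $t$ through the substitutions, and in particular verifying that the framing exponents $N=L-1$ from $\Phi^*$ and $-N=-M$ from $\Phi$ add to exactly the wedge product $L-M-1$. The conceptual point, that the internal contraction furnishes precisely the $\langle P_\lambda,P_\lambda\rangle_{q,t}$ needed to normalize $\Phi_\lambda$, is already secured in the first step.
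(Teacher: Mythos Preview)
Your proof is correct and follows exactly the same approach as the paper: expand the composition into components via $Q_\lambda = P_\lambda/\langle P_\lambda,P_\lambda\rangle_{q,t}$, factor the matrix element as $\sum_\lambda \langle P_\lambda,P_\lambda\rangle_{q,t}^{-1}\bra{\iota P_\nu}\Phi^*_\lambda\ket{\iota Q_\mu}\bra{\iota P_\sigma}\Phi_\lambda\ket{\iota Q_\rho}$, and then invoke Proposition~\ref{mat-el}. In fact you are more explicit than the paper, which stops at the factorized matrix element and leaves the parameter substitutions and prefactor bookkeeping to the reader; your verification of these (in particular the cancellation giving $(w^{L-M}v/u)^{|\lambda|}$ and $f_\lambda^{L-M-1}$) is accurate.
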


\begin{figure}
\begin{center}
\begin{picture}(60,110)(90,-40)\setlength{\unitlength}{1.mm}

\put(0,0){\vector(0,1){5}}\put(0,5){\line(0,1){5}}

\put(-10,-10){\vector(1,1){5}}\put(-5,-5){\line(1,1){5}}
\put(-10,20){\vector(1,-1){5}}\put(-5,15){\line(1,-1){5}}

\put(0,0){\vector(1,0){10}}\put(10,0){\line(1,0){10}}
\put(0,10){\vector(1,0){10}}\put(10,10){\line(1,0){10}}




\put(-16,22){$\scriptstyle (1,M),v$}
\put(-16,-13){$\scriptstyle (1,L),u$}



\put(12,15){$\scriptstyle (1,M+1),vw$}
\put(12,-5){$\scriptstyle (1,L-1),u/w$}

\put(2,5){$\scriptstyle (0,1),-w$}



\put(0,-5){$\scriptstyle \Phi^*$}
\put(0,13){$\scriptstyle \Phi$}

\put(-11,-7){$\scriptstyle \mu$}
\put(-11,16){$\scriptstyle \rho$}

\put(-3,3){$\scriptstyle \lambda$}

\put(18,7){$\scriptstyle \sigma$}

\put(18,2){$\scriptstyle \nu$}

\end{picture}
\caption{Case 1}\label{case-1}
\end{center}
\end{figure}

\begin{proof}
We have
\begin{align*}
&
\iota Q_\mu \otimes \iota Q_\rho
\mapsto
\sum_{\lambda} \Phi^*_\lambda (\iota Q_\mu) \otimes Q_\lambda  \otimes \iota Q_\rho
\mapsto 
{1\over \langle P_\lambda,P_\lambda\rangle_{q,t}}
\sum_{\lambda} \Phi^*_\lambda (\iota Q_\mu) \otimes \Phi_\lambda( \iota Q_\rho).
\end{align*}
Hence the matrix element is 
\begin{align*}
 &
 \sum_{\lambda} 
 {1\over \langle P_\lambda,P_\lambda\rangle_{q,t}}
 \bra{\iota P_\nu} \Phi^*_\lambda  
 \ket{\iota Q_\mu } 
  \bra{\iota P_\sigma} \Phi_\lambda  
 \ket{\iota Q_\rho } .
\end{align*}

\end{proof}

\begin{prp}\label{gl-pro-2}
The matrix element of the composition (see Fig. \ref{case-2})
\begin{align}
&
\calF^{(0,1)}_{-y}\otimes \calF^{(1,L)}_{u}
\mathop{\longrightarrow}^{{\rm id}\otimes \Phi^*}
\calF^{(0,1)}_{-y}\otimes \calF^{(1,L-1)}_{u/x} \otimes \calF^{(0,1)}_{-x}
 \mathop{\longrightarrow}^{\Phi \otimes {\rm id} }
\calF^{(1,L)}_{uy/x}\otimes  \calF^{(0,1)}_{-x},\nonumber
\end{align}
with respect to 
$\bra{\iota P_\sigma \otimes P_\lambda} $
and $\ket{Q_\rho \otimes \iota Q_\mu}$
is 
\begin{align}
&
\left(q x^{L-1}\over -t^{1/2}u/x \right)^{|\lambda|}f_\lambda^{L-1}
\left( -t^{1/2}u/x  \over q y^{L-1}\right)^{|\rho|}f_\rho^{-L+1}\\
&\times
(-t^{-1/2}x)^{-|\mu|} (-t^{-1/2}y)^{|\sigma|}
 \sum_\nu
(x/y)^{|\nu|}
{C_{\mu\nu}}^{\lambda} {C^{\sigma\rho}}_{\nu}. \nonumber
\end{align}
Note that we have $(1,L)\wedge (1,L)=0$, 
and the factor $(x/y)^{|\nu|}$ agrees with (\ref{factor}).
\end{prp}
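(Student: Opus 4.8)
The plan is to follow the same strategy as in the proof of Proposition \ref{gl-pro-1}: trace the action of the composition on the basis vector $\ket{Q_\rho\otimes\iota Q_\mu}$ through the two maps, pair the result against $\bra{\iota P_\sigma\otimes P_\lambda}$, and finally rewrite the two surviving matrix elements as refined vertices by Proposition \ref{mat-el}. The structural novelty compared with Proposition \ref{gl-pro-1} is that the edge glued here is the \emph{horizontal} module $\calF^{(1,L-1)}_{u/x}$ rather than a vertical $(0,1)$ edge. Hence the sum over the internal partition $\nu$ is not produced for free by the component expansion of $\Phi^*$; instead it will have to be inserted as a resolution of the identity on that Fock module.

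First I would apply $\mathrm{id}\otimes\Phi^*$ and expand $\Phi^*$ into components as in Definition \ref{def-phis},
\begin{align*}
Q_\rho\otimes\iota Q_\mu\longmapsto\sum_\kappa Q_\rho\otimes\Phi^*_\kappa(\iota Q_\mu)\otimes Q_\kappa,
\end{align*}
where $Q_\kappa$ occupies the external output edge $\calF^{(0,1)}_{-x}$ and $\Phi^*_\kappa(\iota Q_\mu)\in\calF^{(1,L-1)}_{u/x}$. Applying $\Phi\otimes\mathrm{id}$ next, and using $Q_\rho=P_\rho/\langle P_\rho,P_\rho\rangle_{q,t}$ together with $\Phi_\rho(\beta)=\Phi(P_\rho\otimes\beta)$, produces
\begin{align*}
\sum_\kappa\frac{1}{\langle P_\rho,P_\rho\rangle_{q,t}}\,\Phi_\rho\bigl(\Phi^*_\kappa(\iota Q_\mu)\bigr)\otimes Q_\kappa.
\end{align*}
Pairing against $\bra{\iota P_\sigma\otimes P_\lambda}$ and using $\langle P_\lambda,Q_\kappa\rangle_{q,t}=\delta_{\lambda\kappa}$ collapses the component sum to $\kappa=\lambda$, leaving the single composite $\langle P_\rho,P_\rho\rangle_{q,t}^{-1}\bra{\iota P_\sigma}\Phi_\rho\Phi^*_\lambda\ket{\iota Q_\mu}$.

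The key step is then to factor this composite along the internal horizontal module. Because $\iota$ is an isometry for the Macdonald pairing $\langle\,,\,\rangle_{q,t}$, the bases $(\iota Q_\nu)$ and $(\iota P_\nu)$ of $\calF^{(1,L-1)}_{u/x}$ are dual, so $\sum_\nu\ket{\iota Q_\nu}\bra{\iota P_\nu}$ is the identity there. Inserting it yields
\begin{align*}
\frac{1}{\langle P_\rho,P_\rho\rangle_{q,t}}\sum_\nu\bra{\iota P_\sigma}\Phi_\rho\ket{\iota Q_\nu}\,\bra{\iota P_\nu}\Phi^*_\lambda\ket{\iota Q_\mu},
\end{align*}
and Proposition \ref{mat-el}, applied with $N=L-1$ and the spectral data of the two trivalent vertices read off from the composition, rewrites each factor as a refined Awata--Kanno vertex carrying the gluing index $\nu$.

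The only genuine labor, and the expected main obstacle, is the bookkeeping of the scalar prefactors. I anticipate that the two framing contributions $f_\nu^{-1}$ (from the $\Phi$ factor) and $f_\nu$ (from the $\Phi^*$ factor) cancel, while the $\nu$-dependent spectral prefactors $(-t^{-1/2}y)^{-|\nu|}$ and $(-t^{-1/2}x)^{|\nu|}$ combine into exactly $(x/y)^{|\nu|}$; since $(1,L)\wedge(1,L)=0$ this is precisely the gluing factor (\ref{factor}) with trivial framing exponent, as asserted. The surviving $\lambda,\rho,\mu,\sigma$-dependent prefactors from Proposition \ref{mat-el} should then assemble into the stated overall coefficient. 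The points requiring care are the exact powers of $-t^{1/2}$, $q$ and of the spectral parameters in the $|\lambda|$- and $|\rho|$-prefactors, and the observation that the inserted sum over $\nu$, weighted by $(x/y)^{|\nu|}$, is a well-defined formal power series in $x/y$ whose coefficient at each order is a finite sum over partitions of fixed size, so that the identity insertion is legitimate.
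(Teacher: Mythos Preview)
Your proposal is correct and follows essentially the same route as the paper: trace $Q_\rho\otimes\iota Q_\mu$ through the two maps, collapse the component sum against $\bra{P_\lambda}$ to obtain $\langle P_\rho,P_\rho\rangle_{q,t}^{-1}\bra{\iota P_\sigma}\Phi_\rho\Phi^*_\lambda\ket{\iota Q_\mu}$, insert the resolution of the identity $\sum_\nu\ket{\iota Q_\nu}\bra{\iota P_\nu}$ on the internal module $\calF^{(1,L-1)}_{u/x}$, and invoke Proposition~\ref{mat-el}. The paper's own proof records exactly this structural decomposition and, like you, leaves the final assembly of the scalar prefactors implicit.
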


\begin{figure}
\begin{center}
\begin{picture}(60,90)(90,-40)\setlength{\unitlength}{1.mm}

\put(0,0){\vector(0,1){5}}\put(0,5){\line(0,1){5}}

\put(-10,-10){\vector(1,1){5}}\put(-5,-5){\line(1,1){5}}

\put(0,0){\vector(1,0){10}}\put(10,0){\line(1,0){10}}

\put(20,0){\vector(1,1){5}}\put(25,5){\line(1,1){5}}

\put(20,-10){\vector(0,1){5}}\put(20,-5){\line(0,1){5}}




\put(-16,-13){$\scriptstyle (1,L),u$}



\put(2,-5){$\scriptstyle (1,L-1),u/x$}
\put(26,12){$\scriptstyle (1,L),uy/x$}

\put(2,8){$\scriptstyle (0,1),-x$}
\put(22,-8){$\scriptstyle (0,1),-y$}



\put(-5,-1){$\scriptstyle \Phi^*$}
\put(22,-1){$\scriptstyle \Phi$}

\put(-11,-7){$\scriptstyle \mu$}
\put(20,-12){$\scriptstyle \rho$}

\put(-3,6){$\scriptstyle \lambda$}

\put(25,8){$\scriptstyle \sigma$}

\put(8,2){$\scriptstyle \nu$}

\end{picture}
\caption{Case 2}\label{case-2}
\end{center}
\end{figure}

\begin{proof}
We have
\begin{align*}
&
 Q_\rho \otimes \iota Q_\mu
\mapsto
\sum_{\nu}  Q_\rho \otimes  \Phi^*_\nu (\iota Q_\mu) \otimes Q_\nu
\mapsto 
{1\over \langle P_\rho,P_\rho\rangle_{q,t}}
\sum_{\nu} 
\Phi_\rho
\Phi^*_\nu (\iota Q_\mu) \otimes Q_\nu.
\end{align*}
Hence the matrix element is 
\begin{align*}
 &
  {1\over \langle P_\rho,P_\rho\rangle_{q,t}}
 \bra{\iota P_\sigma} \Phi_\rho\Phi^*_\lambda  
 \ket{\iota Q_\mu } 
 =
 \sum_{\nu} 
  {1\over \langle P_\rho,P_\rho\rangle_{q,t}}
 \bra{\iota P_\sigma} \Phi_\rho
 \ket{\iota Q_\nu}
 \bra{\iota P_\nu}
 \Phi^*_\lambda  
 \ket{\iota Q_\mu } .
\end{align*}

\end{proof}

\begin{prp}\label{gl-pro-3}
The matrix element of the composition (see Fig. \ref{case-3})
\begin{align}
&
\calF^{(0,1)}_{-x}\otimes \calF^{(1,M)}_{v}
\mathop{\longrightarrow}^{ \Phi}
 \calF^{(1,M+1)}_{vx}
 \mathop{\longrightarrow}^{\Phi^*}
\calF^{(1,M)}_{vx/y}\otimes  \calF^{(0,1)}_{-y},\nonumber
\end{align}
with respect to 
$\bra{\iota P_\sigma \otimes P_\rho} $
and $\ket{Q_\lambda \otimes \iota Q_\mu}$
is 
\begin{align}
&
\left( -t^{1/2}v  \over q x^{M}\right)^{|\lambda|}f_\lambda^{-M}
\left(q y^{M}\over -t^{1/2}vx/y \right)^{|\rho|}f_\rho^{M}\\
&\times
(-t^{-1/2}x)^{-|\mu|} (-t^{-1/2}y)^{|\sigma|}
f_\mu^{-1}f_\sigma 
 \sum_\nu
(x/y)^{|\nu|}
{C_{\nu\rho}}^{\sigma} {C^{\nu\lambda}}{\mu}. \nonumber
\end{align}
Note that we have $(1,M)\wedge (1,M)=0$, and the factor $(x/y)^{|\nu|}$
agrees with (\ref{factor}).
\end{prp}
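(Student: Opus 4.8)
The plan is to follow the same template as the proofs of Propositions \ref{gl-pro-1} and \ref{gl-pro-2}: trace the composition on a single basis vector, collapse the internal sums using duality of the bases, and then substitute the vertex formulas of Proposition \ref{mat-el}. First I would follow the action of the composition on $\ket{Q_\lambda\otimes \iota Q_\mu}$. Applying $\Phi$ and recalling $Q_\lambda=P_\lambda/\langle P_\lambda,P_\lambda\rangle_{q,t}$ together with $\Phi_\lambda(\alpha)=\Phi(P_\lambda\otimes\alpha)$ produces
\[
\Phi(Q_\lambda\otimes \iota Q_\mu)=\frac{1}{\langle P_\lambda,P_\lambda\rangle_{q,t}}\,\Phi_\lambda(\iota Q_\mu)\in \calF^{(1,M+1)}_{vx}.
\]
Applying $\Phi^*$ next and using $\Phi^*(\alpha)=\sum_\nu\Phi^*_\nu(\alpha)\otimes Q_\nu$ gives a sum over $\nu$ whose tail $\otimes Q_\nu$ lies in the $\calF^{(0,1)}_{-y}$ factor. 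Pairing against $\bra{\iota P_\sigma\otimes P_\rho}$ and using that $(P_\lambda)$ and $(Q_\lambda)$ are dual, so that $\langle P_\rho,Q_\nu\rangle_{q,t}=\delta_{\rho\nu}$, collapses this sum to the single term $\nu=\rho$, leaving
\[
\frac{1}{\langle P_\lambda,P_\lambda\rangle_{q,t}}\,\bra{\iota P_\sigma}\Phi^*_\rho\,\Phi_\lambda\ket{\iota Q_\mu}.
\]

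Next I would insert a resolution of the identity in the intermediate module $\calF^{(1,M+1)}_{vx}$. Since $\iota$ is an isometry for Macdonald's scalar product, one has $\langle \iota P_\nu,\iota Q_{\nu'}\rangle_{q,t}=\delta_{\nu\nu'}$, so $\sum_\nu\ket{\iota Q_\nu}\bra{\iota P_\nu}$ acts as the identity and the expression becomes
\[
\frac{1}{\langle P_\lambda,P_\lambda\rangle_{q,t}}\sum_\nu \bra{\iota P_\sigma}\Phi^*_\rho\ket{\iota Q_\nu}\,\bra{\iota P_\nu}\Phi_\lambda\ket{\iota Q_\mu}.
\]
To each factor I would then apply Proposition \ref{mat-el}, specializing the two vertex formulas there to the present data: for $\Phi_\lambda$ one sets $(u,v,N)=(v,-x,M)$, so that $(-v)^N=x^M$ and the output is $\calF^{(1,M+1)}_{vx}$; for $\Phi^*_\rho$ one sets $(u,v,N)=(-y,vx/y,M)$, so that $-vu=vx$ matches the intermediate module. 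This produces precisely the external prefactors $\bigl(\tfrac{-t^{1/2}v}{qx^M}\bigr)^{|\lambda|}f_\lambda^{-M}$ and $\bigl(\tfrac{qy^M}{-t^{1/2}vx/y}\bigr)^{|\rho|}f_\rho^M$, the factors $(-t^{-1/2}x)^{-|\mu|}(-t^{-1/2}y)^{|\sigma|}$ and $f_\mu^{-1}f_\sigma$, and the two refined vertices ${C_{\nu\rho}}^\sigma$ and ${C^{\nu\lambda}}_\mu$ appearing under the sum over $\nu$.

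The final step is the bookkeeping of the $\nu$-dependent powers: the factor $(-t^{-1/2}x)^{|\nu|}$ coming from $\Phi_\lambda$ and the factor $(-t^{-1/2}y)^{-|\nu|}$ coming from $\Phi^*_\rho$ combine into $(x/y)^{|\nu|}$, which is exactly the claimed gluing factor, with no framing contribution since $(1,M)\wedge(1,M)=0$ in the rule (\ref{factor}). I do not expect a genuine obstacle here, as the argument is entirely mechanical once Proposition \ref{mat-el} is available. The only point that demands care is the correct specialization of the spectral parameters and of the integer $N$ in the two instances of that proposition, ensuring that the $(0,1)$ and $(1,M)$ labels, and the output relation $w=-vu$, are matched consistently at both vertices; an off-by-sign or a mismatched exponent there would corrupt the $\lambda$- and $\rho$-dependent prefactors even though the central $(x/y)^{|\nu|}$ gluing factor would remain correct.
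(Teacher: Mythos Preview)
Your proposal is correct and follows essentially the same route as the paper's own proof: trace $Q_\lambda\otimes\iota Q_\mu$ through $\Phi$ and $\Phi^*$, collapse the $\Phi^*$ sum against $\bra{P_\rho}$ by duality, insert $\sum_\nu\ket{\iota Q_\nu}\bra{\iota P_\nu}$ in the intermediate Fock space, and then invoke Proposition~\ref{mat-el}. The paper records only the factorized matrix element and leaves the substitution of Proposition~\ref{mat-el} implicit; your explicit specializations $(u,v,N)\to(v,-x,M)$ for $\Phi_\lambda$ and $(u,v,N)\to(-y,vx/y,M)$ for $\Phi^*_\rho$, and the cancellation $(-t^{-1/2}x)^{|\nu|}(-t^{-1/2}y)^{-|\nu|}=(x/y)^{|\nu|}$, are exactly what is needed to finish.
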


\begin{figure}
\begin{center}
\begin{picture}(60,80)(90,-40)\setlength{\unitlength}{1.mm}

\put(0,-10){\vector(0,1){5}}\put(0,-5){\line(0,1){5}}

\put(-10,10){\vector(1,-1){5}}\put(-5,5){\line(1,-1){5}}

\put(0,0){\vector(1,0){10}}\put(10,0){\line(1,0){10}}

\put(20,0){\vector(1,-1){5}}\put(25,-5){\line(1,-1){5}}

\put(20,0){\vector(0,1){5}}\put(20,5){\line(0,1){5}}




\put(-16,12){$\scriptstyle (1,M),v$}



\put(2,-5){$\scriptstyle (1,M+1),vx$}
\put(28,-13){$\scriptstyle (1,M),vx/y$}

\put(-13,-8){$\scriptstyle (0,1),-x$}
\put(22,7){$\scriptstyle (0,1),-y$}



\put(-5,-1){$\scriptstyle \Phi$}
\put(22,-1){$\scriptstyle \Phi^*$}

\put(-11,6){$\scriptstyle \mu$}
\put(20,12){$\scriptstyle \rho$}

\put(-3,-13){$\scriptstyle \lambda$}

\put(25,-10){$\scriptstyle \sigma$}

\put(8,2){$\scriptstyle \nu$}

\end{picture}
\caption{Case 3}\label{case-3}
\end{center}
\end{figure}

\begin{proof}
We have
\begin{align*}
&
 Q_\lambda \otimes \iota Q_\mu
\mapsto
{1\over \langle P_\lambda,P_\lambda \rangle_{q,t}}
\Phi_\lambda (\iota Q_\mu)
\mapsto 
{1\over \langle P_\lambda,P_\lambda \rangle_{q,t}}
\sum_{\nu} 
\Phi^*_\nu
\Phi_\lambda (\iota Q_\mu) \otimes Q_\nu.
\end{align*}
Hence the matrix element is 
\begin{align*}
 &
{1\over \langle P_\lambda,P_\lambda \rangle_{q,t}}
 \bra{\iota P_\sigma} \Phi^*_\rho\Phi_\lambda  
 \ket{\iota Q_\mu } 
 =
 \sum_{\nu} 
{1\over \langle P_\lambda,P_\lambda \rangle_{q,t}}
 \bra{\iota P_\sigma} \Phi^*_\rho
 \ket{\iota Q_\nu}
 \bra{\iota P_\nu}
 \Phi_\lambda  
 \ket{\iota Q_\mu } .
\end{align*}

\end{proof}

\begin{prp}\label{gl-pro-4}
The matrix element of the composition (see Fig. \ref{case-4})
\begin{align}
&
\calF^{(0,1)}_{-y}\otimes
\calF^{(0,1)}_{-x}\otimes \calF^{(1,M)}_{v}
\mathop{\longrightarrow}^{{\rm id}\otimes \Phi}
\calF^{(0,1)}_{-y}\otimes \calF^{(1,M+1)}_{vx}
 \mathop{\longrightarrow}^{\Phi}
\calF^{(1,M+2)}_{vxy},\nonumber
\end{align}
with respect to 
$\bra{\iota P_\sigma} $
and $\ket{Q_\rho\otimes Q_\lambda \otimes \iota Q_\mu}$
is 
\begin{align}
&
\left( -t^{1/2}v  \over q x^{M}\right)^{|\lambda|}f_\lambda^{-M}
\left( -t^{1/2}vx  \over q y^{M+1}\right)^{|\rho|}f_\rho^{-M-1}\\
&\times
(-t^{-1/2}x)^{-|\mu|} (-t^{-1/2}y)^{|\sigma|}
f_\mu^{-1}
 \sum_\nu
(x/y)^{|\nu|}f_\nu^{-1}
{C^{\sigma\rho}}_{\nu} {C^{\nu\lambda}}_{\mu}. \nonumber
\end{align}
Note that we have $(1,M)\wedge (0,-1)=-1$,
and the factor $(x/y)^{|\nu|}f_\nu^{-1}$ agrees with (\ref{factor}).
\end{prp}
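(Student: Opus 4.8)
The plan is to argue exactly as in the proofs of Propositions \ref{gl-pro-1}, \ref{gl-pro-2} and \ref{gl-pro-3}: trace the vector $Q_\rho\otimes Q_\lambda\otimes\iota Q_\mu$ through the two maps, reduce the resulting matrix element to a product of two single-vertex matrix elements by inserting a resolution of the identity on the internal Fock space, and then substitute the closed forms supplied by Proposition \ref{mat-el}. The bulk of the work is bookkeeping of parameters and signs rather than any new structural input.

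First I would compute the image of $Q_\rho\otimes Q_\lambda\otimes\iota Q_\mu$. Since $Q_\kappa=P_\kappa/\langle P_\kappa,P_\kappa\rangle_{q,t}$, applying ${\rm id}\otimes\Phi$ to the last two factors produces $\langle P_\lambda,P_\lambda\rangle_{q,t}^{-1}\,Q_\rho\otimes\Phi_\lambda(\iota Q_\mu)\in\calF^{(0,1)}_{-y}\otimes\calF^{(1,M+1)}_{vx}$, and applying $\Phi$ then gives
\begin{align*}
\frac{1}{\langle P_\lambda,P_\lambda\rangle_{q,t}\,\langle P_\rho,P_\rho\rangle_{q,t}}\,\Phi_\rho\Phi_\lambda(\iota Q_\mu)\in\calF^{(1,M+2)}_{vxy}.
\end{align*}
Pairing with $\bra{\iota P_\sigma}$ and inserting $\sum_\nu\ket{\iota Q_\nu}\bra{\iota P_\nu}={\rm id}$ on the internal space $\calF^{(1,M+1)}_{vx}$ — legitimate because $\iota$ is an isometry for $\langle\cdot,\cdot\rangle_{q,t}$, so that $(\iota P_\nu)$ and $(\iota Q_\nu)$ remain dual bases — yields
\begin{align*}
\frac{1}{\langle P_\lambda,P_\lambda\rangle_{q,t}\,\langle P_\rho,P_\rho\rangle_{q,t}}\sum_\nu\bra{\iota P_\sigma}\Phi_\rho\ket{\iota Q_\nu}\,\bra{\iota P_\nu}\Phi_\lambda\ket{\iota Q_\mu}.
\end{align*}

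Next I would invoke Proposition \ref{mat-el} on each factor. For $\bra{\iota P_\nu}\Phi_\lambda\ket{\iota Q_\mu}$ the vertex has preferred parameter $-x$, horizontal parameter $v$ and level $N=M$; for $\bra{\iota P_\sigma}\Phi_\rho\ket{\iota Q_\nu}$ it has preferred parameter $-y$, horizontal parameter $vx$ and level $N=M+1$. With these substitutions the two norms $\langle P_\lambda,P_\lambda\rangle_{q,t}$ and $\langle P_\rho,P_\rho\rangle_{q,t}$ produced by Proposition \ref{mat-el} cancel the prefactor out front, leaving the vertices ${C^{\nu\lambda}}_\mu$ and ${C^{\sigma\rho}}_\nu$ accompanied by explicit monomial and framing factors.

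The last step is pure bookkeeping. I would collect the $\nu$-independent factors, which combine into $\left(-t^{1/2}v/qx^{M}\right)^{|\lambda|}f_\lambda^{-M}\left(-t^{1/2}vx/qy^{M+1}\right)^{|\rho|}f_\rho^{-M-1}(-t^{-1/2}x)^{-|\mu|}(-t^{-1/2}y)^{|\sigma|}f_\mu^{-1}$, and separate the $\nu$-dependent monomials $(-t^{-1/2}y)^{-|\nu|}$ from the $\Phi_\rho$ factor and $(-t^{-1/2}x)^{|\nu|}$ from the $\Phi_\lambda$ factor, whose product is $(x/y)^{|\nu|}$; together with the surviving $f_\nu^{-1}$ this is precisely the gluing factor (\ref{factor}) for the internal edge with $(1,M)\wedge(0,-1)=-1$. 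I do not expect a genuine obstacle here; the only delicate point is the parameter-and-level substitution into Proposition \ref{mat-el}, since the signs hidden in the preferred parameters $-x,-y$ and in $(-v)^{N}$ must be tracked carefully for the $\nu$-exponents to collapse cleanly to $(x/y)^{|\nu|}$.
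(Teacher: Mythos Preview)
Your proposal is correct and follows exactly the paper's own approach: trace the input vector through the two $\Phi$'s, insert the resolution of the identity $\sum_\nu\ket{\iota Q_\nu}\bra{\iota P_\nu}$ on the internal Fock space, and then apply Proposition~\ref{mat-el} to each factor. The paper's proof is in fact terser than yours, stopping at the decomposition into single-vertex matrix elements and leaving the final bookkeeping implicit.
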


\begin{figure}
\begin{center}
\begin{picture}(60,80)(90,-40)\setlength{\unitlength}{1.mm}

\put(0,-10){\vector(0,1){5}}\put(0,-5){\line(0,1){5}}

\put(-10,10){\vector(1,-1){5}}\put(-5,5){\line(1,-1){5}}

\put(0,0){\vector(1,0){10}}\put(10,0){\line(1,0){10}}

\put(20,0){\vector(1,1){5}}\put(25,5){\line(1,1){5}}

\put(20,-10){\vector(0,1){5}}\put(20,-5){\line(0,1){5}}




\put(-16,12){$\scriptstyle (1,M),v$}



\put(2,-5){$\scriptstyle (1,M+1),vx$}
\put(28,13){$\scriptstyle (1,M+2),vxy$}

\put(-13,-8){$\scriptstyle (0,1),-x$}
\put(22,-7){$\scriptstyle (0,1),-y$}



\put(-5,-1){$\scriptstyle \Phi$}
\put(22,-1){$\scriptstyle \Phi$}

\put(-11,6){$\scriptstyle \mu$}
\put(20,-12){$\scriptstyle \rho$}

\put(-3,-13){$\scriptstyle \lambda$}

\put(25,10){$\scriptstyle \sigma$}

\put(8,2){$\scriptstyle \nu$}

\end{picture}
\caption{Case 4}\label{case-4}
\end{center}
\end{figure}

\begin{proof}
We have
\begin{align*}
&
 Q_\rho\otimes Q_\lambda \otimes \iota Q_\mu
\mapsto
{1\over \langle P_\lambda,P_\lambda \rangle_{q,t}} Q_\rho\otimes
\Phi_\lambda (\iota Q_\mu)
\mapsto 
{1\over \langle P_\lambda,P_\lambda \rangle_{q,t}}
{1\over \langle P_\rho,P_\rho \rangle_{q,t}}
\Phi_\rho
\Phi_\lambda (\iota Q_\mu) .
\end{align*}
Hence the matrix element is 
\begin{align*}
 &
{1\over \langle P_\lambda,P_\lambda \rangle_{q,t}}
{1\over \langle P_\rho,P_\rho \rangle_{q,t}}
 \bra{\iota P_\sigma} \Phi_\rho\Phi_\lambda  
 \ket{\iota Q_\mu } \\
& =
 \sum_{\nu} 
{1\over \langle P_\rho,P_\rho \rangle_{q,t}}
 \bra{\iota P_\sigma} \Phi_\rho
 \ket{\iota Q_\nu}
 {1\over \langle P_\lambda,P_\lambda \rangle_{q,t}}
 \bra{\iota P_\nu}
 \Phi_\lambda  
 \ket{\iota Q_\mu } .
\end{align*}

\end{proof}

\begin{prp}\label{gl-pro-5}
The matrix element of the composition (see Fig. \ref{case-5})
\begin{align}
&
 \calF^{(1,L)}_{u}
\mathop{\longrightarrow}^{\Phi^*}
 \calF^{(1,L-1)}_{u/x} \otimes \calF^{(0,1)}_{-x}
 \mathop{\longrightarrow}^{\Phi^* \otimes {\rm id} }
\calF^{(1,L-2)}_{u/xy}\otimes  \calF^{(0,1)}_{-y}\otimes  \calF^{(0,1)}_{-x},\nonumber
\end{align}
with respect to 
$\bra{\iota P_\sigma \otimes P_\rho \otimes P_\lambda} $
and $\ket{ \iota Q_\mu}$
is 
\begin{align}
&
\left(q x^{L-1}\over -t^{1/2}u/x \right)^{|\lambda|}f_\lambda^{L-1}
\left(q y^{L-2}\over -t^{1/2}u/xy \right)^{|\rho|}f_\rho^{L-2}\\
&\times
(-t^{-1/2}x)^{-|\mu|} (-t^{-1/2}y)^{|\sigma|}
 f_\sigma
 \sum_\nu 
(x/y)^{|\nu|} f_\nu
{C_{\mu\lambda}}^{\nu} {C_{\nu\rho}}^{\sigma}. \nonumber
\end{align}
Note that we have $(1,L)\wedge (0,1)=1$,
and the factor $(x/y)^{|\nu|} f_\nu$ agrees with (\ref{factor}).
\end{prp}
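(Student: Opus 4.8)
The plan is to reduce Proposition \ref{gl-pro-5} to the already-established matrix element formula for $\Phi^*$ (Proposition \ref{mat-el}), exactly in the manner of the proofs of the preceding cases \ref{gl-pro-1}--\ref{gl-pro-4}. First I would trace the image of the input vector $\iota Q_\mu \in \calF^{(1,L)}_u$ through the two-step composition. Applying $\Phi^*$ produces $\sum_\nu \Phi^*_\nu(\iota Q_\mu)\otimes Q_\nu$ in $\calF^{(1,L-1)}_{u/x}\otimes \calF^{(0,1)}_{-x}$, and then applying $\Phi^*\otimes \mathrm{id}$ to the first tensor factor yields
\begin{align*}
\iota Q_\mu \longmapsto \sum_{\nu,\lambda} \Phi^*_\lambda\Phi^*_\nu(\iota Q_\mu)\otimes Q_\lambda \otimes Q_\nu,
\end{align*}
so that the desired matrix element against $\bra{\iota P_\sigma\otimes P_\rho\otimes P_\lambda}$ factorizes, after inserting the resolution of identity $\sum_\nu \ket{\iota Q_\nu}\bra{\iota P_\nu}$, as
\begin{align*}
\sum_\nu \bra{\iota P_\sigma}\Phi^*_\rho\ket{\iota Q_\nu}\,\bra{\iota P_\nu}\Phi^*_\lambda\ket{\iota Q_\mu}.
\end{align*}
Here I must be careful that the spectral parameters are routed correctly: the inner $\Phi^*$ carries the datum $\left[(1,L-1),u/x;(0,1),-x \atop (1,L),u\right]$ while the outer one carries $\left[(1,L-2),u/xy;(0,1),-y \atop (1,L-1),u/x\right]$, so the two instances of Proposition \ref{mat-el} are applied with $(N,v,u)$ specialized to $(L-1,u/x,-x)$ and $(L-2,u/xy,-y)$ respectively.

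Next I would substitute the second formula of Proposition \ref{mat-el} into each factor. For the inner edge this gives the vertex ${C_{\mu\lambda}}^\nu$ together with the prefactor $\left(q(-x)^{L-1}/(-t^{1/2}(u/x))\right)^{|\lambda|} f_\lambda^{L-1}\,(t^{-1/2}(-x))^{-|\mu|+|\nu|} f_\nu$, and for the outer edge the vertex ${C_{\nu\rho}}^\sigma$ with the analogous prefactor in which $(\mu,\nu,\lambda)$ are replaced by $(\nu,\sigma,\rho)$ and $(v,u,N)$ by $(u/xy,-y,L-2)$. The bulk of the work is then purely bookkeeping: I collect the powers of $x$, $y$, $q$, $t^{1/2}$ and the framing factors $f$ that attach to the external labels $\lambda,\rho,\mu,\sigma$ and verify they assemble into the stated external prefactor $\left(qx^{L-1}/(-t^{1/2}u/x)\right)^{|\lambda|}f_\lambda^{L-1}\left(qy^{L-2}/(-t^{1/2}u/xy)\right)^{|\rho|}f_\rho^{L-2}(-t^{-1/2}x)^{-|\mu|}(-t^{-1/2}y)^{|\sigma|} f_\sigma$. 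The signs coming from the $(-u)^{N|\lambda|}$-type factors and from $\iota$ must be matched against the signs already baked into ${C_{\mu\lambda}}^\nu$ and ${C_{\nu\rho}}^\sigma$.

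The crux is isolating the factor that depends on the summation index $\nu$, since that is what must reproduce the gluing factor. The $f_\nu$ produced by the inner edge and the $f_\nu$ produced by the outer edge (the one sitting in the role of $f_\nu$ in the $\Phi^*$ formula applied to the outer vertex), together with the $x^{|\nu|}$ and $y^{|\nu|}$ pieces from the $(t^{-1/2}(-x))^{|\nu|}$ and $(\cdots)^{|\nu|}$ prefactors, should combine to exactly $(x/y)^{|\nu|}f_\nu$. This is the main obstacle: one framing factor $f_\nu$ appears directly from the formula applied to one vertex while a second potential $f_\nu$ contribution must be shown either to cancel or to combine so that the net $\nu$-dependence is $(x/y)^{|\nu|}f_\nu^{+1}$ rather than $f_\nu^{-1}$ or $f_\nu^{0}$. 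I would verify this by comparing against the prescription of Definition \ref{gluing-rules}: here the internal edge carries $(0,1)$ and the adjacent outer vectors give $(1,L)\wedge(0,1)=1$, so the gluing factor is $(x/y)^{|\nu|} f_\nu^{1}$, consistent with the claimed exponent. Once the $\nu$-dependent part is confirmed to match $(x/y)^{|\nu|}f_\nu$, the summation $\sum_\nu (x/y)^{|\nu|}f_\nu\,{C_{\mu\lambda}}^\nu{C_{\nu\rho}}^\sigma$ is precisely the contracted product of the two Awata-Kanno vertices along the internal edge, which completes the proof.
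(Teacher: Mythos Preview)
Your proposal is correct and follows essentially the same route as the paper: trace $\iota Q_\mu$ through the two $\Phi^*$'s, pair with the bra to obtain $\bra{\iota P_\sigma}\Phi^*_\rho\Phi^*_\lambda\ket{\iota Q_\mu}$, insert the resolution of the identity $\sum_\nu\ket{\iota Q_\nu}\bra{\iota P_\nu}$, and then invoke Proposition~\ref{mat-el} twice. The paper's proof is in fact terser than yours --- it stops at the factorization and leaves the prefactor bookkeeping implicit --- so your additional discussion of how the $(x/y)^{|\nu|}f_\nu$ gluing factor assembles is a welcome elaboration rather than a deviation. One cosmetic point: in your displayed image of $\iota Q_\mu$ you use $\lambda$ simultaneously as a dummy summation index and as the fixed external label appearing in the bra; the paper commits the same abuse with $\rho$, so this is harmless, but renaming the dummies would make the argument cleaner.
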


\begin{figure}
\begin{center}
\begin{picture}(60,80)(90,-40)\setlength{\unitlength}{1.mm}

\put(0,0){\vector(0,1){5}}\put(0,5){\line(0,1){5}}

\put(-10,-10){\vector(1,1){5}}\put(-5,-5){\line(1,1){5}}

\put(0,0){\vector(1,0){10}}\put(10,0){\line(1,0){10}}

\put(20,0){\vector(1,-1){5}}\put(25,-5){\line(1,-1){5}}

\put(20,0){\vector(0,1){5}}\put(20,5){\line(0,1){5}}




\put(-16,-13){$\scriptstyle (1,L),u$}



\put(2,-5){$\scriptstyle (1,L-1),u/x$}
\put(26,-13){$\scriptstyle (1,L-2),u/xy$}

\put(2,8){$\scriptstyle (0,1),-x$}
\put(22,8){$\scriptstyle (0,1),-y$}



\put(-5,-1){$\scriptstyle \Phi^*$}
\put(22,-1){$\scriptstyle \Phi^*$}

\put(-11,-7){$\scriptstyle \mu$}
\put(20,12){$\scriptstyle \rho$}

\put(0,12){$\scriptstyle \lambda$}

\put(25,-9){$\scriptstyle \sigma$}

\put(8,2){$\scriptstyle \nu$}

\end{picture}
\caption{Case 5}\label{case-5}
\end{center}
\end{figure}

\begin{proof}
We have
\begin{align*}
&
\iota Q_\mu
\mapsto
\sum_\nu \Phi^* _\nu(\iota Q_\mu)\otimes Q_\nu
\mapsto 
\sum_{\nu,\rho}\Phi^* _\rho \Phi^* _\nu(\iota Q_\mu)\otimes 
Q_\rho\otimes Q_\nu.
\end{align*}
Hence the matrix element is 
\begin{align*}
 &
 \bra{\iota P_\sigma} \Phi^*_\rho\Phi^*_\lambda  
 \ket{\iota Q_\mu } 
=
 \sum_{\nu} 
 \bra{\iota P_\sigma} \Phi^*_\rho
 \ket{\iota Q_\nu}
 \bra{\iota P_\nu}
 \Phi_\lambda^*  
 \ket{\iota Q_\mu } .
\end{align*}

\end{proof}

\section{Examples of compositions of intertwining operators}
We have shown in Theorem \ref{equiv} that 
our construction based on the intertwining operators $\Phi,\Phi^*$ 
derives the same result as the one from the theory of the refined topological vertex.
Based on the findings in \cite{IKV:2009}, \cite{T:2008}, \cite{AK:2009},
it explains clearly the reason why the Nekrasov partition functions 
appear from matrix elements of intertwining operators of the algebra $\calU$. 
In this section, we try to have an interpretation of the spectral parameters 
attached to our Fock modules 
by looking at two examples of the Nekrasov partition functions
\cite{N:2003}, \cite{FP:2003}.

\subsection{Pure $SU(N_c)$ partition function}

Recall the formula of the instanton part of the  ($K$-theoretic) partition function $Z^{\rm inst}_m$ of 
the pure $SU(N_c)$ gauge theory on $\bbR^4 \times S^1$ 
with eight supercharges, associated with the $m$-th power ${\mathcal L}^{\otimes m}$
of the line bundle ${\mathcal L}$ over the instanton moduli space $M(N_c,k)$
\begin{align}
&
Z^{\rm inst}_m({\bf e}_1,\cdots,{\bf e}_{N_c},\Lambda;q,t)
=
\sum_{\lambda^{(1)},\ldots,\lambda^{(N_c)}}
{
\prod_{\alpha=1}^{N_c} ((q^{1/2}t^{-1/2})^{-N_c} \Lambda^{2N_c} (-{\bf e}_\alpha)^{-m})^{|\lambda^{(\alpha)}|}
f_{\lambda^{(\alpha)}}^{-m} \over 
 \prod_{\alpha,\beta =1}^{N_c}
 N_{\lambda^{(\alpha)},\lambda^{(\beta)}} ({\bf e}_\alpha/{\bf e}_\beta)
},
\end{align}
where  the notation
\begin{align}
N_{\lambda,\mu}(u)
&=
\prod_{(i,j)\in \lambda}(1-u q^{-\mu_i+j-1}t^{-\lambda'_j+i}) \cdot 
\prod_{(k,l)\in \mu}(1-u q^{\lambda_k-l}t^{\mu'_l-k+1}) \label{Nlammu}\\
&=
\prod_{\square\in \lambda}
(1-u q^{-a_\mu(\square)-1}t^{-\ell_\lambda(\square)}) \cdot 
\prod_{\blacksquare \in \mu}
(1-u q^{a_\lambda(\blacksquare)}t^{\ell_\mu(\blacksquare)+1}),
\end{align}
has been used.
We demonstrate how $Z^{\rm inst}_m$ appears from our construction.

Let $L,M\in \bbZ$ and $u,v,w$ be indeterminates.
Consider the four point operator (see Fig. \ref{pure})
\begin{align}
&
\Phi\left[{ (1,L-1),u/w;(1,M+1),v w \atop
 (1,L),u;(1,M),v} \right]:
\calF^{(1,L)}_{u}\otimes \calF^{(1,M)}_{v}
\longrightarrow
\calF^{(1,L-1)}_{u/w}\otimes  \calF^{(1,M+1)}_{vw},
\end{align}
defined by the composition of the trivalent intertwining operators  
(which we already considered in Proposition \ref{case-1})
\begin{align}
&
\calF^{(1,L)}_{u}\otimes \calF^{(1,M)}_{v}
\mathop{\longrightarrow}^{\Phi^*\otimes {\rm id}}
\calF^{(1,L-1)}_{u/w}\otimes \calF^{(0,1)}_{-w} \otimes \calF^{(1,M)}_{v}
 \mathop{\longrightarrow}^{{\rm id}\otimes \Phi}
\calF^{(1,L-1)}_{u/w}\otimes  \calF^{(1,M+1)}_{vw}.
\end{align}

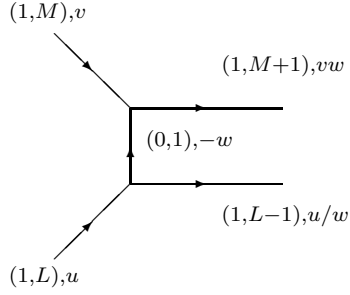
\begin{figure}
\begin{center}
\begin{picture}(60,130)(90,-40)\setlength{\unitlength}{1.mm}

\put(0,0){\vector(0,1){5}}\put(0,5){\line(0,1){5}}

\put(-10,-10){\vector(1,1){5}}\put(-5,-5){\line(1,1){5}}
\put(-10,20){\vector(1,-1){5}}\put(-5,15){\line(1,-1){5}}

\put(0,0){\vector(1,0){10}}\put(10,0){\line(1,0){10}}
\put(0,10){\vector(1,0){10}}\put(10,10){\line(1,0){10}}




\put(-16,22){$\scriptstyle (1,M),v$}
\put(-16,-13){$\scriptstyle (1,L),u$}



\put(12,15){$\scriptstyle (1,M+1),vw$}
\put(12,-5){$\scriptstyle (1,L-1),u/w$}

\put(2,5){$\scriptstyle (0,1),-w$}



\end{picture}
\caption{Four point operator.}
\label{pure}
\end{center}
\end{figure}

For any $\alpha\otimes \beta\in \calF^{(1,L)}_{u}\otimes \calF^{(1,M)}_{v}$, we have
\begin{align}
&
\Phi\left[{ (1,L-1),u/w;(1,M+1),vw \atop
 (1,L),u;(1,M),v} \right](\alpha\otimes \beta)\\
 &=
 \sum_{\lambda}
  {1\over \langle P_{\lambda},P_{\lambda}\rangle_{q,t}}
  \Phi^* _{\lambda}\left[{(1,L-1),u/w;(0,1),-w
 \atop
 (1,L),u }\right]
 (\alpha)
 \otimes
  \Phi_{\lambda}\left[{(1,M+1),vw
 \atop
 (0,1),-w;(1,M),v }\right]
( \beta)\nonumber\\
 &=
 \sum_{\lambda}
 { (q^{-1/2}t^{1/2}u^{-1}v w^{L-M})^{|\lambda|} f_\lambda^{L-M-1}\over N_{\lambda,\lambda}(1) }
\left(
:\tPhis_\emptyset(-w) \xi_{\lambda}(-w): \alpha \right)
\otimes \left(
:\tPhi_\emptyset(-w) \eta_{\lambda}(-w):\beta\right)\nonumber 
\end{align}
{}from Theorems \ref{thm-1}, {\ref{thm-2} and the formula (\ref{ccp}).

Let $w_1,w_2,\ldots,w_{N_c}$ be a set of indeterminates.
Set
\begin{align}
&u_i=u \prod_{k=1}^{i-1} w_k^{-1},\qquad
v_i=v \prod_{k=1}^{i-1} w_k, \qquad (i=1,2,\ldots,N_c+1),
\end{align}
for simplicity. 
Define the four point operator (see Fig. \ref{web-1})
\begin{align}
&
\Phi\left[{ (1,L-N_c),u_{N_c+1};(1,M+N_c),v_{N_c+1} \atop
 (1,L),u_1;(1,M),v_1} \right]:
\calF^{(1,L)}_{u_1}\otimes \calF^{(1,M)}_{v_1}
\longrightarrow
\calF^{(1,L-N_c)}_{u_{N_c+1}}\otimes  \calF^{(1,M+N_c)}_{v_{N_c+1}},
\end{align}
as the composition
\begin{align}
&\Phi\left[{ (1,L-N_c),u_{N_c+1};(1,M+N_c),v_{N_c+1} \atop
(1,L),u_1;(1,M),v_1} \right]\\
 &=
 \Phi\left[{ (1,L-N_c),u_{N_c+1};(1,M+N_c),v_{N_c+1} \atop
(1,L-N_c+1),u_{N_c};(1,M+N_c-1),v_{N_c} } \right] \cdots 
 \Phi\left[{ (1,L-1),u_2;(1,M+1),v_2 \atop
 (1,L),u_1;(1,M),v_1} \right].\nonumber 
\end{align}

\begin{figure}
\begin{center}
\begin{picture}(60,130)(90,-40)\setlength{\unitlength}{1.mm}

\put(0,0){\vector(0,1){5}}\put(0,5){\line(0,1){5}}

\put(-10,-10){\vector(1,1){5}}\put(-5,-5){\line(1,1){5}}
\put(-10,20){\vector(1,-1){5}}\put(-5,15){\line(1,-1){5}}

\put(0,0){\vector(1,0){10}}\put(10,0){\line(1,0){10}}
\put(0,10){\vector(1,0){10}}\put(10,10){\line(1,0){10}}

\put(40,0){\vector(1,0){10}}\put(50,0){\line(1,0){10}}
\put(40,10){\vector(1,0){10}}\put(50,10){\line(1,0){10}}

\put(60,0){\vector(0,1){5}}\put(60,5){\line(0,1){5}}

\put(60,0){\vector(1,-1){5}}\put(65,-5){\line(1,-1){5}}
\put(60,10){\vector(1,1){5}}\put(65,15){\line(1,1){5}}

\put(-16,22){$\scriptstyle (1,M),v_1$}
\put(-16,-13){$\scriptstyle (1,L),u_1$}

\put(4,17){$\scriptstyle (1,M+1),v_2$}
\put(4,-8){$\scriptstyle (1,L-1),u_2$}

\put(34,17){$\scriptstyle (1,M+N_c-1),v_{N_c}$}
\put(34,-8){$\scriptstyle (1,L-N_c+1),u_{N_c}$}

\put(69,22){$\scriptstyle (1,M+N_c),v_{N_c+1}$}
\put(69,-13){$\scriptstyle (1,L-N_c),u_{N_c+1}$}

\put(2,5){$\scriptstyle (0,1),-w_1$}

\put(62,5){$\scriptstyle (0,1),-w_{N_c}$}

\put(27,-1){$\cdots$}
\put(27,9){$\cdots$}

\end{picture}
\caption{Web diagram for pure $SU(N)$ gauge partition function.}
\label{web-1}
\end{center}
\end{figure}
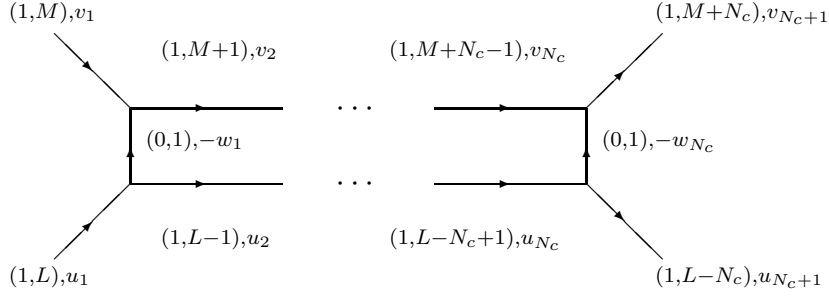

\begin{prp}\label{propos-1}
When we identify parameters as
\begin{align}
{\bf e}_i=-w_i,\qquad 
\Lambda^{2N_c}= {v\over u}\prod_{i=1}^{N_c} w_i,\qquad 
m=-L+M+N_c,
\end{align}
we have
\begin{align}
&
\bra{P_\emptyset  \otimes P_\emptyset }
\Phi\left[{ (1,L-N_c),u_{N_c+1};(1,M+N_c),v_{N_c+1} \atop
 (1,L),u_1;(1,M),v_1} \right]\ket{P_\emptyset  \otimes P_\emptyset }\nonumber\\
 &=
 \prod_{1\leq i<j\leq N}
{\mathcal G}({\bf e}_i/{\bf e}_j){\mathcal G}(q t^{-1}{\bf e}_i/{\bf e}_j)
\cdot 
Z^{\rm inst}_m({\bf e}_1,\cdots,{\bf e}_{N_c},\Lambda;q,t), \label{pro-1}
\end{align}
where
\begin{align}
{\mathcal G}(u)=
\exp\left(-
\sum_{n>0}{1\over n}{1\over (1-q^n)(1-t^{-n})} u^n
\right)\in \bbQ(q,t)[[u]]. \label{calG}
\end{align}
\end{prp}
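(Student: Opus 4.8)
The plan is to expand the $N_c$-fold composition, factor the vacuum matrix element into two bosonic vacuum expectation values, evaluate them by Wick contraction, and match the outcome term by term with $Z^{\rm inst}_m$. First I would iterate the explicit four point formula displayed just above the proposition. Composing the $N_c$ operators and expanding each factor as a sum over a partition produces a sum over $(\lambda^{(1)},\ldots,\lambda^{(N_c)})$, weighted by a product of scalar prefactors $\prod_{i=1}^{N_c}(q^{-1/2}t^{1/2}u_i^{-1}v_i\,w_i^{L-M-2(i-1)})^{|\lambda^{(i)}|} f_{\lambda^{(i)}}^{L-M-2(i-1)-1}/N_{\lambda^{(i)},\lambda^{(i)}}(1)$, and carrying on the first tensor slot the string $\prod_{i}:\tPhis_\emptyset(-w_i)\xi_{\lambda^{(i)}}(-w_i):$ and on the second the string $\prod_i:\tPhi_\emptyset(-w_i)\eta_{\lambda^{(i)}}(-w_i):$. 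Because the two strings act on different tensor factors, the matrix element $\bra{P_\emptyset\otimes P_\emptyset}(\cdots)\ket{P_\emptyset\otimes P_\emptyset}$ factorizes, for each fixed $(\lambda^{(1)},\ldots,\lambda^{(N_c)})$, into a $\xi$-line expectation value times an $\eta$-line expectation value.

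Next I would evaluate each expectation value by Wick's theorem for the Heisenberg algebra $[a_m,a_n]=m\,\frac{1-q^{|m|}}{1-t^{|m|}}\delta_{m+n,0}\,a_0$. The vacuum expectation of a product of normal ordered exponentials is the product over pairs $i<j$ of pairwise contraction factors, with no self-contractions. I would split each pairwise contraction into a partition independent piece coming from the ground vertices $\tPhi_\emptyset,\tPhis_\emptyset$, and a partition dependent piece coming from $\eta_{\lambda^{(i)}},\xi_{\lambda^{(i)}}$. The partition independent pieces depend only on the ratios $w_i/w_j$; collecting them from both lines I would identify their product with the perturbative prefactor $\prod_{1\leq i<j\leq N_c}\mathcal{G}(\mathbf{e}_i/\mathbf{e}_j)\mathcal{G}(qt^{-1}\mathbf{e}_i/\mathbf{e}_j)$ of (\ref{pro-1}), one $\mathcal{G}$ from each line with the shift $qt^{-1}$ arising from the $q^{-n/2}t^{n/2}$ factors in $\xi$ and $\tPhis_\emptyset$, using the closed form (\ref{calG}).

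The crux is the partition dependent contractions. I would show that the contraction of the $i$-th and $j$-th vertices, taken jointly over the two lines, reproduces $N_{\lambda^{(i)},\lambda^{(j)}}(\mathbf{e}_i/\mathbf{e}_j)^{-1}$ up to the ground factor and up to explicit framing factors $f_{\lambda^{(i)}},f_{\lambda^{(j)}}$; together with the normalizations $N_{\lambda^{(\alpha)},\lambda^{(\alpha)}}(1)^{-1}$, which supply the diagonal $\alpha=\beta$ terms, this assembles the full denominator $\prod_{\alpha,\beta=1}^{N_c}N_{\lambda^{(\alpha)},\lambda^{(\beta)}}(\mathbf{e}_\alpha/\mathbf{e}_\beta)$. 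The comparison rests on the product form (\ref{Nlammu}): one must match the arm and leg exponents produced by the contraction against the box coordinates $a_{\lambda}(\square),\ell_{\lambda}(\square)$, keeping track of the half-integer shifts $q^{\pm1/2}t^{\mp1/2}$ hidden in $\xi$ and of the relative ordering of the two operator types. This is the step I expect to be the main obstacle, since the framing factors generated here must be combined with the prefactors before the match becomes manifest.

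Finally I would collect all scalar data. Substituting $u_i=u\prod_{k=1}^{i-1}w_k^{-1}$ and $v_i=v\prod_{k=1}^{i-1}w_k$ turns the spectral ratios into $\mathbf{e}_\alpha/\mathbf{e}_\beta=w_\alpha/w_\beta$; the monomial weights together with the framing factors redistributed by the contractions combine into $\bigl((q^{1/2}t^{-1/2})^{-N_c}\Lambda^{2N_c}(-\mathbf{e}_\alpha)^{-m}\bigr)^{|\lambda^{(\alpha)}|}f_{\lambda^{(\alpha)}}^{-m}$ under the identification $\mathbf{e}_i=-w_i$, $\Lambda^{2N_c}=(v/u)\prod_i w_i$, $m=-L+M+N_c$. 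As a consistency check, the total framing exponent $\sum_{i=1}^{N_c}(L-M-2(i-1)-1)=N_c(L-M-N_c)$ already equals $\sum_\alpha(-m)$, so only the redistribution among the $\lambda^{(\alpha)}$ needs to be verified. Combined with the denominator assembled in the previous step and the perturbative $\mathcal{G}$ factors, the summand over $(\lambda^{(1)},\ldots,\lambda^{(N_c)})$ becomes exactly that of $Z^{\rm inst}_m$, yielding (\ref{pro-1}).
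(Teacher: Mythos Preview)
Your proposal is correct and follows the same overall strategy as the paper: expand the composition as a sum over $(\lambda^{(1)},\ldots,\lambda^{(N_c)})$, factorize the vacuum matrix element into a $\xi$-line and an $\eta$-line, evaluate each by normal-ordering, and then reorganize the monomial and framing data using Lemma~\ref{simplify} and the reflection identity $N_{\lambda,\mu}(q^{1/2}t^{-1/2}x)=N_{\mu,\lambda}(q^{1/2}t^{-1/2}x^{-1})\,x^{|\lambda|+|\mu|}f_\lambda/f_\mu$ from Lemma~\ref{lemma-N}.

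The one place where the paper proceeds differently is precisely at the step you flag as the main obstacle. Rather than splitting each pairwise contraction into a ground piece and a partition-dependent piece and then matching arm/leg exponents against the box product~(\ref{Nlammu}), the paper first repackages the entire vertex $:\tPhi_\emptyset(v)\eta_\lambda(v):$ (and likewise $:\tPhis_\emptyset(u)\xi_\lambda(u):$) via Proposition~\ref{evalu} as a single exponential whose coefficients are the specializations $\varepsilon^\pm_\lambda(p_n)$. The pairwise contraction then reduces to the scalar identity of Lemma~\ref{lem-GN},
\[
\exp\Bigl(\sum_{n>0}\tfrac{1}{n}\tfrac{1-t^n}{1-q^n}\,\varepsilon^+_\lambda(p_n)\,\varepsilon^-_\mu(p_n)\,u^n\Bigr)=\mathcal{G}(u)^{-1}N_{\lambda,\mu}(u),
\]
proved once and for all by a short $(u;q)_\infty$ manipulation. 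This gives Proposition~\ref{Phi-Phi} directly, with the ground factor $\mathcal{G}$ and the Nekrasov factor $N_{\lambda,\mu}$ appearing together, so no separate arm/leg bookkeeping is needed. Your route would work, but the $\varepsilon^\pm_\lambda$ repackaging is what makes the step you were worried about essentially automatic.
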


A proof of this will be given in Section \ref{section-7}.


\subsection{$SU(N_c)$ with $N_f=2 N_c$.}
Next, we turn to the case with $N_f=2 N_c$ fundamental matters.
Let $u,v,w,x,y$ be indeterminates.
Consider the six point operator
\begin{align}
&
\Phi\left[{ (1,0),ux/w;(1,0),v w/y;(0,1),-y \atop
(0,1),-x; (1,0),u;(1,0),v} \right]\\
 &:
\calF^{(0,1)}_{-x}\otimes
\calF^{(1,0)}_{u}\otimes \calF^{(1,0)}_{v}
\longrightarrow
\calF^{(1,0)}_{ux/w}\otimes  \calF^{(1,0)}_{vw/y}\otimes \calF^{(0,1)}_{-y},\nonumber
\end{align}
defined by the composition of the intertwining operators
\begin{align}
&
\calF^{(0,1)}_{-x}\otimes
\calF^{(1,L)}_{u}\otimes \calF^{(1,M)}_{v}
\mathop{\longrightarrow}^{{\rm id}\otimes\Phi^*\otimes {\rm id}}
\calF^{(0,1)}_{-x}\otimes
\calF^{(1,L-1)}_{u/w}\otimes \calF^{(0,1)}_{-w} \otimes \calF^{(1,M)}_{v}\\
&
 \mathop{\longrightarrow}^{{\rm id}^{\otimes 2}\otimes \Phi}
 \calF^{(0,1)}_{-x}\otimes
\calF^{(1,L-1)}_{u/w}\otimes  \calF^{(1,M+1)}_{vw}
 \mathop{\longrightarrow}^{ \Phi\otimes \Phi^*}
 \calF^{(1,L)}_{ux/w}\otimes  \calF^{(1,M)}_{vw/y}\otimes  \calF^{(0,1)}_{-y}.\nonumber
\end{align}

\begin{figure}
\begin{center}
\begin{picture}(60,180)(80,-70)\setlength{\unitlength}{1.mm}

\put(0,0){\vector(0,1){5}}\put(0,5){\line(0,1){5}}

\put(-10,0){\vector(1,0){5}}\put(-5,0){\line(1,0){5}}
\put(-10,10){\vector(1,0){5}}\put(-5,10){\line(1,0){5}}

\put(0,0){\vector(1,-1){5}}\put(5,-5){\line(1,-1){5}}
\put(0,10){\vector(1,1){5}}\put(5,15){\line(1,1){5}}

\put(10,-10){\vector(1,0){5}}\put(15,-10){\line(1,0){5}}
\put(10,20){\vector(1,0){5}}\put(15,20){\line(1,0){5}}

\put(10,-20){\vector(0,1){5}}\put(10,-15){\line(0,1){5}}
\put(10,20){\vector(0,1){5}}\put(10,25){\line(0,1){5}}




\put(-20,12){$\scriptstyle (1,0),v$}
\put(-20,-3){$\scriptstyle (1,0),u$}

\put(-7,17){$\scriptstyle (1,1),vw$}
\put(-8,-9){$\scriptstyle (1,-1),u/w$}



\put(18,16){$\scriptstyle (1,0),vw/y$}
\put(18,-8){$\scriptstyle (1,0),ux/w$}

\put(2,5){$\scriptstyle (0,1),-w$}

\put(12,29){$\scriptstyle (0,1),-y$}

\put(12,-21){$\scriptstyle (0,1),-x$}



\end{picture}
\caption{Six point operator.}
\end{center}
\end{figure}
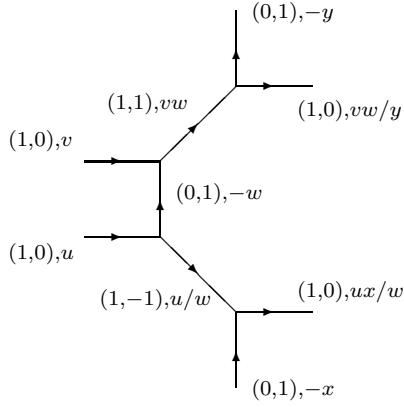

For any $P_\lambda\otimes \alpha\otimes \beta \in
\calF^{(0,1)}_{-x}\otimes \calF^{(1,0)}_u\otimes \calF^{(1,0)}_v$,
we have
\begin{align}
&
\Phi\left[{ (1,0),ux/w;(1,0),v w/y;(0,1),-y \atop
(0,1),-x; (1,0),u;(1,0),v} \right](P_\lambda\otimes \alpha\otimes \beta)\nonumber\\
 &=
 \sum_{\mu,\nu}
  {1\over \langle P_{\mu},P_{\mu}\rangle_{q,t}}\nonumber \\
  &\times 
    \Phi_{\lambda}
    \left[{(1,0),ux/w
 \atop
 (0,1),-x;(1,-1),u/w }\right]
 \Phi^* _{\mu}\left[{(1,-1),u/w;(0,1),-w
 \atop
 (1,0),u }\right]
 (\alpha)\\
 &
 \otimes
  \Phi^* _{\nu}\left[{(1,0),vw/y;(0,1),-y
 \atop
 (1,1),vw }\right]
   \Phi_{\mu}\left[{(1,1),vw
 \atop
 (0,1),-w;(1,0),v }\right]
( \beta)\otimes Q_\nu\nonumber\\
 &=
 \sum_{\mu,\nu}
    {     q^{n(\lambda')} (ux/w)^{|\lambda|}
   q^{n(\nu')} (qy/vw)^{|\nu|} 
\over c_\lambda c_\nu }  
   {    
   (q^{1/2}t^{-1/2})^{-|\mu|}(v/u)^{|\mu|}f_\mu^{-1}
\over N_{\mu,\mu}(1) }  
  \nonumber \\
 &
\times \left(
:\tPhi_\emptyset(-x) \eta_{\lambda}(-x):
:\tPhis_\emptyset(-w) \xi_{\mu}(-w): 
 \alpha\right)\nonumber\\
&
\otimes \left(
:\tPhis_\emptyset(-y) \xi_{\nu}(-y): 
:\tPhi_\emptyset(-w) \eta_{\mu}(-w):
 \beta\right)\otimes Q_\nu,\nonumber
\end{align}

Restricting this six point operator, introduce the four point operator
\begin{align}
&
\Phi\left[{ (1,0),ux/w;(1,0),v w/y \atop
(1,0),u;(1,0),v} \right]
 :
\calF^{(1,0)}_{u}\otimes \calF^{(1,0)}_{v}
\longrightarrow
\calF^{(1,0)}_{ux/w}\otimes  \calF^{(1,0)}_{vw/y},\nonumber
\end{align}
defined by specifying the action
on any $\alpha\otimes \beta \in
 \calF^{(1,0)}_u\otimes \calF^{(1,0)}_v$ as
\begin{align}
&\Phi\left[{ (1,0),ux/w;(1,0),v w/y \atop
(1,0),u;(1,0),v} \right](\alpha\otimes\beta)\\
&=
{\rm id}\otimes {\rm id}\otimes \langle P_\emptyset,  \bullet \rangle _{q,t}\circ
\Phi\left[{ (1,0),ux/w;(1,0),v w/y;(0,1),-y \atop
(0,1),-x; (1,0),u;(1,0),v} \right](P_\emptyset \otimes \alpha\otimes \beta) \nonumber\\
 &=
 \sum_{\mu}
    {    
   (q^{1/2}t^{-1/2})^{-|\mu|}(v/u)^{|\mu|}f_\mu^{-1}
\over N_{\mu,\mu}(1) }  
  \nonumber \\
 &
\times \left(
\tPhi_\emptyset(-x) 
:\tPhis_\emptyset(-w) \xi_{\mu}(-w): 
 \alpha \right)
\otimes \left(
\tPhis_\emptyset(-y)
:\tPhi_\emptyset(-w) \eta_{\mu}(-w):
 \beta\right),\nonumber
\end{align}
where we have used the shorthand notation $ \langle P_\emptyset,  \bullet \rangle _{q,t} \circ Q_\nu=
\langle P_\emptyset,  Q_\nu\rangle _{q,t} $.

Let $u,v,w_1,\ldots,w_{N_c},x_1,\ldots,x_{N_c},y_1,\ldots,y_{N_c}$ be 
a set of indeterminates.
Set 
\begin{align}
u_i=u \prod_{k=1}^{i-1} x_k/w_k,\qquad v_i=v \prod_{k=1}^{i-1} w_k/y_k,
\qquad (i=1,2,\ldots,N_c+1),
\end{align}
for simplicity.

\begin{prp}\label{propos-2}
Set 
\begin{align}
{\bf e}_i=-w_i,\quad 
{\bf e}'_i=-q^{1/2}t^{-1/2}y_i,\quad 
{\bf e}''_i=-q^{-1/2}t^{1/2}x_i,\quad 
\Lambda^{2N_c}= (q^{1/2}t^{-1/2})^{-N}\, {v\over u}\prod_{i=1}^{N_c} {w_i\over y_i}.
\end{align}
We have
\begin{align}
&
\bra{P_\emptyset \otimes P_\emptyset}
\Phi\left[{ (1,0),u_{N_c+1};(1,0),v_{N_c+1} \atop
(1,0),u_{N_c};(1,0),v_{N_c}} \right]
\cdots 
\Phi\left[{ (1,0),u_2;(1,0),v_2 \atop
(1,0),u_1;(1,0),v_1} \right]
\ket{P_\emptyset \otimes P_\emptyset}\label{pro-2}\\
&=
 \prod_{k=1}^{N_c}
{1\over 
{\mathcal G}({\bf e}_k/{\bf e}''_k){\mathcal G}(qt^{-1}{\bf e}_k/{\bf e}'_k)}\cdot
 \prod_{1\leq i<j\leq N_c}
{
 {\mathcal G}({\bf e}_i/{\bf e}_j){\mathcal G}(qt^{-1} {\bf e}_i/{\bf e}_j)
 {\mathcal G}(q t^{-1}{\bf e}''_i/{\bf e}''_j){\mathcal G}( {\bf e}'_i/{\bf e}'_j)
 \over 
{\mathcal G}({\bf e}_i/{\bf e}''_j){\mathcal G}(qt^{-1}{\bf e}''_i/{\bf e}_j)
{\mathcal G}(qt^{-1} {\bf e}_i/{\bf e}'_j){\mathcal G}({\bf e}'_i/{\bf e}_j)}\nonumber \\
&
\times
\sum_{\lambda^{(1)},\cdots ,\lambda^{(N_c)}}
\prod_{k=1}^{N_c}
\Lambda^{2N_c|{\lambda^{(k)}}|}
\prod_{1\leq i,j,\leq N_c}
{N_{\emptyset,\lambda^{(j)}}({\bf e}'_i/{\bf e}_j)
N_{\lambda^{(i)},\emptyset}({\bf e}_i/{\bf e}''_j)
\over N_{\lambda^{(i)},\lambda^{(j)}}({\bf e}_i/{\bf e}_j)}. \nonumber
\end{align}
\end{prp}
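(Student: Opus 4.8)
The plan is to follow the strategy used for Proposition~\ref{propos-1}, reducing the vacuum matrix element to a product of normal-ordering contractions on the two horizontal Fock lines. Using the explicit expression for the four point operator $\Phi\left[{ (1,0),ux/w;(1,0),vw/y \atop (1,0),u;(1,0),v} \right]$ derived above from Theorems~\ref{thm-1} and~\ref{thm-2}, I would first write out the full $N_c$-fold composition acting on $\ket{P_\emptyset\otimes P_\emptyset}$. Each factor contributes one summation variable $\lambda^{(k)}$ together with the scalar prefactor $(q^{1/2}t^{-1/2})^{-|\lambda^{(k)}|}(v_k/u_k)^{|\lambda^{(k)}|}f_{\lambda^{(k)}}^{-1}/N_{\lambda^{(k)},\lambda^{(k)}}(1)$, and an insertion $\tPhi_\emptyset(-x_k):\tPhis_\emptyset(-w_k)\xi_{\lambda^{(k)}}(-w_k):$ on the first (``$u$-line'') tensor factor and $\tPhis_\emptyset(-y_k):\tPhi_\emptyset(-w_k)\eta_{\lambda^{(k)}}(-w_k):$ on the second (``$v$-line'') tensor factor. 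Since the resulting operator is a tensor product of an operator on the $u$-line and an operator on the $v$-line, the matrix element $\bra{P_\emptyset\otimes P_\emptyset}\cdots\ket{P_\emptyset\otimes P_\emptyset}$ factorizes into a product of two vacuum expectation values, one for each line, times the product of scalar prefactors and the instanton fugacity $\prod_k\Lambda^{2N_c|\lambda^{(k)}|}$ that emerges after the reparametrization.

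The core computation is the evaluation of each of the two vacuum expectation values by normal ordering all the vertex operators and invoking the contraction formula~(\ref{ccp}) already employed in Proposition~\ref{propos-1}. On each line the contractions split into three types: (i) the pairings of the $\xi_{\lambda^{(i)}}$-- and $\eta_{\lambda^{(i)}}$--dressed operators among themselves, which by the same manipulation as in the pure case produce the Nekrasov factors $N_{\lambda^{(i)},\lambda^{(j)}}({\bf e}_i/{\bf e}_j)$ in the denominator together with the $\mathcal{G}$--factors $\mathcal{G}({\bf e}_i/{\bf e}_j)\mathcal{G}(qt^{-1}{\bf e}_i/{\bf e}_j)$; (ii) the pairings of the extra vertical legs $\tPhi_\emptyset(-x_k)$ and $\tPhis_\emptyset(-y_k)$ with the dressed operators, which produce the matter factors $N_{\lambda^{(i)},\emptyset}({\bf e}_i/{\bf e}''_j)$ and $N_{\emptyset,\lambda^{(j)}}({\bf e}'_i/{\bf e}_j)$ after the identifications ${\bf e}''_i=-q^{-1/2}t^{1/2}x_i$ and ${\bf e}'_i=-q^{1/2}t^{-1/2}y_i$; and (iii) the purely empty-partition contractions of the vertical legs with one another, which contribute the remaining $\mathcal{G}$--factors, including the single-index products $\mathcal{G}({\bf e}_k/{\bf e}''_k)^{-1}\mathcal{G}(qt^{-1}{\bf e}_k/{\bf e}'_k)^{-1}$.

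With all contractions in hand I would collect the partition-independent pieces into the perturbative prefactor and verify they assemble precisely into the displayed product of $\mathcal{G}$--functions, and then collect the $\lambda^{(k)}$-dependent pieces, checking that the scalar prefactors $(q^{1/2}t^{-1/2})^{-|\lambda^{(k)}|}(v_k/u_k)^{|\lambda^{(k)}|}f_{\lambda^{(k)}}^{-1}$ together with the self-contraction $1/N_{\lambda^{(k)},\lambda^{(k)}}(1)$ and the telescoping products $u_i=u\prod_{k<i}x_k/w_k$, $v_i=v\prod_{k<i}w_k/y_k$ combine into $\prod_k\Lambda^{2N_c|\lambda^{(k)}|}$ with the stated $\Lambda^{2N_c}=(q^{1/2}t^{-1/2})^{-N}(v/u)\prod_i w_i/y_i$. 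This last bookkeeping, matching the framing factors $f_\lambda$ and the overall powers of $q^{1/2}t^{-1/2}$ against the definition (\ref{f-lam}) and the structure of $N_{\lambda,\mu}$ in (\ref{Nlammu}), identifies the instanton sum with $Z^{\rm inst}_m$ for $N_f=2N_c$ flavours.

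I expect the main obstacle to be steps (ii)--(iii): disentangling the many mixed contractions coming from the $2N_c$ vertical matter legs and assigning each to the correct $N_{\lambda,\emptyset}$, $N_{\emptyset,\lambda}$, or $\mathcal{G}$ factor \emph{with the correct argument}. Unlike the pure case, where only the $N_{\lambda^{(i)},\lambda^{(j)}}$ denominators appear, here one must track the precise shifts by $q^{\pm1/2}t^{\mp1/2}$ hidden in the definitions of ${\bf e}'_i$ and ${\bf e}''_i$ so that the numerator matter factors emerge with arguments ${\bf e}'_i/{\bf e}_j$ and ${\bf e}_i/{\bf e}''_j$ exactly as claimed; keeping the $\mathcal{G}$-factor arguments ($\mathcal{G}(u)$ versus $\mathcal{G}(qt^{-1}u)$) correctly paired is the most error-prone part of the argument.
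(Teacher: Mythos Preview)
Your proposal is correct and follows exactly the approach the paper intends: the paper explicitly omits the proof, saying it ``goes in a parallel way as the one for Proposition~\ref{propos-1},'' and your outline---factorize into two vacuum expectation values on the $u$- and $v$-lines, normal-order, and sort contractions into dressed--dressed, leg--dressed, and leg--leg pieces---is precisely that parallel. One small correction: the contraction identities you need are those of Proposition~\ref{Phi-Phi} (ultimately Lemma~\ref{lem-GN}), not formula~(\ref{ccp}), which was only used upstream to convert $q^{2n(\lambda')}/c_\lambda^2$ into $1/N_{\lambda,\lambda}(1)$ in the scalar prefactor.
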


Since our proof of this goes in a parallel way as the one for  Proposition \ref{propos-1}, we omit it.



\section{Proofs of Theorems \ref{thm-1} and \ref{thm-2}}\label{section-6}
\subsection{Some formulas concerning $A^\pm_{\lambda,i},B^\pm_\lambda(z)$}

\begin{lem}
Let 
$c_\lambda,c'_\lambda,A^+_{\lambda,i},A^-_{\lambda,i}$ be as in (\ref{c-lam}),
(\ref{A+}), (\ref{A-}). 
We have
\begin{align}
&
{c'_{\lambda+{\bf 1}_i} \over c'_\lambda}
{c_\lambda \over c_{\lambda+{\bf 1}_i}} A^+_{\lambda,i}=-
q A^-_{\lambda+{\bf 1}_i,i},\qquad 
{c'_{\lambda-{\bf 1}_i} \over c'_\lambda}
{c_\lambda \over c_{\lambda-{\bf 1}_i}} A^-_{\lambda,i}=-
q^{-1}A^+_{\lambda-{\bf 1}_i,i}. \label{ccA=}
\end{align}
Hence the action of $\calU$ is written in terms of the basis $(Q_\lambda)$ as
\begin{align}
&
\gamma Q_\lambda=Q_\lambda,\\
&
x^+(z) Q_\lambda=-
\sum_{i=1}^{\ell(\lambda)+1} 
q A^-_{\lambda+{\bf 1}_i,i}\,
\delta(q^{\lambda_i}t^{-i+1}u/z) 
Q_{\lambda+{\bf 1}_i},\\
&
x^-(z) Q_\lambda=
-q^{1/2}t^{-1/2}
\sum_{i=1}^{\ell(\lambda)} 
q^{-1} A^+_{\lambda-{\bf 1}_i,i}\,
\delta(q^{\lambda_i-1}t^{-i+1}u/z)
Q_{\lambda-{\bf 1}_i},\\
&
\psi^+(z)Q_\lambda=
q^{1/2}t^{-1/2}
B^+_\lambda(u/z)Q_\lambda,\\
&
\psi^-(z)Q_\lambda=
q^{-1/2}t^{1/2} \,B^-_\lambda(z/u)Q_\lambda.
\end{align}
\end{lem}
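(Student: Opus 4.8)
The plan is to split the statement into the two product identities (\ref{ccA=}) and the reformulation of the $\calU$-action in the basis $(Q_\lambda)$, and to observe that the latter is a purely formal consequence of the former. Since $Q_\lambda=P_\lambda/\langle P_\lambda,P_\lambda\rangle_{q,t}$ with $\langle P_\lambda,P_\lambda\rangle_{q,t}=c'_\lambda/c_\lambda$, I would first note that $\gamma^{1/2}$ and $\psi^\pm(z)$ act diagonally and that their eigenvalues $B^\pm_\lambda$ do not involve $c_\lambda$, so those three lines transcribe verbatim. For $x^+(z)$ I would substitute $P_\lambda=(c'_\lambda/c_\lambda)Q_\lambda$ and $P_{\lambda+{\bf 1}_i}=(c'_{\lambda+{\bf 1}_i}/c_{\lambda+{\bf 1}_i})Q_{\lambda+{\bf 1}_i}$ into the action formula of the level $(0,1)$ representation and divide through by $c'_\lambda/c_\lambda$; the coefficient of $Q_{\lambda+{\bf 1}_i}$ then becomes $A^+_{\lambda,i}\,(c'_{\lambda+{\bf 1}_i}/c'_\lambda)(c_\lambda/c_{\lambda+{\bf 1}_i})$, which by the first identity of (\ref{ccA=}) is exactly $-q\,A^-_{\lambda+{\bf 1}_i,i}$. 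The $x^-(z)$ case is identical and uses the second identity. Thus everything reduces to (\ref{ccA=}).

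Next I would observe that the two identities of (\ref{ccA=}) are equivalent, so only one needs a direct proof. Replacing $\lambda$ by $\lambda-{\bf 1}_i$ in the first identity (so that the added box recovers $\lambda$) and solving for $A^-_{\lambda,i}$ produces precisely the second identity. Hence it suffices to establish the first, namely $(c'_{\lambda+{\bf 1}_i}/c'_\lambda)(c_\lambda/c_{\lambda+{\bf 1}_i})\,A^+_{\lambda,i}=-q\,A^-_{\lambda+{\bf 1}_i,i}$.

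To prove this remaining identity I would compute the norm ratio $(c'_{\lambda+{\bf 1}_i}/c'_\lambda)(c_\lambda/c_{\lambda+{\bf 1}_i})$ explicitly as a finite product and compare it with $-q\,A^-_{\lambda+{\bf 1}_i,i}/A^+_{\lambda,i}$. Adding the box $s=(i,m)$ with $m=\lambda_i+1$ changes the arm only for the boxes of row $i$ in columns $1,\dots,\lambda_i$ (their arm increases by one since $\lambda_i\mapsto\lambda_i+1$) and the leg only for the boxes of column $m$ in rows $1,\dots,\lambda'_m$ (their leg increases by one since $\lambda'_m\mapsto\lambda'_m+1$), while $s$ itself contributes one new factor to each of $c$ and $c'$. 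Collecting these through the definitions (\ref{c-lam}) of $c_\lambda,c'_\lambda$ expresses each ratio as one corner factor times a product over row $i$ times a product over column $m$. The product over row $i$ is indexed by columns $b$ but its $t$-exponent depends on the column height $\lambda'_b$; here the key manoeuvre is to group the columns $b$ by the value $k=\lambda'_b$, which is constant on each interval $\lambda_{k+1}<b\le\lambda_k$. On each such interval the $q$-exponents run through consecutive integers, so the product telescopes to its endpoints, re-expressing the column-indexed data in the row variables $\lambda_k$. After this regrouping, the corner factor together with the column-$m$ product should reproduce the row-$(<i)$ product defining $A^+_{\lambda,i}$, and the telescoped row-$i$ product should reproduce the row-$(>i)$ product defining $A^-_{\lambda+{\bf 1}_i,i}$, leaving the overall power of $q$ and the prefactors $(1-t)$, $(1-t^{-1})$ to combine into the claimed factor $-q$.

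The main obstacle will be precisely this bookkeeping in the last step: tracking exactly which boxes change arm or leg, handling the boundary rows where $\lambda_{k+1}=\lambda_k$ (so some intervals are empty), and carrying out the staircase regrouping that converts column-indexed products into row-indexed ones without index-shift or sign errors. Once the regrouping is set up carefully, the final comparison is a routine, if lengthy, matching of rational factors.
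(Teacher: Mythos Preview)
Your proposal is correct and follows essentially the same route as the paper. The paper computes the ratios $c_{\lambda+{\bf 1}_k}/c_\lambda$ and $c'_{\lambda+{\bf 1}_k}/c'_\lambda$ directly as row-indexed products (the outcome of the staircase regrouping you describe), records a finite-product rewriting of $A^-_{\lambda,k}$, and then matches; your reduction of the $Q_\lambda$-action and of the second identity to the first is exactly the formal reasoning the paper leaves implicit.
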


\begin{proof}
{}From the definitions of  $c_\lambda,c'_\lambda$, it immediately follows that
\begin{align*}
{c_{\lambda+{\bf 1}_k} \over c_\lambda}
%
&=
(1-q^{\lambda_k}t^{\ell(\lambda)-k+1})
\prod_{i=1}^{k-1}
{1-q^{\lambda_i-\lambda_k-1}t^{k-i+1}\over 1-q^{\lambda_i-\lambda_k-1}t^{k-i}}
\prod_{j=k+1}^{\ell(\lambda)}
{1-q^{\lambda_k-\lambda_{j}}t^{j-k}\over 1-q^{\lambda_k-\lambda_{j}}t^{j-k+1}},\\
%
{c'_{\lambda+{\bf 1}_k} \over c'_\lambda}
%
&=
(1-q^{\lambda_k+1}t^{\ell(\lambda)-k})
\prod_{i=1}^{k-1}
{1-q^{\lambda_i-\lambda_k}t^{k-i}\over 1-q^{\lambda_i-\lambda_k}t^{k-i-1}}
\prod_{j=k+1}^{\ell(\lambda)}
{1-q^{\lambda_k-\lambda_{j}+1}t^{j-k-1}\over 1-q^{\lambda_k-\lambda_{j}+1}t^{j-k}}.\\
\end{align*}
Noting that 
\begin{align*}
A^-_{\lambda,k}=
(1-t^{-1})
{
1-q^{-\lambda_k}t^{-\ell(\lambda)+k} \over 
1-q^{-\lambda_k+1}t^{-\ell(\lambda)+k-1}}
\prod_{j=k+1}^{\ell(\lambda)}
{1-q^{-\lambda_k+\lambda_{j}}t^{-j+k+1}\over 1-q^{-\lambda_k+\lambda_{j}}t^{-j+k}}
{1-q^{-\lambda_k+\lambda_{j}+1}t^{-j+k-1}\over 
1-q^{-\lambda_k+\lambda_{j}+1}t^{-j+k}},
\end{align*}
one obtains (\ref{ccA=}).
\end{proof}


\begin{lem}\label{phi+-eta2}
Let $B^\pm_\lambda(z)$  be as in (\ref{B+}), (\ref{B-}). 
We have
\begin{align}
&\prod_{i=1}^{\ell(\lambda)}\prod_{j=1}^{\lambda_i}
g(q^{j-1}t^{-i+1}v/z)^{-1} =
{1-v/z\over 1-q^{-1}tv/z}B^+(v/z), \label{*shiki-1}\\
&
\prod_{i=1}^{\ell(\lambda)}\prod_{j=1}^{\lambda_i}
g(q^{-j+1}t^{i-1}v/z) =
{1-z/v\over 
1-qt^{-1}z/v}
B^-_\lambda(z/v), \label{*shiki-2}
\end{align}
where 
$g(z)$ is given in (\ref{g}).
\end{lem}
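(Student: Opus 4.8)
The plan is to prove the first identity (\ref{*shiki-1}) by induction on $|\lambda|$, and then to deduce the second (\ref{*shiki-2}) from it by a symmetry substitution, so that no essentially new calculation is needed. Throughout I set $w=v/z$ and read each box $(i,j)\in\lambda$ as contributing the argument $q^{j-1}t^{-i+1}w$, so that (\ref{*shiki-1}) becomes
$$
\prod_{(i,j)\in\lambda} g(q^{j-1}t^{-i+1}w)^{-1}=\frac{1-w}{1-q^{-1}tw}\,B^+_\lambda(w).
$$
Both sides are rational in $q,t,w$: the left side is a genuinely finite product, while the infinite product defining $B^+_\lambda$ in (\ref{B+}) truncates, since every factor with index $i>\ell(\lambda)$ equals $1$. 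A direct route would telescope first over the columns $j$ in a fixed row and then over the rows $i$, but I prefer induction, which localizes the whole computation to the effect of a single box.

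For the base case $\lambda=\emptyset$ the left side is the empty product $1$, while $B^+_\emptyset(w)=\tfrac{1-q^{-1}tw}{1-w}$ because all product factors in (\ref{B+}) collapse, so the right side equals $1$ as well. For the inductive step, write $\lambda=\mu+{\bf 1}_k$ by removing a corner box from row $k$, so that $\mu_k<\mu_{k-1}$. The new box sits at $(k,\mu_k+1)$, hence the left side is multiplied by exactly $g(q^{\mu_k}t^{-k+1}w)^{-1}$, and it suffices to prove
$$
\frac{B^+_{\mu+{\bf 1}_k}(w)}{B^+_\mu(w)}=g(q^{\mu_k}t^{-k+1}w)^{-1}.
$$

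Only finitely many factors of $B^+$ depend on $\mu_k$: in the product (\ref{B+}) the exponent $\mu_k$ occurs at index $i=k$ (as $\lambda_i$) and at index $i=k-1$ (as $\lambda_{i+1}$), contributing four binomials, and for $k=1$ it also occurs in the prefactor $\tfrac{1-q^{\mu_1-1}tw}{1-q^{\mu_1}w}$. Replacing $\mu_k$ by $\mu_k+1$ in these factors and cancelling the one common term produces a ratio of six binomials; writing $g(x)^{-1}=\tfrac{(1-q^{-1}x)(1-tx)(1-qt^{-1}x)}{(1-qx)(1-t^{-1}x)(1-q^{-1}tx)}$ with $x=q^{\mu_k}t^{-k+1}w$, these six binomials are exactly the numerator and denominator of $g(x)^{-1}$. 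The one point requiring genuine care, and the real obstacle, is the bookkeeping at $k=1$: there the prefactor changes in addition to the $i=1$ factor, and one must check that its variation supplies precisely the contribution that the (absent) $i=0$ term of the product would have given, so that the same six-factor identity holds uniformly in $k$.

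Finally, (\ref{*shiki-2}) follows by symmetry. Inspecting (\ref{g}) shows that the substitution $q\to q^{-1}$, $t\to t^{-1}$ turns $g(z)$ into $g(z)^{-1}$, and comparing (\ref{B+}) with (\ref{B-}) shows that the same substitution turns $B^+_\lambda$ into $B^-_\lambda$. Applying $q\to q^{-1}$, $t\to t^{-1}$ together with $v/z\leftrightarrow z/v$ to the already proven identity (\ref{*shiki-1}) sends its summand $g(q^{j-1}t^{-i+1}v/z)^{-1}$ to $g(q^{-j+1}t^{i-1}z/v)$ and its right side to $\tfrac{1-z/v}{1-qt^{-1}z/v}\,B^-_\lambda(z/v)$, which is precisely (\ref{*shiki-2}). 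Thus the second identity needs no separate induction.
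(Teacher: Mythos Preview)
Your argument is correct. The paper states this lemma without proof, so there is no approach to compare; your induction on $|\lambda|$ for (\ref{*shiki-1}) together with the $(q,t)\mapsto(q^{-1},t^{-1})$ symmetry for (\ref{*shiki-2}) is a clean and complete route. The inductive ratio $B^+_{\mu+{\bf 1}_k}(w)/B^+_\mu(w)=g(q^{\mu_k}t^{-k+1}w)^{-1}$ checks out both for $k\ge 2$ (two product indices $i=k,k-1$ change) and for $k=1$ (prefactor plus the $i=1$ term), exactly as you describe.

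One remark on the symmetry step: your substitution $q\to q^{-1}$, $t\to t^{-1}$, $v/z\leftrightarrow z/v$ indeed sends the left side of (\ref{*shiki-1}) to $\prod_{i,j} g(q^{-j+1}t^{i-1}z/v)$, with argument $z/v$ rather than the $v/z$ printed in (\ref{*shiki-2}). That is the correct version: checking $\lambda=(1)$ shows the identity as literally printed is false, and the proof of Proposition~\ref{prop-1} (which uses $\varphi^-(q^{-1/4}t^{1/4}z)\,\eta(v)\,\varphi^-(q^{-1/4}t^{1/4}z)^{-1}=g(z/v)\,\eta(v)$) confirms that $z/v$ is what is actually needed. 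So your derivation both proves the lemma and silently corrects a typo in its statement; likewise the missing subscript in $B^+(v/z)$ on the right of (\ref{*shiki-1}) should of course be $B^+_\lambda(v/z)$, as you interpret it.
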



The following will also be needed.
\begin{lem}\label{eta-etalam2}
We have
\begin{align}
&
\prod_{i=1}^{\ell(\lambda)}\prod_{j=1}^{\lambda_i}
f(q^{j-1}t^{-i+1}v/z)
={1-v/z\over 1-t^{-\ell(\lambda)} v/z} 
\prod_{i=1}^{\ell(\lambda)}
{1-q^{\lambda_i}t^{-i}v/z \over 1-q^{\lambda_i} t^{-i+1}v/z},\\
&
\prod_{i=1}^{\ell(\lambda)}\prod_{j=1}^{\lambda_i}
f(q^{-j+1}t^{i-1}z/v)
={1-q t^{-1}z/v\over 1-qt^{\ell(\lambda-1)} z/v} 
\prod_{i=1}^{\ell(\lambda)}
{1-q^{-\lambda_i+1}t^{i-1}z/v \over 1-q^{-\lambda_i+1} t^{i-2}z/v}.
\end{align}
where 
\begin{align}
&f(z)={(1-z)(1-qt^{-1} z)\over (1-qz)(1-t^{-1} z)}.
\end{align}
\end{lem}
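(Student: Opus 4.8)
The plan is to prove both identities by a double telescoping, the key to which is a factorization of $f$ tailored to collapse the product over boxes in a fixed row. Writing $f(z)=\dfrac{(1-z)(1-qt^{-1}z)}{(1-qz)(1-t^{-1}z)}$, one observes that for any parameter $w$
\begin{align*}
f(q^{j-1}w)=\frac{1-q^{j-1}w}{1-q^{j}w}\cdot\frac{1-q^{j}t^{-1}w}{1-q^{j-1}t^{-1}w}.
\end{align*}
This is the whole engine of the proof: the first fraction telescopes in $j$ against the $1-q^{j}w$ of the next box, and the second telescopes the other way.

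For the first identity I would fix $i$, set $w=t^{-i+1}v/z$, and multiply the displayed factorization over $j=1,\dots,\lambda_i$. The two telescoping products collapse to
\begin{align*}
\prod_{j=1}^{\lambda_i}f(q^{j-1}t^{-i+1}v/z)=\frac{1-t^{-i+1}v/z}{1-q^{\lambda_i}t^{-i+1}v/z}\cdot\frac{1-q^{\lambda_i}t^{-i}v/z}{1-t^{-i}v/z}.
\end{align*}
Then I would multiply over $i=1,\dots,\ell(\lambda)$. The factors $\dfrac{1-t^{-i+1}v/z}{1-t^{-i}v/z}$ telescope once more, leaving $\dfrac{1-v/z}{1-t^{-\ell(\lambda)}v/z}$, while the remaining factors $\dfrac{1-q^{\lambda_i}t^{-i}v/z}{1-q^{\lambda_i}t^{-i+1}v/z}$ are precisely the product on the right-hand side. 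This gives the first formula.

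For the second identity I would run the identical argument with the substitution $w\rightsquigarrow t^{i-1}z/v$ and the arguments $q^{-j+1}$ in place of $q^{j-1}$; concretely $f(q^{-j+1}t^{i-1}z/v)=\dfrac{1-q^{-j+1}t^{i-1}z/v}{1-q^{-j+2}t^{i-1}z/v}\cdot\dfrac{1-q^{-j+2}t^{-1}t^{i-1}z/v}{1-q^{-j+1}t^{-1}t^{i-1}z/v}$, and the $j$-telescoping yields $\dfrac{1-q^{-\lambda_i+1}t^{i-1}z/v}{1-qt^{i-1}z/v}\cdot\dfrac{1-qt^{i-2}z/v}{1-q^{-\lambda_i+1}t^{i-2}z/v}$. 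Multiplying over $i$, the factors $\dfrac{1-qt^{i-2}z/v}{1-qt^{i-1}z/v}$ telescope to $\dfrac{1-qt^{-1}z/v}{1-qt^{\ell(\lambda)-1}z/v}$ (which is what the stated denominator $1-qt^{\ell(\lambda-1)}z/v$ abbreviates), and the surviving factors reproduce the right-hand side.

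There is no conceptual obstacle here: the only real work is careful bookkeeping of the $q$- and $t$-exponents so that the boundary terms of each telescoping are read off correctly, and making sure the two sub-products are not accidentally merged. The main point to double-check is the index shift in the second identity, where the mirrored argument $q^{-j+1}$ reverses the direction of the collapse; once the factorization above is in hand, both formulas fall out immediately.
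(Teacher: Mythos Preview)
Your argument is correct. The paper states this lemma without proof (it is introduced only by ``The following will also be needed''), presumably because the telescoping you wrote down is routine; your factorization $f(q^{j-1}w)=\dfrac{1-q^{j-1}w}{1-q^{j}w}\cdot\dfrac{1-q^{j}t^{-1}w}{1-q^{j-1}t^{-1}w}$ and the two-stage collapse in $j$ then $i$ are exactly the intended mechanism, and you have also correctly read the typo $t^{\ell(\lambda-1)}$ as $t^{\ell(\lambda)-1}$.
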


\begin{lem}\label{delta}
We have
\begin{align}
&
{q^{n(\lambda')}\over c_\lambda}
\left(
{1\over 1-t^{-\ell(\lambda)} v/z} 
\prod_{i=1}^{\ell(\lambda)}
{1-q^{\lambda_i}t^{-i}v/z \over 1-q^{\lambda_i} t^{-i+1}v/z}+
{z\over v} 
{1\over 1-t^{\ell(\lambda)} z/v} 
\prod_{i=1}^{\ell(\lambda)}
{1-q^{-\lambda_i}t^{i}z/v \over 1-q^{-\lambda_i} t^{i-1}z/v}
 \right)\\
 &=
 \sum_{i=1}^{\ell(\lambda)+1}
 {q^{n((\lambda+{\bf 1}_i)')}\over c_{\lambda+{\bf 1}_i}}
 A^+_{\lambda,i} \, \delta(q^{\lambda_i}t^{-i+1}v/z).\nonumber
\end{align}
\end{lem}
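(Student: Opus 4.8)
The statement asserts an equality between a sum of two rational functions and a finite sum of formal delta functions $\delta(w)=\sum_{n\in\bbZ}w^n$, so the natural approach is a formal partial--fraction (residue) computation in one variable. First I would set $w=v/z$ and read the two summands on the left as expansions of a single rational function. Writing
\begin{align}
F(w)={1\over 1-t^{-\ell(\lambda)}w}\prod_{i=1}^{\ell(\lambda)}{1-q^{\lambda_i}t^{-i}w\over 1-q^{\lambda_i}t^{-i+1}w},\nonumber
\end{align}
the first summand is exactly the expansion $[F]_+$ of $F$ in non-negative powers of $w$ (around $w=0$). For the second summand, substituting $z/v=1/w$ and clearing denominators produces the rational function
\begin{align}
G(w)={1\over w-t^{\ell(\lambda)}}\prod_{i=1}^{\ell(\lambda)}{w-q^{-\lambda_i}t^{i}\over w-q^{-\lambda_i}t^{i-1}},\nonumber
\end{align}
expanded in non-negative powers of $1/w$ (around $w=\infty$), i.e. $[G]_-$. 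Comparing the factorizations of the elementary factors shows the algebraic identity $G(w)=-F(w)$, so the bracket on the left-hand side equals $[F]_+-[F]_-$.

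Next I would invoke the elementary principle that for a proper rational function (numerator degree less than denominator degree) with only simple poles, $[F]_+-[F]_-=-\sum_a a^{-1}(\mathrm{Res}_{w=a}F)\,\delta(w/a)$; this follows term by term from $\bigl[{1\over w-a}\bigr]_+-\bigl[{1\over w-a}\bigr]_-=-a^{-1}\delta(w/a)$. Our $F$ is proper, and (treating $q,t$ as independent indeterminates) has simple poles only at $a_i=q^{-\lambda_i}t^{i-1}$ for $i=1,\dots,\ell(\lambda)+1$, where the value $i=\ell(\lambda)+1$ gives $a_{\ell(\lambda)+1}=t^{\ell(\lambda)}$, the pole of the prefactor ${1/(1-t^{-\ell(\lambda)}w)}$. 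Since $\delta(w/a_i)=\delta(q^{\lambda_i}t^{-i+1}v/z)$, the supports match the right-hand side exactly. I would also record here the compatibility of the vanishing: the pole $a_i$ is cancelled by a numerator zero precisely when $\lambda_{i-1}=\lambda_i$, which is exactly the case $A^+_{\lambda,i}=0$, so the two sides carry delta functions at the same positions.

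It then remains to match coefficients. Pulling out the common prefactor $q^{n(\lambda')}/c_\lambda$, the claim reduces to the identity
\begin{align}
-a_i^{-1}\,\mathrm{Res}_{w=a_i}F
=q^{\,n((\lambda+{\bf 1}_i)')-n(\lambda')}\,{c_\lambda\over c_{\lambda+{\bf 1}_i}}\,A^+_{\lambda,i}.\nonumber
\end{align}
The exponent is immediate: adding the box $(i,\lambda_i+1)$ raises $\lambda'_{\lambda_i+1}$ by one, so $n((\lambda+{\bf 1}_i)')-n(\lambda')=\lambda_i$. The ratio $c_\lambda/c_{\lambda+{\bf 1}_i}$ is supplied by the formula for $c_{\lambda+{\bf 1}_k}/c_\lambda$ established in the preceding lemma. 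The left-hand residue I would compute directly from the product, using $a_i=q^{-\lambda_i}t^{i-1}$, obtaining
\begin{align}
-a_i^{-1}\,\mathrm{Res}_{w=a_i}F
={1-t^{-1}\over 1-q^{-\lambda_i}t^{i-1-\ell(\lambda)}}\prod_{j\ne i}{1-q^{\lambda_j-\lambda_i}t^{i-j-1}\over 1-q^{\lambda_j-\lambda_i}t^{i-j}},\nonumber
\end{align}
and the analogous expression at $i=\ell(\lambda)+1$.

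The main obstacle is then the purely finite verification of this coefficient identity. Substituting the explicit products for $A^+_{\lambda,i}$ and for $c_\lambda/c_{\lambda+{\bf 1}_i}$ and simplifying, both sides become ratios of the same $(1-q^{a}t^{b})$ factors; the residue carries a product over all $j\ne i$, whereas $A^+_{\lambda,i}$ runs only over $j<i$, so the check amounts to a careful bookkeeping of telescoping products in the indices $j<i$ and $j>i$, with the single factor $1-q^{-\lambda_i}t^{i-1-\ell(\lambda)}$ reconciled against the edge term of the $c$-ratio. This rearrangement is routine but is the one genuinely computational step; the remainder of the argument — the recognition of the two summands as $[F]_\pm$ and the delta-function decomposition — is purely structural.
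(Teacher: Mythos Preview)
Your proposal is correct and follows essentially the same route as the paper: the paper's proof consists precisely of the partial-fraction identity giving the delta-function coefficients (your residue computation, written there with exponents $q^{\lambda_i-\lambda_j}t^{j-i}$ rather than their inverses), together with the explicit formula for $c_{\lambda+{\bf 1}_i}/c_\lambda$ and the observation $q^{n((\lambda+{\bf 1}_i)')}/q^{n(\lambda')}=q^{\lambda_i}$. Your framing of the two summands as $[F]_+$ and $-[F]_-$ makes the structure a bit more transparent, but the mathematical content is the same.
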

\begin{proof}
It follows from
\begin{align*}
&
{1\over 1-t^{-\ell(\lambda)} v/z} 
\prod_{i=1}^{\ell(\lambda)}
{1-q^{\lambda_i}t^{-i}v/z \over 1-q^{\lambda_i} t^{-i+1}v/z}+
{z\over v} 
{1\over 1-t^{\ell(\lambda)} z/v} 
\prod_{i=1}^{\ell(\lambda)}
{1-q^{-\lambda_i}t^{i}z/v \over 1-q^{-\lambda_i} t^{i-1}z/v}\\
&
=
\sum_{i=1}^{\ell(\lambda)+1}
q^{\lambda_i}t^{-i+1}
\delta(q^{\lambda_i}t^{-i+1}v/z) 
{1-t\over 1-q^{\lambda_i} t^{\ell(\lambda)-i+1}}
\prod_{j=1}^{i-1}
{1-q^{\lambda_i-\lambda_j}t^{j-i+1} \over 1-q^{\lambda_i-\lambda_j}t^{j-i}} 
\prod_{j=i+1}^{\ell(\lambda)}
{1-q^{\lambda_i-\lambda_j}t^{j-i+1} \over 1-q^{\lambda_i-\lambda_j}t^{j-i}} ,
\end{align*}
and 
\begin{align*}
&
{c_{\lambda+{\bf 1}_i} \over c_\lambda}=
t^{i-1}
(1-q^{\lambda_i}t^{\ell(\lambda)-i+1})
\prod_{j=1}^{i-1}
{1-q^{\lambda_i-\lambda_j+1}t^{j-i-1} \over 
1-q^{\lambda_i-\lambda_j+1}t^{j-i} }
\prod_{j=i+1}^{\ell(\lambda)}
{1-q^{\lambda_i-\lambda_j}t^{j-i}\over 1-q^{\lambda_i-\lambda_j}t^{j-i+1}},\\
&
n(\lambda')=\sum_{i\geq 0} {\lambda_i(\lambda_i-1)\over 2},\qquad 
{q^{n((\lambda+{\bf 1}_i)')} \over q^{n(\lambda')}}=q^{\lambda_i}.
\end{align*}

\end{proof}

\begin{lem}\label{delta-2}
We have 
\begin{align}
&
{q^{n(\lambda')} \over c_\lambda}
\Biggl(
(1-q^{-1} t^{-\ell(\lambda)+1}v/z)
\prod_{i=1}^{\ell(\lambda)}
{1-q^{\lambda_i-1}t^{-i+2}v/z \over 1-q^{\lambda_i-1} t^{-i+1}v/z}\\
&\qquad \qquad 
+
q^{-1}t {v\over z}
(1-q t^{\ell(\lambda)-1}z/v)
\prod_{i=1}^{\ell(\lambda)}
{1-q^{-\lambda_i+1}t^{i-2}z/v \over 1-q^{-\lambda_i+1} t^{i-1}z/v}\Biggr)\nonumber\\
&=
\sum_{i=1}^{\ell(\lambda)}
{q^{n((\lambda-{\bf 1}_i)')} \over c_{\lambda-{\bf 1}_i}}A^-_{\lambda,i}
\delta(q^{\lambda_i-1}t^{-i+1}v/z).\nonumber
\end{align}
\end{lem}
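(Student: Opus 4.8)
The plan is to mirror the proof of Lemma~\ref{delta}, which handled the box-\emph{adding} case; the present statement is its box-\emph{removing} analogue, so I would run the same residue/delta-function mechanism. Write $w=v/z$ and regard the quantity inside the large parentheses as assembled from the single rational function
\[
\tilde R(w)=(1-q^{-1}t^{-\ell(\lambda)+1}w)\prod_{i=1}^{\ell(\lambda)}\frac{1-q^{\lambda_i-1}t^{-i+2}w}{1-q^{\lambda_i-1}t^{-i+1}w}.
\]
First I would check the elementary rational-function identity that the second summand, after the substitution $z/v=1/w$ and clearing denominators, equals $-\tilde R(w)$; this is a one-line computation factoring each binomial as $1-cw=-c(w-c^{-1})$, exactly parallel to the relation $R=-S$ implicit in the proof of Lemma~\ref{delta}. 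Consequently the first summand is the expansion $[\tilde R]_{+}$ of $\tilde R$ at $w=0$, while the second summand is its expansion $-[\tilde R]_{\infty}$ at $w=\infty$, so that the left-hand side is $\frac{q^{n(\lambda')}}{c_\lambda}\big([\tilde R]_{+}-[\tilde R]_{\infty}\big)$.

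Next I would invoke the standard distributional identity for a rational function with simple poles,
\[
[\tilde R]_{+}-[\tilde R]_{\infty}=\sum_{k}\Big(-\tfrac{1}{a_k}\operatorname*{Res}_{w=a_k}\tilde R\Big)\,\delta(w/a_k),
\]
in which the (here linear) polynomial part at infinity cancels between the two expansions. The poles of $\tilde R$ are simple and located at $a_i=q^{-\lambda_i+1}t^{i-1}$ for $i=1,\dots,\ell(\lambda)$, so $\delta(w/a_i)=\delta(q^{\lambda_i-1}t^{-i+1}v/z)$, matching the delta-supports on the right; note there is no pole at $w=t^{\ell(\lambda)}$ (it is replaced by the numerator factor $1-q^{-1}t^{-\ell(\lambda)+1}w$), in agreement with the summation running only up to $i=\ell(\lambda)$ rather than $\ell(\lambda)+1$.

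A convenient feature is that $a_i\,q^{\lambda_i-1}t^{-i+1}=1$, so the normalisation collapses and $-\tfrac1{a_i}\operatorname*{Res}_{w=a_i}\tilde R=N_i(a_i)$, where $N_i(w)$ denotes $\tilde R(w)$ with the offending denominator factor $1-q^{\lambda_i-1}t^{-i+1}w$ deleted. A direct evaluation then gives
\[
N_i(a_i)=(1-q^{-\lambda_i}t^{i-\ell(\lambda)})(1-t)\prod_{j\neq i}\frac{1-q^{\lambda_j-\lambda_i}t^{i-j+1}}{1-q^{\lambda_j-\lambda_i}t^{i-j}}.
\]
Here I would record the sanity check that when $\lambda_i=\lambda_{i+1}$ the numerator factor $1-q^{\lambda_{i+1}-1}t^{-i+1}w$ of $\tilde R$ vanishes precisely at $w=a_i$ and cancels the pole, so that term drops out, consistent with $A^-_{\lambda,i}=0$.

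It remains to match coefficients, i.e.\ to verify
\[
\frac{q^{n(\lambda')}}{c_\lambda}\,N_i(a_i)=\frac{q^{n((\lambda-{\bf 1}_i)')}}{c_{\lambda-{\bf 1}_i}}\,A^-_{\lambda,i}.
\]
For this I would compute $c_{\lambda-{\bf 1}_i}/c_\lambda$ by applying the box-adding ratio $c_{\mu+{\bf 1}_i}/c_\mu$ from the proof of Lemma~\ref{delta} to $\mu=\lambda-{\bf 1}_i$, use $n((\lambda-{\bf 1}_i)')-n(\lambda')=1-\lambda_i$ (from $n(\lambda')=\sum_i\lambda_i(\lambda_i-1)/2$), and compare the resulting finite product of binomials with the product form of $A^-_{\lambda,i}$ in~(\ref{A-}); the $c$-ratio is exactly what converts the symmetric product $\prod_{j\neq i}$ above into the one-sided product over $j>i$ appearing in~(\ref{A-}) and turns the $(1-t)$ prefactor into $(1-t^{-1})$. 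This final identification is the only laborious step, but it is purely a manipulation of products of binomials; I expect the main obstacle to be bookkeeping rather than any conceptual difficulty, the entire argument being structurally identical to Lemma~\ref{delta}.
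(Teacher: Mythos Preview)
Your proposal is correct and is precisely the argument the paper intends: Lemma~\ref{delta-2} is stated in the paper without proof, being the box-removing analogue of Lemma~\ref{delta}, and your plan mirrors that proof step for step (partial-fraction/delta decomposition of the rational function, residue computation, then matching via $c_{\lambda-{\bf 1}_i}/c_\lambda$ and $q^{n((\lambda-{\bf 1}_i)')}/q^{n(\lambda')}=q^{1-\lambda_i}$). Your intermediate checks (that the second summand equals $-\tilde R$ as a rational function, that the linear part cancels, the pole locations, the value $N_i(a_i)$, and the vanishing when $\lambda_i=\lambda_{i+1}$) are all correct.
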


\subsection{Operator product formulas for $\tPhi_\lambda(v)$}

\begin{lem}\label{Phi0}
The operator product formulas between $\tPhi_\emptyset(v)$
and the generators of $\calU$ are
\begin{align}
&
\eta(z) \tPhi_\emptyset (v)={1\over 1-v/z} :\eta(z) \tPhi_\emptyset (v):,\label{Phi0-1}\\
&
\tPhi_\emptyset (v)\eta(z) ={1\over 1-qt^{-1}z/v} :\eta(z) \tPhi_\emptyset (v):,\label{Phi0-2}\\
&
\xi(z) \tPhi_\emptyset (v)=(1-q^{-1/2}t^{1/2}v/z) :\xi(z) \tPhi_\emptyset (v):,\label{Phi0-3}\\
&
\tPhi_\emptyset (v)\xi(z) = (1-q^{1/2}t^{-1/2}z/v) :\xi(z) \tPhi_\emptyset (v):,\label{Phi0-4}\\
&
\varphi^+(q^{1/4}t^{-1/4}z) \tPhi_\emptyset (v)
={1-q^{-1}t v/z \over 1- u/z}  \tPhi_\emptyset (v)\varphi^+(q^{1/4}t^{-1/4}z) ,\label{Phi0-5}\\
&
\varphi^-(q^{-1/4}t^{1/4}z) \tPhi_\emptyset (v)
={1-q t^{-1}z/v \over 1-z/v}  \tPhi_\emptyset (v)\varphi^-(q^{-1/4}t^{1/4}z) .\label{Phi0-6}
\end{align}
\end{lem}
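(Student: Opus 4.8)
The plan is to reduce all six identities to a single standard normal-ordering computation, since every operator appearing here is an exponential of a linear combination of the Heisenberg generators $a_n$. First I would split each of $\tPhi_\emptyset(v)$, $\eta(z)$, $\xi(z)$, $\varphi^+(z)$, $\varphi^-(z)$ as a product $\exp(\text{creation})\exp(\text{annihilation})$, where ``creation'' collects the modes $a_{-n}$ ($n>0$) and ``annihilation'' the modes $a_n$ ($n>0$). Note that $\varphi^+(z)$ is purely annihilation while $\varphi^-(z)$ is purely creation, so these two cases are slightly degenerate but fall under the same scheme.

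The key algebraic input is that $[a_m,a_n]$ is central, namely $[a_m,a_n]=m\frac{1-q^{|m|}}{1-t^{|m|}}\delta_{m+n,0}$ times $a_0=1$. Hence, for $A=\exp\bigl(\sum_{n>0}\alpha_n a_n\bigr)$ (annihilation) and $B=\exp\bigl(\sum_{m>0}\beta_{-m}a_{-m}\bigr)$ (creation), the Baker--Campbell--Hausdorff formula collapses to a single contraction $AB=e^{c}\,:\!AB\!:$ with the scalar
\begin{align}
c=\sum_{n>0}\alpha_n\,\beta_{-n}\,n\frac{1-q^n}{1-t^n}.\nonumber
\end{align}
I would state and verify this once, then apply it mechanically. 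For each formula I identify the annihilation part of the left operator and the creation part of the right operator, substitute into $c$, and observe that the factor $n\frac{1-q^n}{1-t^n}$ precisely cancels the denominators built into $\tPhi_\emptyset$, $\eta$ and $\xi$, leaving a bare geometric series $\sum_{n>0}\frac1n(\cdots)^n=-\log(1-\cdots)$. Exponentiating reproduces the claimed rational prefactor, e.g. $\frac{1}{1-v/z}$ in (\ref{Phi0-1}) and $1-q^{-1/2}t^{1/2}v/z$ in (\ref{Phi0-3}); the ``reversed-order'' cases (\ref{Phi0-2}) and (\ref{Phi0-4}) follow identically after the elementary reduction $\frac{1-t^{-n}}{1-t^n}=-t^{-n}$. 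For the $\varphi^\pm$ formulas (\ref{Phi0-5}), (\ref{Phi0-6}) one contraction appears from each side, and their \emph{difference} yields the numerator/denominator ratio; here the auxiliary shifts $q^{\pm1/4}t^{\mp1/4}z$ in the arguments are exactly what is needed to cancel the $q^{n/4}t^{-n/4}$ prefactors so that the surviving series is geometric in $v/z$.

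The computation is essentially routine, so the only real care is bookkeeping. The two points I would flag are: (i) the sign in the contraction depends on whether the operator being moved is on the creation or annihilation side, so $\varphi^+$ (annihilation on the left) picks up $e^{+c}$ whereas $\varphi^-$ (creation on the left) picks up $e^{-c}$, and getting this backwards would invert the prefactors; and (ii) tracking the half-integer powers of $q$ and $t$ introduced by the shifted arguments, which must combine with the $\frac{1-t^{-n}}{1-t^n}=-t^{-n}$ simplification to produce the stated monomials $q^{-1}tv/z$, $z/v$, etc. Once the master contraction formula is in place, each of the six identities is a one-line substitution, and I would present them as a single table of cases rather than grinding through them separately.
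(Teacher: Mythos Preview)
Your approach is correct and is precisely the standard normal-ordering computation; the paper in fact states this lemma without proof, treating it as a routine consequence of the Heisenberg commutation relations, which is exactly the scheme you outline. One minor remark on phrasing: in the $\varphi^\pm$ cases only a single nontrivial contraction actually occurs (since $\varphi^+$ is purely annihilation and $\varphi^-$ purely creation, one of the two orderings is already normal-ordered), and the numerator/denominator structure comes instead from the two terms in the factor $(1-t^nq^{-n})$ built into $\varphi^\pm$ --- but your computation will produce the right answer regardless.
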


\begin{prp}\label{prop-1}
We have
\begin{align}
&
\varphi^+(q^{1/4}t^{-1/4}z) \tPhi_\lambda(v) \varphi^+(q^{1/4}t^{-1/4}z)^{-1}
=B^+_\lambda(v/z)  \tPhi_\lambda(v),\label{prop-1-1}\\
&
\varphi^-(q^{-1/4}t^{1/4}z) \tPhi_\lambda(v) \varphi^-(q^{-1/4}t^{1/4}z)^{-1}
=B^-_\lambda(z/v)  \tPhi_\lambda(v).\label{prop-1-2}
\end{align}
\end{prp}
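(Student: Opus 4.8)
The plan is to compute the conjugation
$\varphi^+(q^{1/4}t^{-1/4}z)\,\tPhi_\lambda(v)\,\varphi^+(q^{1/4}t^{-1/4}z)^{-1}$
directly from the vertex-operator presentation
$\tPhi_\lambda(v)=\tfrac{q^{n(\lambda')}}{c_\lambda}:\tPhi_\emptyset(v)\eta_\lambda(v):$
of Theorem \ref{thm-1}. The operator $\varphi^+(z)$ is a ``half'' vertex operator, involving only the annihilation modes $a_n$ ($n>0$); consequently its conjugation acts multiplicatively on a normal-ordered product, producing for each constituent factor a scalar equal to the corresponding operator-product contraction, while leaving the normal-ordered expression itself untouched and leaving the scalar prefactor $q^{n(\lambda')}/c_\lambda$ unchanged. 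Thus I would first record that
\[
\varphi^+(q^{1/4}t^{-1/4}z)\,\tPhi_\lambda(v)\,\varphi^+(q^{1/4}t^{-1/4}z)^{-1}
= h_\emptyset(z)\Big(\prod_{i=1}^{\ell(\lambda)}\prod_{j=1}^{\lambda_i}h_{i,j}(z)\Big)\,\tPhi_\lambda(v),
\]
where $h_\emptyset(z)$ is the contraction of $\varphi^+$ with $\tPhi_\emptyset(v)$ and $h_{i,j}(z)$ is its contraction with the single factor $\eta(q^{j-1}t^{-i+1}v)$ occurring inside $\eta_\lambda(v)$.

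The two ingredients are then supplied separately. For $h_\emptyset(z)$ I would invoke (\ref{Phi0-5}) of Lemma \ref{Phi0}, giving $h_\emptyset(z)=\tfrac{1-q^{-1}tv/z}{1-v/z}$. For $h_{i,j}(z)$ I would use the operator product of $\varphi^+$ with $\eta$, which follows from the defining $g$-relations of $\calU$ in the level $(1,N)$ representation (equivalently, from a one-line Heisenberg mode computation), namely $\varphi^+(q^{1/4}t^{-1/4}z)\eta(w)=g(w/z)^{-1}\eta(w)\varphi^+(q^{1/4}t^{-1/4}z)$, so that $h_{i,j}(z)=g(q^{j-1}t^{-i+1}v/z)^{-1}$. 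Feeding these into the display and applying (\ref{*shiki-1}) of Lemma \ref{phi+-eta2}, i.e.\ $\prod_{i,j}g(q^{j-1}t^{-i+1}v/z)^{-1}=\tfrac{1-v/z}{1-q^{-1}tv/z}B^+_\lambda(v/z)$, the prefactor $\tfrac{1-q^{-1}tv/z}{1-v/z}$ cancels exactly and leaves $B^+_\lambda(v/z)$, which is (\ref{prop-1-1}).

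The relation (\ref{prop-1-2}) is proved in the same manner with the roles of creation and annihilation modes exchanged: now $\varphi^-(z)$ involves only the modes $a_{-n}$, its contraction with $\tPhi_\emptyset(v)$ is read off from (\ref{Phi0-6}) as $\tfrac{1-qt^{-1}z/v}{1-z/v}$, and its contraction with $\eta(w)$ is $g(z/w)$ (again from the $g$-relations of $\calU$), so that the box product becomes $\prod_{i,j}g(q^{-j+1}t^{i-1}z/v)$. By (\ref{*shiki-2}) of Lemma \ref{phi+-eta2} this equals $\tfrac{1-z/v}{1-qt^{-1}z/v}B^-_\lambda(z/v)$, and the prefactor cancels once more, yielding $B^-_\lambda(z/v)$.

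The only genuinely delicate point, and the one I would verify most carefully, is the multiplicativity asserted in the first paragraph together with the precise form of the single-box $\varphi^\pm$--$\eta$ contraction: one must check that conjugation by the half-vertex $\varphi^\pm$ produces no cross terms between the distinct factors of $:\tPhi_\emptyset(v)\eta_\lambda(v):$, and that the single-box factor is exactly the $g$-term of Lemma \ref{phi+-eta2} (argument $w/z$ for $\varphi^+$, argument $z/w$ for $\varphi^-$), since it is this matching that makes the prefactors of (\ref{Phi0-5})--(\ref{Phi0-6}) telescope against (\ref{*shiki-1})--(\ref{*shiki-2}). Everything else is bookkeeping of scalar factors.
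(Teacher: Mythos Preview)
Your proposal is correct and follows essentially the same approach as the paper: the paper's proof records the single-$\eta$ conjugation relations $\varphi^+(q^{1/4}t^{-1/4}z)\,\eta(v)\,\varphi^+(q^{1/4}t^{-1/4}z)^{-1}=g(v/z)^{-1}\eta(v)$ and $\varphi^-(q^{-1/4}t^{1/4}z)\,\eta(v)\,\varphi^-(q^{-1/4}t^{1/4}z)^{-1}=g(z/v)\eta(v)$, and then invokes exactly the same two inputs you use, namely Lemma \ref{phi+-eta2} and equations (\ref{Phi0-5})--(\ref{Phi0-6}) of Lemma \ref{Phi0}. Your write-up is simply more explicit about the multiplicativity of the conjugation over the normal-ordered product and about the telescoping of the $\tfrac{1-q^{-1}tv/z}{1-v/z}$ prefactor against the one in (\ref{*shiki-1}), but the argument is the same.
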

\begin{proof}
Note that 
\begin{align*}
&\varphi^+(q^{1/4}t^{-1/4}z) \eta(v) \varphi^+(q^{1/4}t^{-1/4}z)^{-1}
=g(v/z)^{-1}
\eta(v) ,\\
&
\varphi^-(q^{-1/4}t^{1/4}z) \eta(v) \varphi^-(q^{-1/4}t^{1/4}z)^{-1}
=
g(z/v)
 \eta(v) .
\end{align*}
Then (\ref{prop-1-1}), (\ref{prop-1-2}) follow from Lemma \ref{phi+-eta2} and 
(\ref{Phi0-5}), (\ref{Phi0-6}) in Lemma \ref{Phi0}.
\end{proof}

\begin{lem}\label{eta-phi}
We have
\begin{align}
&
\eta(z)\tPhi_\lambda(v)
=
{1\over 1-t^{-\ell(\lambda)} v/z}
\prod_{i=1}^{\ell(\lambda)}
{1-q^{\lambda_i} t^{-i} v/z \over 1-q^{\lambda_i} t^{-i+1} v/z} :\eta(z)\tPhi_\lambda(v):,\label{eta-phi-1}\\
&
 \tPhi_\lambda(v)\eta(z)
=
{1\over 1-q t^{\ell(\lambda)-1} z/v}
\prod_{i=1}^{\ell(\lambda)}
{1-q^{-\lambda_i+1} t^{i-1} z/v \over 1-q^{-\lambda_i+1} t^{i-2} z/v} :\eta(z)\tPhi_\lambda(v):,\label{eta-phi-2}\\
&
B^-_\lambda(z/v)  \tPhi_\lambda (v)\eta(z)
=
{1\over 1-t^{\ell(\lambda)} z/v}
\prod_{i=1}^{\ell(\lambda)}
{1-q^{-\lambda_i} t^{i} z/v \over 1-q^{-\lambda_i} t^{i-1} z/v} :\eta(z)\tPhi_\lambda(v):.\label{eta-phi-3}
\end{align}
\end{lem}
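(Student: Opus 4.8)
The plan is to derive all three identities as instances of Wick's theorem (operator product expansion) applied to the normal-ordered definition $\tPhi_\lambda(v)=\frac{q^{n(\lambda')}}{c_\lambda}:\tPhi_\emptyset(v)\,\eta_\lambda(v):$, in which $\eta_\lambda(v)$ is a normal-ordered product of the currents $\eta$ evaluated at the points $q^{j-1}t^{-i+1}v$ with $1\le i\le\ell(\lambda)$, $1\le j\le\lambda_i$. The only analytic input beyond Lemma~\ref{Phi0} is the self-contraction of two $\eta$-currents. A short Baker--Campbell--Hausdorff computation using the Heisenberg relations of $\calH$ shows that the annihilation part of $\eta(z)$ commutes past the creation part of $\eta(w)$ with exponent $-\sum_{n>0}\frac{(1-t^{-n})(1-q^n)}{n}(w/z)^n$; summing the logarithm gives $\eta(z)\eta(w)=f(w/z):\eta(z)\eta(w):$, with $f$ exactly the function introduced in Lemma~\ref{eta-etalam2}.

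First I would prove (\ref{eta-phi-1}). Since the interior of $\tPhi_\lambda(v)$ is already normal ordered, carrying $\eta(z)$ across it from the left generates only single-contraction factors, namely the product of all pairwise contractions: the factor $\frac{1}{1-v/z}$ for the pair $\eta(z)$--$\tPhi_\emptyset(v)$ from (\ref{Phi0-1}) of Lemma~\ref{Phi0}, together with one factor $f(q^{j-1}t^{-i+1}v/z)$ for each constituent current of $\eta_\lambda(v)$. Feeding $\prod_{i,j}f(q^{j-1}t^{-i+1}v/z)$ into the first identity of Lemma~\ref{eta-etalam2} and cancelling the resulting numerator $(1-v/z)$ against $\frac{1}{1-v/z}$ leaves precisely the right-hand side of (\ref{eta-phi-1}). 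The proof of (\ref{eta-phi-2}) is the mirror image, with $\eta(z)$ carried past $\tPhi_\lambda(v)$ from the right: the $\tPhi_\emptyset$-contraction now contributes $\frac{1}{1-qt^{-1}z/v}$ by (\ref{Phi0-2}), each constituent current yields $f(q^{-j+1}t^{i-1}z/v)$, and the second identity of Lemma~\ref{eta-etalam2} together with the same cancellation produces (\ref{eta-phi-2}).

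Finally, (\ref{eta-phi-3}) follows by multiplying (\ref{eta-phi-2}) through by $B^-_\lambda(z/v)$ and checking that the resulting rational function collapses to $\frac{1}{1-t^{\ell(\lambda)}z/v}\prod_{i=1}^{\ell(\lambda)}\frac{1-q^{-\lambda_i}t^{i}z/v}{1-q^{-\lambda_i}t^{i-1}z/v}$. Concretely this reduces to verifying the finite-product identity $B^-_\lambda(z/v)=\frac{1-qt^{\ell(\lambda)-1}z/v}{1-t^{\ell(\lambda)}z/v}\prod_{i=1}^{\ell(\lambda)}\frac{(1-q^{-\lambda_i}t^{i}z/v)(1-q^{-\lambda_i+1}t^{i-2}z/v)}{(1-q^{-\lambda_i}t^{i-1}z/v)(1-q^{-\lambda_i+1}t^{i-1}z/v)}$.

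I expect this last step to be the only real obstacle: one must expand $B^-_\lambda$ through its explicit infinite product (\ref{B-}) and verify, by telescoping adjacent factors in $i$ and cancelling against the finite products of (\ref{eta-phi-2}), that everything recombines. (Alternatively, one can substitute the form of $B^-_\lambda(z/v)$ supplied by (\ref{*shiki-2}) of Lemma~\ref{phi+-eta2}, reducing the claim to a $g$-versus-$f$ product identity.) Everything else is routine bookkeeping of contraction factors, so the core of the argument is the $\eta$--$\eta$ contraction together with the two summation formulas of Lemma~\ref{eta-etalam2}.
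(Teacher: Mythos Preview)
Your proposal is correct and follows essentially the same route as the paper: the paper also records $\eta(z)\eta(v)=f(v/z):\eta(z)\eta(v):$, applies Lemma~\ref{eta-etalam2} to obtain the contraction of $\eta(z)$ with $\eta_\lambda(v)$ on each side, and then invokes (\ref{Phi0-1}), (\ref{Phi0-2}) of Lemma~\ref{Phi0} to get (\ref{eta-phi-1}) and (\ref{eta-phi-2}). The paper's proof actually stops there and does not spell out (\ref{eta-phi-3}) at all, so your explicit reduction of (\ref{eta-phi-3}) to a finite telescoping identity for $B^-_\lambda(z/v)$ is more than the paper provides.
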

\begin{proof}
We have
$\eta(z)\eta(v)=f(v/z):\eta(z)\eta(v):$. 
Hence from Lemma \ref{eta-etalam2}
\begin{align*}
&
\eta(z) \eta_\lambda(v)=
{1-v/z\over 1-t^{-\ell(\lambda)} v/z} 
\prod_{i=1}^{\ell(\lambda)}
{1-q^{\lambda_i}t^{-i}v/z \over 1-q^{\lambda_i} t^{-i+1}v/z}:\eta(z) \eta_\lambda(v):,\\
&
 \eta_\lambda(v)\eta(z)=
 {1-q t^{-1}z/v\over 1-q t^{\ell(\lambda)-1} z/v} 
\prod_{i=1}^{\ell(\lambda)}
{1-q^{-\lambda_i+1}t^{i-1}z/v \over 1-q^{-\lambda_i+1} t^{i-2}z/v}:\eta(z) \eta_\lambda(v):.
\end{align*}
Then (\ref{eta-phi-1}), (\ref{eta-phi-2}) follow from (\ref{Phi0-1}), (\ref{Phi0-2}) in
Lemma \ref{Phi0}.
\end{proof}

\begin{prp}\label{prop-2}
We have
\begin{align}
\eta(z)\tPhi_\lambda(v)+{z\over v} B^-_\lambda(z/v)\tPhi_\lambda(v) \eta(z)
=
\sum_{i=1}^{\ell(\lambda)+1} A^+_{\lambda,i} \, 
\tPhi_{\lambda+{\bf 1}_i}(v) \delta(q^{\lambda_i}t^{-i+1}v/z).
\end{align}

\end{prp}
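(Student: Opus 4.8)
The plan is to express both operator products appearing on the left-hand side through the single normal-ordered symbol $:\eta(z)\tPhi_\lambda(v):$, and then to recognize the resulting scalar coefficient as the left-hand side of Lemma~\ref{delta}. First I would apply the contraction formula (\ref{eta-phi-1}) to the summand $\eta(z)\tPhi_\lambda(v)$, and the formula (\ref{eta-phi-3}) to the summand $\tfrac{z}{v}B^-_\lambda(z/v)\tPhi_\lambda(v)\eta(z)$. Since both formulas output the \emph{same} operator $:\eta(z)\tPhi_\lambda(v):$ times a rational function of $v/z$, the left-hand side collapses to
\[
\left(\frac{1}{1-t^{-\ell(\lambda)}v/z}\prod_{i=1}^{\ell(\lambda)}\frac{1-q^{\lambda_i}t^{-i}v/z}{1-q^{\lambda_i}t^{-i+1}v/z}+\frac{z}{v}\,\frac{1}{1-t^{\ell(\lambda)}z/v}\prod_{i=1}^{\ell(\lambda)}\frac{1-q^{-\lambda_i}t^{i}z/v}{1-q^{-\lambda_i}t^{i-1}z/v}\right):\eta(z)\tPhi_\lambda(v):.
\]

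Next I would exploit that the symbol $:\eta(z)\tPhi_\lambda(v):$ already carries the normalization prefactor of $\tPhi_\lambda$, namely $:\eta(z)\tPhi_\lambda(v):=\tfrac{q^{n(\lambda')}}{c_\lambda}:\eta(z)\tPhi_\emptyset(v)\eta_\lambda(v):$. Hence the bracketed sum of rational functions, once multiplied by $\tfrac{q^{n(\lambda')}}{c_\lambda}$, is exactly the quantity evaluated in Lemma~\ref{delta}. Invoking that lemma replaces the rational prefactor by the delta-function sum $\sum_{i=1}^{\ell(\lambda)+1}\tfrac{q^{n((\lambda+{\bf 1}_i)')}}{c_{\lambda+{\bf 1}_i}}A^+_{\lambda,i}\,\delta(q^{\lambda_i}t^{-i+1}v/z)$, so the whole left-hand side becomes $\sum_i \tfrac{q^{n((\lambda+{\bf 1}_i)')}}{c_{\lambda+{\bf 1}_i}}A^+_{\lambda,i}\,\delta(q^{\lambda_i}t^{-i+1}v/z):\eta(z)\tPhi_\emptyset(v)\eta_\lambda(v):$.

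The crucial step is then to use each delta to specialize $z=q^{\lambda_i}t^{-i+1}v$ inside the normal-ordered product. Because $\eta_{\lambda+{\bf 1}_i}(v)$ differs from $\eta_\lambda(v)$ precisely by the single factor $\eta(q^{(\lambda_i+1)-1}t^{-i+1}v)=\eta(q^{\lambda_i}t^{-i+1}v)$ contributed by the box added at $(i,\lambda_i+1)$, I would verify that $\delta(q^{\lambda_i}t^{-i+1}v/z):\eta(z)\tPhi_\emptyset(v)\eta_\lambda(v):=\delta(q^{\lambda_i}t^{-i+1}v/z):\tPhi_\emptyset(v)\eta_{\lambda+{\bf 1}_i}(v):$. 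Reinstating the prefactor gives $\tfrac{q^{n((\lambda+{\bf 1}_i)')}}{c_{\lambda+{\bf 1}_i}}:\tPhi_\emptyset(v)\eta_{\lambda+{\bf 1}_i}(v):=\tPhi_{\lambda+{\bf 1}_i}(v)$ by the very definition of $\tPhi_{\lambda+{\bf 1}_i}$, and the sum reproduces the right-hand side.

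The main point demanding care is this last specialization: one must confirm that evaluating $\eta(z)$ at $z=q^{\lambda_i}t^{-i+1}v$ and merging it with the existing $\eta$'s of $\eta_\lambda(v)$ under the normal ordering genuinely yields $\eta_{\lambda+{\bf 1}_i}(v)$ with no spurious contraction pole or zero surviving against the delta. One should also observe, for consistency of the index range, that whenever $\lambda+{\bf 1}_i$ fails to be a partition (i.e.\ $\lambda_i=\lambda_{i-1}$) the coefficient $A^+_{\lambda,i}$ vanishes, so those terms drop out automatically and the effective sum runs only over addable boxes.
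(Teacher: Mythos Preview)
Your proposal is correct and follows essentially the same route as the paper: the paper's proof simply cites Lemma~\ref{delta} together with (\ref{eta-phi-1}) and (\ref{eta-phi-3}) from Lemma~\ref{eta-phi}, which is precisely the combination you spell out. Your additional paragraph making explicit the delta-function specialization $:\eta(z)\tPhi_\emptyset(v)\eta_\lambda(v):\mapsto :\tPhi_\emptyset(v)\eta_{\lambda+{\bf 1}_i}(v):$ is a detail the paper leaves implicit, but it is exactly the intended mechanism.
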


\begin{proof}
It follows from  Lemma \ref{delta} and  (\ref{eta-phi-1}), (\ref{eta-phi-3}) in Lemma \ref{eta-phi}.
\end{proof}


\begin{lem}\label{xi-phi}
We have
\begin{align}
&
\xi(q^{1/2}t^{-1/2}z)  \tPhi_\lambda (v)
=(1-q^{-1}t^{-\ell(\lambda)+1}v/z)
\prod_{i=1}^{\ell(\lambda)} 
{1-q^{\lambda_i-1}t^{-i+2}v/z \over 1-q^{\lambda_i-1}t^{-i+1}v/z }
:\xi(q^{1/2}t^{-1/2}z)  \tPhi_\lambda (v):,\label{xi-phi-1}\\
&
\tPhi_\lambda (v)\xi(q^{1/2}t^{-1/2}z)  
=(1-q t^{\ell(\lambda)-1}z/v)
\prod_{i=1}^{\ell(\lambda)} 
{1-q^{-\lambda_i+1}t^{i-2}z/v \over 1-q^{-\lambda_i+1}t^{i-1}z/v }
:\tPhi_\lambda (v)\xi(q^{1/2}t^{-1/2}z)  :.\label{xi-phi-2}
\end{align}
\end{lem}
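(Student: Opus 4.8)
The plan is to mirror the proof of Lemma~\ref{eta-phi}, replacing the $\eta$--$\eta$ contraction by the mixed $\xi$--$\eta$ contraction and replacing the $\tPhi_\emptyset$--$\eta$ factors by the $\tPhi_\emptyset$--$\xi$ factors already recorded in Lemma~\ref{Phi0}. Recall from Theorem~\ref{thm-1} that $\tPhi_\lambda(v)=\frac{q^{n(\lambda')}}{c_\lambda}:\tPhi_\emptyset(v)\,\eta_\lambda(v):$ with $\eta_\lambda(v)=\,:\prod_{i=1}^{\ell(\lambda)}\prod_{j=1}^{\lambda_i}\eta(q^{j-1}t^{-i+1}v):$. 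Hence, upon normal ordering, $\xi(q^{1/2}t^{-1/2}z)\,\tPhi_\lambda(v)$ equals $:\xi(q^{1/2}t^{-1/2}z)\,\tPhi_\lambda(v):$ multiplied by two scalar factors: the $\tPhi_\emptyset$ contribution from \eqref{Phi0-3}, and a product over the boxes of $\lambda$ of elementary $\xi$--$\eta$ contractions. So the whole computation reduces to one elementary contraction and a telescoping bookkeeping, exactly as in Lemma~\ref{eta-phi}.

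First I would compute the elementary contraction directly from the commutation relations of $\calH$, obtaining
\[
\xi(z)\eta(w)=\frac{(1-q^{-1/2}t^{-1/2}w/z)(1-q^{1/2}t^{1/2}w/z)}{(1-q^{-1/2}t^{1/2}w/z)(1-q^{1/2}t^{-1/2}w/z)}\,:\xi(z)\eta(w):.
\]
Substituting $z\mapsto q^{1/2}t^{-1/2}z$ and $w=q^{j-1}t^{-i+1}v$ turns the prefactor attached to the box $(i,j)$ into
\[
\frac{(1-q^{j-2}t^{-i+1}v/z)(1-q^{j-1}t^{-i+2}v/z)}{(1-q^{j-2}t^{-i+2}v/z)(1-q^{j-1}t^{-i+1}v/z)}.
\]

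The main step is the telescoping. For fixed $i$, the product over $j=1,\dots,\lambda_i$ collapses to $\frac{(1-q^{-1}t^{-i+1}v/z)(1-q^{\lambda_i-1}t^{-i+2}v/z)}{(1-q^{-1}t^{-i+2}v/z)(1-q^{\lambda_i-1}t^{-i+1}v/z)}$. The residual factors $\frac{1-q^{-1}t^{-i+1}v/z}{1-q^{-1}t^{-i+2}v/z}$ then telescope over $i=1,\dots,\ell(\lambda)$ to $\frac{1-q^{-1}t^{-\ell(\lambda)+1}v/z}{1-q^{-1}tv/z}$, leaving the product $\prod_{i=1}^{\ell(\lambda)}\frac{1-q^{\lambda_i-1}t^{-i+2}v/z}{1-q^{\lambda_i-1}t^{-i+1}v/z}$ untouched. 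Multiplying by the $\tPhi_\emptyset$ factor $1-q^{-1}tv/z$ coming from \eqref{Phi0-3} cancels the surviving denominator $1-q^{-1}tv/z$ and produces exactly \eqref{xi-phi-1}. I would package this two-layer telescoping as an auxiliary identity in the spirit of Lemma~\ref{eta-etalam2}. For \eqref{xi-phi-2} the same computation is carried out in the opposite order: one contracts $\eta(w)\xi(z)$ (now a function of $z/w$), combines with \eqref{Phi0-4}, and telescopes identically.

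The main obstacle is purely combinatorial: keeping track of the half-integer shift $q^{1/2}t^{-1/2}$ inside the argument of $\xi$ through the two successive telescoping sums, and confirming that the leftover denominator is precisely the one cancelled by the $\tPhi_\emptyset$ prefactor. There is no conceptual difficulty beyond this, the only risk being an exponent or sign slip in the bookkeeping.
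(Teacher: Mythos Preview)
Your proposal is correct and follows essentially the same line as the paper's proof: compute the elementary $\xi$--$\eta$ contraction, take the product over the boxes of $\lambda$ and telescope, then absorb the $\widetilde{\Phi}_\emptyset$ factor from \eqref{Phi0-3}/\eqref{Phi0-4}. The only cosmetic difference is that the paper recognizes the box-product as the substituted instance $\prod_{i,j} f(q^{-1}t\cdot q^{j-1}t^{-i+1}v/z)^{-1}$ and invokes Lemma~\ref{eta-etalam2} directly, whereas you redo that telescoping by hand; as you yourself note, your computation is precisely ``in the spirit of Lemma~\ref{eta-etalam2}''.
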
 
\begin{proof}
Note that
$\xi(q^{1/2}t^{-1/2}z)\eta(v)=
f(q^{-1}t v/z)^{-1}
:\xi(q^{1/2}t^{-1/2}z)\eta(v):$, and 
$\eta(v) \xi(q^{1/2}t^{-1/2}z)=
f(z/v)^{-1}
:\xi(q^{1/2}t^{-1/2}z)\eta(v):$. 
Thus from Lemma \ref{eta-etalam2}, we have
\begin{align*}
&
\xi(q^{1/2}t^{-1/2}z) \eta_\lambda(v)
={1-q^{-1} t^{-\ell(\lambda)+1}v/z \over 1-q^{-1}t v/z}
\prod_{i=1}^{\ell(\lambda)}
{1-q^{\lambda_i-1}t^{-i+2}v/z \over 1-q^{\lambda_i-1} t^{-i+1}v/z}
:\xi(q^{1/2}t^{-1/2}z) \eta_\lambda(v):,\\
&
 \eta_\lambda(v)\xi(q^{1/2}t^{-1/2}z)
={1-q t^{\ell(\lambda)-1}z/v \over 1-qt^{-1} z/v}
\prod_{i=1}^{\ell(\lambda)}
{1-q^{-\lambda_i+1}t^{i-2}z/v \over 1-q^{-\lambda_i+1} t^{i-1}z/v}
:\xi(q^{1/2}t^{-1/2}z) \eta_\lambda(v):.
\end{align*}
Then (\ref{xi-phi-1}), (\ref{xi-phi-2}) follow from (\ref{Phi0-3}), (\ref{Phi0-4}) in Lemma \ref{Phi0}.
\end{proof}

\begin{prp}\label{prop-3}
We have
\begin{align}
&\xi(q^{1/2}t^{-1/2}z)\tPhi_\lambda(v)+
q^{-1}t {v\over z} \tPhi_\lambda(v)\xi(q^{1/2}t^{-1/2}z)\nonumber\\
&\qquad =
\sum_{i=1}^{\ell(\lambda)} A^-_{\lambda,i} \, 
\tPhi_{\lambda-{\bf 1}_i}(v) \delta(q^{\lambda_i-1}t^{-i+1}v/z)
\varphi^+(q^{1/4}t^{-1/4}z).
\end{align}
\end{prp}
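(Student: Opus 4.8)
The plan is to mirror the proof of Proposition~\ref{prop-2}, the only genuinely new feature being that contracting $\xi$ against the corner box of $\eta_\lambda(v)$ both removes a box and leaves behind a residual $\varphi^+$ factor. First I would bring the two terms on the left-hand side to a common normal-ordered form. Recalling from Theorem~\ref{thm-1} that $\tPhi_\lambda(v)=\frac{q^{n(\lambda')}}{c_\lambda}:\tPhi_\emptyset(v)\eta_\lambda(v):$, set $\Xi_\lambda(z,v):=\,:\xi(q^{1/2}t^{-1/2}z)\tPhi_\emptyset(v)\eta_\lambda(v):$. Then (\ref{xi-phi-1}) and (\ref{xi-phi-2}) of Lemma~\ref{xi-phi} express $\xi(q^{1/2}t^{-1/2}z)\tPhi_\lambda(v)$ and $\tPhi_\lambda(v)\xi(q^{1/2}t^{-1/2}z)$ as explicit rational prefactors times $\frac{q^{n(\lambda')}}{c_\lambda}\Xi_\lambda(z,v)$, so that the whole left-hand side becomes a single scalar rational function of $z$ multiplying $\frac{q^{n(\lambda')}}{c_\lambda}\Xi_\lambda(z,v)$.

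Next I would identify this scalar coefficient. The bracket produced in the previous step, once multiplied by $\frac{q^{n(\lambda')}}{c_\lambda}$, is precisely the left-hand side of Lemma~\ref{delta-2}: the first prefactor supplies the term $(1-q^{-1}t^{-\ell(\lambda)+1}v/z)\prod_i\cdots$, and the $q^{-1}t\,v/z$-weighted second prefactor supplies $q^{-1}t\frac{v}{z}(1-qt^{\ell(\lambda)-1}z/v)\prod_i\cdots$. Hence Lemma~\ref{delta-2} rewrites it as $\sum_{i=1}^{\ell(\lambda)}\frac{q^{n((\lambda-{\bf 1}_i)')}}{c_{\lambda-{\bf 1}_i}}A^-_{\lambda,i}\,\delta(q^{\lambda_i-1}t^{-i+1}v/z)$, so the left-hand side equals $\sum_i\frac{q^{n((\lambda-{\bf 1}_i)')}}{c_{\lambda-{\bf 1}_i}}A^-_{\lambda,i}\,\delta(q^{\lambda_i-1}t^{-i+1}v/z)\,\Xi_\lambda(z,v)$.

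The heart of the argument, and the step I expect to be the main obstacle, is evaluating $\Xi_\lambda(z,v)$ on the support $z=q^{\lambda_i-1}t^{-i+1}v$ of each delta function. There the argument of $\xi(q^{1/2}t^{-1/2}z)$ coincides with that of the corner-box factor $\eta(q^{\lambda_i-1}t^{-i+1}v)$ of $\eta_\lambda(v)$, and I would verify by a direct mode computation, from the definitions of $\xi$, $\eta$ and $\varphi^+$, two facts: the creation part of $\xi(q^{1/2}t^{-1/2}z)$ is exactly inverse to the creation part of that corner box, so that $\eta_\lambda(v)\mapsto\eta_{\lambda-{\bf 1}_i}(v)$ (the box is removed); while the annihilation part of $\xi(q^{1/2}t^{-1/2}z)$ times that of the corner box equals $\exp\!\big(-\sum_n\frac{1-t^n}{n}(1-t^nq^{-n})a_n z^{-n}\big)$, which is precisely $\varphi^+(q^{1/4}t^{-1/4}z)$ and fixes the shift $q^{1/4}t^{-1/4}$. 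Since $\varphi^+$ is purely an annihilation operator, appending it on the right preserves the normal ordering, giving $\delta(q^{\lambda_i-1}t^{-i+1}v/z)\,\Xi_\lambda(z,v)=\delta(q^{\lambda_i-1}t^{-i+1}v/z)\,\frac{c_{\lambda-{\bf 1}_i}}{q^{n((\lambda-{\bf 1}_i)')}}\tPhi_{\lambda-{\bf 1}_i}(v)\,\varphi^+(q^{1/4}t^{-1/4}z)$.

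Finally, substituting this back cancels the ratios $\frac{q^{n((\lambda-{\bf 1}_i)')}}{c_{\lambda-{\bf 1}_i}}$ against their inverses, and commuting the scalar distribution $\delta(q^{\lambda_i-1}t^{-i+1}v/z)$ past the $z$-independent operator $\tPhi_{\lambda-{\bf 1}_i}(v)$ yields $\sum_{i=1}^{\ell(\lambda)}A^-_{\lambda,i}\,\tPhi_{\lambda-{\bf 1}_i}(v)\,\delta(q^{\lambda_i-1}t^{-i+1}v/z)\,\varphi^+(q^{1/4}t^{-1/4}z)$, which is the claimed right-hand side. (The terms with $\lambda_i=\lambda_{i+1}$ drop out automatically since $A^-_{\lambda,i}=0$ there, so the sum effectively runs over removable boxes only.)
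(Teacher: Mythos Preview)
Your proof is correct and follows essentially the same approach as the paper: combine Lemma~\ref{xi-phi} with Lemma~\ref{delta-2}, and use the identity $:\xi(q^{1/2}t^{-1/2}z)\eta(z):\,=\varphi^+(q^{1/4}t^{-1/4}z)$. Your third step, where you separately verify the cancellation of creation parts and the combination of annihilation parts, is simply an explicit derivation of this normal-ordered identity; the paper quotes it directly rather than rederiving it.
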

\begin{proof}
It follows from Lemmas \ref{delta-2}, \ref{xi-phi} and
$:\xi(q^{1/2}t^{-1/2}z)\eta(z):\,=\varphi^+(q^{1/4}t^{-1/4}z)$.

\end{proof}

\subsection{Operator product formulas for $\tPhis_\lambda(u)$}

\begin{lem}\label{Phis0}
We have
\begin{align}
&
\eta(z) \tPhis_\emptyset (u)=
(1-q^{-1/2}t^{1/2}u/z) 
:\eta(z) \tPhis_\emptyset (u):,\label{Phis0-1}\\
&
\tPhis_\emptyset (u)\eta(z) =
 (1-q^{1/2}t^{-1/2}z/u) 
  :\eta(z) \tPhis_\emptyset (u):,\label{Phis0-2}\\
&
\xi(z) \tPhis_\emptyset (u)={1\over 1-q^{-1}t u/z}:\xi(z) \tPhis_\emptyset (u):,\label{Phis0-3}\\
&
\tPhis_\emptyset (u)\xi(z) ={1\over 1-z/u} :\xi(z) \tPhis_\emptyset (u):,\label{Phis0-4}\\
&
\varphi^+(q^{-1/4}t^{1/4}z)^{-1}\tPhis_\emptyset (u)
={1-q^{-1}t u/z \over 1-u/z}  \tPhis_\emptyset (u)\varphi^+(q^{-1/4}t^{1/4}z)^{-1} ,\label{Phis0-5}\\
&
\varphi^-(q^{1/4}t^{-1/4}z)^{-1}\tPhis_\emptyset (u)
={1-qt^{-1} z/u \over 1-z/u}  \tPhis_\emptyset (u)\varphi^-(q^{1/4}t^{-1/4}z)^{-1} .\label{Phis0-6}
\end{align}
\end{lem}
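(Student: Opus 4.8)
The plan is to prove all six identities by one and the same free-field normal-ordering computation, since $\tPhis_\emptyset(u)$ and each of $\eta(z),\xi(z),\varphi^\pm(z)$ is an exponential of a linear combination of the Heisenberg modes $a_n$. The only inputs are the defining relation $[a_m,a_n]=m\tfrac{1-q^{|m|}}{1-t^{|m|}}\delta_{m+n,0}a_0$ with $a_0=1$ on $\calF$, and the elementary resummation $\sum_{n\ge1}\tfrac1n x^n=-\log(1-x)$. First I would record the general mechanism: writing an operator as a product $V_-V_+$ of its negative-mode and positive-mode halves, the reordering of two such factors is governed by
\[
e^{A_+}e^{B_-}=e^{[A_+,B_-]}\,:e^{A_+}e^{B_-}:,
\]
where $[A_+,B_-]$ is the scalar obtained by pairing each $a_n$ in $A_+$ against the $a_{-n}$ in $B_-$ through the commutator above. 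Thus each of (\ref{Phis0-1})--(\ref{Phis0-4}) reduces to evaluating a single such pairing and resumming a logarithmic series.

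To illustrate on (\ref{Phis0-1}): the positive half of $\eta(z)$ contributes $-\tfrac{1-t^n}{n}z^{-n}$ as the coefficient of $a_n$, and the negative half of $\tPhis_\emptyset(u)$ contributes $\tfrac1n\tfrac{1}{1-q^n}q^{-n/2}t^{n/2}u^n$ as the coefficient of $a_{-n}$. Pairing these via $[a_n,a_{-n}]=n\tfrac{1-q^n}{1-t^n}$ collapses the $(1-q^n)$ and $(1-t^n)$ factors and leaves
\[
-\sum_{n\ge1}\frac1n\bigl(q^{-1/2}t^{1/2}u/z\bigr)^n=\log\bigl(1-q^{-1/2}t^{1/2}u/z\bigr),
\]
which exponentiates to the stated prefactor $1-q^{-1/2}t^{1/2}u/z$. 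The remaining scalar identities (\ref{Phis0-2})--(\ref{Phis0-4}) follow in exactly the same way, the only differences being which half of $\eta$ or $\xi$ is paired against which half of $\tPhis_\emptyset$ and the additional weight $q^{-n/2}t^{n/2}$ carried by $\xi$ in its definition.

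For the two conjugation relations (\ref{Phis0-5}) and (\ref{Phis0-6}) I would exploit that $\varphi^+(z)$ is built from positive modes only and $\varphi^-(z)$ from negative modes only. Hence $\varphi^+(\,\cdots)^{-1}$ commutes with the creation half of $\tPhis_\emptyset(u)$ and $\varphi^-(\,\cdots)^{-1}$ with its annihilation half, so in each case the conjugation produces a single central commutator against the opposite half of $\tPhis_\emptyset(u)$. For (\ref{Phis0-6}), after substituting the shifted argument $q^{1/4}t^{-1/4}z$ into $\varphi^-$, the pairing gives
\[
\sum_{n\ge1}\frac1n\bigl((z/u)^n-(qt^{-1}z/u)^n\bigr)=\log\frac{1-qt^{-1}z/u}{1-z/u},
\]
which is precisely the stated ratio; (\ref{Phis0-5}) is entirely analogous with $u/z$ in place of $z/u$.

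I expect no conceptual obstacle here, as the argument runs parallel to Lemma \ref{Phi0} for $\tPhi_\emptyset$. The only genuine care is bookkeeping the half-integer shifts $q^{\pm n/2},t^{\pm n/2}$ that enter both from the arguments $q^{\pm1/4}t^{\mp1/4}z$ of $\varphi^\pm$ and from the $q^{-n/2}t^{n/2}$ weight already built into $\xi$ and $\tPhis_\emptyset$, and verifying that after the cancellations forced by the commutator every exponent collapses to a clean $\log(1-\,\cdots)$ so that the resulting prefactors are exactly the rational functions displayed in (\ref{Phis0-1})--(\ref{Phis0-6}).
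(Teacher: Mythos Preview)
Your approach is correct and is precisely the standard normal-ordering computation one would carry out here; the paper itself states Lemma~\ref{Phis0} without proof (just as it does for the companion Lemma~\ref{Phi0}), treating these identities as routine consequences of the Heisenberg commutator and the exponential form of the operators. Your sample verifications of (\ref{Phis0-1}) and (\ref{Phis0-6}) are accurate, and the remaining cases go through by the same bookkeeping you describe.
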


\begin{prp}\label{prop-1s}
We have
\begin{align}
&
\varphi^+(q^{-1/4}t^{1/4}z)^{-1}\tPhis_\lambda (u)
\varphi^+(q^{-1/4}t^{1/4}z)
=B^+_\lambda(u/z) \tPhis_\emptyset (u) ,\label{prop-1s-1}\\
&
\varphi^-(q^{1/4}t^{-1/4}z)^{-1}\tPhis_\lambda (u)
\varphi^-(q^{1/4}t^{-1/4}z)
=B^-_\lambda(z/u)  \tPhis_\emptyset (u) .\label{prop-1s-2}
\end{align}

\end{prp}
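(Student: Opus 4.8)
The plan is to run the proof of its counterpart for $\tPhi_\lambda$, namely Proposition~\ref{prop-1}, essentially verbatim, with the vertex operator $\eta$ replaced throughout by $\xi$. Since $\varphi^+(z)$ involves only the modes $a_n$ with $n>0$ and $\varphi^-(z)$ only the modes $a_{-n}$, conjugating a single current $\xi(w)$ by $\varphi^\pm$ produces only a scalar, coming from a single commutator of free fields. First I would record the operator product formulas
\begin{align*}
&\varphi^+(q^{-1/4}t^{1/4}z)^{-1}\,\xi(w)\,\varphi^+(q^{-1/4}t^{1/4}z)=g(w/z)^{-1}\,\xi(w),\\
&\varphi^-(q^{1/4}t^{-1/4}z)^{-1}\,\xi(w)\,\varphi^-(q^{1/4}t^{-1/4}z)=g(z/w)\,\xi(w).
\end{align*}
Each is obtained by commuting the relevant half of $\varphi^\pm$ through the opposite half of $\xi(w)$. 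The spectral shifts $q^{\mp1/4}t^{\pm1/4}$ in the arguments of $\varphi^\pm$, together with the prefactor $q^{-n/2}t^{n/2}$ built into $\xi$, are exactly what is needed so that the resulting exponential of commutators collapses, through the factorisation (\ref{g}) of $g$, to the very same factors $g(w/z)^{-1}$ and $g(z/w)$ that occur in the $\eta$ computation of Proposition~\ref{prop-1}.

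With these identities in hand I would conjugate $\tPhis_\lambda(u)=\frac{q^{n(\lambda')}}{c_\lambda}:\tPhis_\emptyset(u)\xi_\lambda(u):$. Because conjugation by $\varphi^\pm$ merely shifts the opposite family of modes by c-numbers, it distributes over the normal-ordered product, so the total scalar factorises as the contribution of $\tPhis_\emptyset(u)$ times that of $\xi_\lambda(u)=\,:\prod_{i=1}^{\ell(\lambda)}\prod_{j=1}^{\lambda_i}\xi(q^{j-1}t^{-i+1}u):$. Applying the two formulas above box by box turns the $\xi_\lambda(u)$ contribution into $\prod_{i,j}g(q^{j-1}t^{-i+1}u/z)^{-1}$ in the $\varphi^+$ case and $\prod_{i,j}g(q^{-j+1}t^{i-1}z/u)$ in the $\varphi^-$ case, which Lemma~\ref{phi+-eta2} evaluates (exactly as in the proof of Proposition~\ref{prop-1}) to $\frac{1-u/z}{1-q^{-1}tu/z}B^+_\lambda(u/z)$ and $\frac{1-z/u}{1-qt^{-1}z/u}B^-_\lambda(z/u)$ respectively.

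It then remains to multiply in the contribution of $\tPhis_\emptyset(u)$, which is precisely what (\ref{Phis0-5}) and (\ref{Phis0-6}) of Lemma~\ref{Phis0} supply: conjugation of $\tPhis_\emptyset(u)$ by $\varphi^+(q^{-1/4}t^{1/4}z)$ yields the scalar $\frac{1-q^{-1}tu/z}{1-u/z}$, and by $\varphi^-(q^{1/4}t^{-1/4}z)$ the scalar $\frac{1-qt^{-1}z/u}{1-z/u}$. Multiplying these against the products just computed, the rational prefactors cancel in both cases, leaving exactly the scalars $B^+_\lambda(u/z)$ and $B^-_\lambda(z/u)$, which is the content of (\ref{prop-1s-1}) and (\ref{prop-1s-2}). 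The only genuine point requiring care is the first paragraph, namely verifying that the single-current conjugation of $\xi$ reproduces the \emph{same} $g$-factors as that of $\eta$ despite $\xi$ looking like the inverse of $\eta$; this is exactly where the inversion of $\varphi^\pm$ (conjugating by $\varphi^{-1}(\cdot)\varphi$ rather than $\varphi(\cdot)\varphi^{-1}$) and the altered spectral shifts conspire. Once this coincidence is established, the rest of the argument is identical to Proposition~\ref{prop-1} and amounts to routine bookkeeping.
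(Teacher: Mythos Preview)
Your proposal is correct and follows essentially the same route as the paper's own proof: record the single-current conjugation formulas $\varphi^+(q^{-1/4}t^{1/4}z)^{-1}\xi(w)\varphi^+(q^{-1/4}t^{1/4}z)=g(w/z)^{-1}\xi(w)$ and $\varphi^-(q^{1/4}t^{-1/4}z)^{-1}\xi(w)\varphi^-(q^{1/4}t^{-1/4}z)=g(z/w)\xi(w)$, then combine Lemma~\ref{phi+-eta2} with (\ref{Phis0-5}), (\ref{Phis0-6}) of Lemma~\ref{Phis0}. The paper's proof is terser but structurally identical; your extra paragraph explaining why the $\xi$ conjugation produces the same $g$-factors as the $\eta$ conjugation is a helpful sanity check that the paper leaves implicit.
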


\begin{proof}
Note that
\begin{align*}
&\varphi^+(q^{-1/4}t^{1/4}z)^{-1} \xi(u) \varphi^+(q^{-1/4}t^{1/4}z)
=g(u/z)^{-1}
\xi(u) ,\\
&
\varphi^-(q^{1/4}t^{-1/4}z)^{-1} \xi(u) \varphi^-(q^{1/4}t^{-1/4}z)
=g(z/u)
\xi(u) .
\end{align*}
Then (\ref{prop-1s-1}), (\ref{prop-1s-2}) follow from Lemmas  \ref{phi+-eta2} 
and (\ref{Phis0-5}), (\ref{Phis0-6}) in Lemma \ref{Phis0}.
\end{proof}


\begin{lem}\label{xi-phis}
We have
\begin{align}
&
\xi(z)\tPhis_\lambda(u)
=
{1\over 1-q^{-1} t^{-\ell(\lambda)+1} u/z} 
\prod_{i=1}^{\ell(\lambda)}
{1-q^{\lambda_i-1}t^{-i+1}u/z \over 1-q^{\lambda_i-1} t^{-i+2}u/z}
 :\xi(z)\tPhis_\lambda(u):,\label{xi-phis-1}\\
 &
B^+_\lambda(u/z)
\xi(z)\tPhis_\lambda(u)
=
{1\over 1- t^{-\ell(\lambda)} u/z} 
\prod_{i=1}^{\ell(\lambda)}
{1-q^{\lambda_i}t^{-i}u/z \over 1-q^{\lambda_i} t^{-i+1}u/z}
 :\xi(z)\tPhis_\lambda(u):,\label{xi-phis-2}\\
&
 \tPhis_\lambda(u)\xi(z)
=
{1\over 1-t^{\ell(\lambda)} z/u} 
\prod_{i=1}^{\ell(\lambda)}
{1-q^{-\lambda_i}t^{i}z/u \over 1-q^{-\lambda_i} t^{i-1}z/u}
 :\xi(z)\tPhis_\lambda(u):.\label{xi-phis-3}
\end{align}
\end{lem}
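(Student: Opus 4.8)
The plan is to mirror the proof of Lemma~\ref{eta-phi}, exchanging the current $\eta$ for $\xi$ and the vertex operator $\tPhi_\emptyset$ for $\tPhis_\emptyset$. Writing $\tPhis_\lambda(u)=\frac{q^{n(\lambda')}}{c_\lambda}:\tPhis_\emptyset(u)\xi_\lambda(u):$, each of the three identities is a contraction of the single factor $\xi(z)$ against this normal-ordered product, so it factorizes into the contraction of $\xi(z)$ with $\tPhis_\emptyset(u)$, already recorded in Lemma~\ref{Phis0}, times the contraction of $\xi(z)$ with the product $\xi_\lambda(u)$. The whole computation is therefore reduced to one elementary two-point function together with the product formula of Lemma~\ref{eta-etalam2}.

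First I would record the elementary contraction. A direct mode computation gives
\[
\xi(z)\xi(w)=f(q^{-1}t\,w/z):\xi(z)\xi(w):,\qquad \xi(w)\xi(z)=f(q^{-1}t\,z/w):\xi(z)\xi(w):
\]
with $f$ as in Lemma~\ref{eta-etalam2}; the only thing to verify is that the spectral argument carries the extra shift $q^{-1}t$ relative to the $\eta$-$\eta$ case, which is exactly the effect of the factors $q^{-n/2}t^{n/2}$ appearing in the definition of $\xi(z)$.

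Next, applying this box by box over $\lambda$ turns $\xi(z)\xi_\lambda(u)$ and $\xi_\lambda(u)\xi(z)$ into the products treated by Lemma~\ref{eta-etalam2} after the substitution $v\mapsto q^{-1}t\,u$ (for the first) and $v\mapsto qt^{-1}u$ (for the second). Combining these with (\ref{Phis0-3}) and (\ref{Phis0-4}) respectively, the prefactors $1-q^{-1}tu/z$ and $1-z/u$ produced by Lemma~\ref{eta-etalam2} cancel precisely against the scalar factors coming from $\tPhis_\emptyset$, and (\ref{xi-phis-1}) and (\ref{xi-phis-3}) fall out directly.

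Finally, I would deduce (\ref{xi-phis-2}) from (\ref{xi-phis-1}) by multiplying through by $B^+_\lambda(u/z)$ and verifying the resulting rational identity straight from the definition (\ref{B+}): the infinite product defining $B^+_\lambda$ should telescope against $\prod_{i}\frac{1-q^{\lambda_i-1}t^{-i+1}u/z}{1-q^{\lambda_i-1}t^{-i+2}u/z}$ and leave $\frac{1}{1-t^{-\ell(\lambda)}u/z}\prod_{i}\frac{1-q^{\lambda_i}t^{-i}u/z}{1-q^{\lambda_i}t^{-i+1}u/z}$. I expect this last step---aligning the shifted numerators and denominators of the telescoping product so that no spurious factor survives---to be the only genuinely delicate point; everything preceding it is a faithful transcription of the argument on the $\eta$ side.
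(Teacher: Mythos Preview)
Your proposal is correct and follows essentially the same route as the paper: the paper also records $\xi(z)\xi(u)=f(q^{-1}t\,u/z):\xi(z)\xi(u):$, applies Lemma~\ref{eta-etalam2} to obtain the contractions of $\xi(z)$ with $\xi_\lambda(u)$ on each side, and then combines with (\ref{Phis0-3}), (\ref{Phis0-4}) to get (\ref{xi-phis-1}) and (\ref{xi-phis-3}); the middle identity (\ref{xi-phis-2}) is left implicit in the paper, and your plan to obtain it from (\ref{xi-phis-1}) by multiplying through by $B^+_\lambda(u/z)$ and telescoping is the intended one.
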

\begin{proof}
{}From
$\xi(z) \xi(u)=f(q^{-1}t u/z):\xi(z) \xi(u):$, and Lemma  \ref{eta-etalam2} we have
\begin{align*}
&
\xi(z) \xi_\lambda(u)=
{1-q^{-1} t u/z\over 1-q^{-1} t^{-\ell(\lambda)+1} u/z} 
\prod_{i=1}^{\ell(\lambda)}
{1-q^{\lambda_i-1}t^{-i+1}u/z \over 1-q^{\lambda_i-1} t^{-i+2}u/z}
:\xi(z) \xi_\lambda(u):,\\
&
 \xi_\lambda(u)\xi(z)=
 {1-z/u\over 1-t^{\ell(\lambda)} z/u} 
\prod_{i=1}^{\ell(\lambda)}
{1-q^{-\lambda_i}t^{i}z/u \over 1-q^{-\lambda_i} t^{i-1}z/u}
:\xi(z) \xi_\lambda(u):.
\end{align*}
Then (\ref{xi-phis-1}), (\ref{xi-phis-3}) follow from (\ref{Phis0-3}), (\ref{Phis0-4}) in  Lemma \ref{Phis0}.
\end{proof}

 \begin{prp}\label{prop-2s}
 We have
\begin{align}
 B^+_\lambda(u/z)
\xi(z)\tPhis_\lambda(u)+{z\over u}\tPhis_\lambda(u) \xi(z)
=
\sum_{i=1}^{\ell(\lambda)+1} A^+_{\lambda,i} \, 
\tPhis_{\lambda+{\bf 1}_i}(u) \delta(q^{\lambda_i}t^{-i+1}u/z).
\end{align}
\end{prp}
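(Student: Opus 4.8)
The plan is to mirror the proof of Proposition \ref{prop-2}, replacing the operator product formulas for $\tPhi_\lambda(v)$ by those for $\tPhis_\lambda(u)$ from Lemma \ref{xi-phis} and invoking Lemma \ref{delta} with $v$ specialized to $u$. First I would bring the two terms on the left-hand side to a common normal-ordered form. By (\ref{xi-phis-2}) and (\ref{xi-phis-3}), both $B^+_\lambda(u/z)\,\xi(z)\tPhis_\lambda(u)$ and $\tPhis_\lambda(u)\,\xi(z)$ equal the same normal-ordered product $:\xi(z)\tPhis_\lambda(u):$ times an explicit rational prefactor. Adding the first to $\frac{z}{u}$ times the second then produces $:\xi(z)\tPhis_\lambda(u):$ multiplied by the sum of prefactors
\begin{align*}
\frac{1}{1-t^{-\ell(\lambda)}u/z}\prod_{i=1}^{\ell(\lambda)}\frac{1-q^{\lambda_i}t^{-i}u/z}{1-q^{\lambda_i}t^{-i+1}u/z}+\frac{z}{u}\,\frac{1}{1-t^{\ell(\lambda)}z/u}\prod_{i=1}^{\ell(\lambda)}\frac{1-q^{-\lambda_i}t^{i}z/u}{1-q^{-\lambda_i}t^{i-1}z/u}.
\end{align*}

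This is exactly the bracketed expression on the left-hand side of Lemma \ref{delta} with $v\mapsto u$. Since $\tPhis_\lambda(u)=\frac{q^{n(\lambda')}}{c_\lambda}:\tPhis_\emptyset(u)\xi_\lambda(u):$, the factor $q^{n(\lambda')}/c_\lambda$ is already carried inside $:\xi(z)\tPhis_\lambda(u):$, so Lemma \ref{delta} applies verbatim and rewrites the product of this factor with the rational sum above as $\sum_{i=1}^{\ell(\lambda)+1}\frac{q^{n((\lambda+{\bf 1}_i)')}}{c_{\lambda+{\bf 1}_i}}A^+_{\lambda,i}\,\delta(q^{\lambda_i}t^{-i+1}u/z)$. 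At this stage the left-hand side has become $\sum_i \frac{q^{n((\lambda+{\bf 1}_i)')}}{c_{\lambda+{\bf 1}_i}}A^+_{\lambda,i}\,\delta(q^{\lambda_i}t^{-i+1}u/z)\,:\xi(z)\tPhis_\emptyset(u)\xi_\lambda(u):$.

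The last step is to use the support of each delta function to resolve the free current $\xi(z)$. Since $\delta(q^{\lambda_i}t^{-i+1}u/z)$ forces $z=q^{\lambda_i}t^{-i+1}u$ and $\xi_\lambda(u)=\,:\prod_{i'=1}^{\ell(\lambda)}\prod_{j=1}^{\lambda_{i'}}\xi(q^{j-1}t^{-i'+1}u):$, the specialized vertex $\xi(q^{\lambda_i}t^{-i+1}u)$ is precisely the one attached to the box $(i,\lambda_i+1)$ created in passing from $\lambda$ to $\lambda+{\bf 1}_i$. Under normal ordering one therefore has $:\xi(q^{\lambda_i}t^{-i+1}u)\,\tPhis_\emptyset(u)\,\xi_\lambda(u):=\,:\tPhis_\emptyset(u)\,\xi_{\lambda+{\bf 1}_i}(u):$, and the accompanying factor $q^{n((\lambda+{\bf 1}_i)')}/c_{\lambda+{\bf 1}_i}$ reassembles exactly $\tPhis_{\lambda+{\bf 1}_i}(u)$, giving the claimed identity.

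The one step that deserves care is this final reconstruction: I would verify that the prefactor $q^{n((\lambda+{\bf 1}_i)')}/c_{\lambda+{\bf 1}_i}$ supplied by Lemma \ref{delta} coincides with the normalization in the definition of $\tPhis_{\lambda+{\bf 1}_i}(u)$, and that the newly specialized vertex sits at the correct argument $q^{\lambda_i}t^{-i+1}u$. These are the same two checks already performed in the proof of Proposition \ref{prop-2}, with $\eta$ replaced by $\xi$ and $\tPhi$ by $\tPhis$, so no genuinely new difficulty arises.
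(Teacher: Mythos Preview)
Your proposal is correct and follows exactly the paper's approach: the paper's proof is the single line ``It follows from Lemma \ref{delta} and (\ref{xi-phis-2}), (\ref{xi-phis-3}) in Lemma \ref{xi-phis},'' and you have simply spelled out the details of how those ingredients combine. The reconstruction step you flag as needing care is indeed routine and works precisely as you describe.
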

\begin{proof}
It follows from Lemma \ref{delta} and  (\ref{xi-phis-2}), (\ref{xi-phis-3})  in Lemma \ref{xi-phis}.
\end{proof}


\begin{lem}\label{eta-phis}
We have
\begin{align}
&
\eta(q^{1/2}t^{-1/2}z)  \tPhis_\lambda (u)
=(1-q^{-1}t^{-\ell(\lambda)+1}u/z)
\prod_{i=1}^{\ell(\lambda)} 
{1-q^{\lambda_i-1}t^{-i+2}u/z \over 1-q^{\lambda_i-1}t^{-i+1}u/z }
:\eta(q^{1/2}t^{-1/2}z)  \tPhis_\lambda (u):,\label{eta-phis-1}\\
&
\tPhis_\lambda (u)\eta(q^{1/2}t^{-1/2}z)  
=(1-q t^{\ell(\lambda)-1}z/u)
\prod_{i=1}^{\ell(\lambda)} 
{1-q^{-\lambda_i+1}t^{i-2}z/u \over 1-q^{-\lambda_i+1}t^{i-1}z/u }
:\tPhis_\lambda (u)\eta(q^{1/2}t^{-1/2}z)  :.\label{eta-phis-2}
\end{align}
\end{lem}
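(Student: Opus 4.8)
The plan is to reduce the statement to two-point operator product expansions between the vertex operators $\eta$ and $\xi$, exactly in the spirit of the proofs of Lemmas \ref{xi-phi} and \ref{xi-phis}. Recall that $\tPhis_\lambda(u)={q^{n(\lambda')}\over c_\lambda}:\tPhis_\emptyset(u)\xi_\lambda(u):$ with $\xi_\lambda(u)=:\prod_{i=1}^{\ell(\lambda)}\prod_{j=1}^{\lambda_i}\xi(q^{j-1}t^{-i+1}u):$, so the contraction of $\eta(q^{1/2}t^{-1/2}z)$ against $\tPhis_\lambda(u)$ factorizes into its contraction against $\tPhis_\emptyset(u)$ times its contractions against each elementary factor $\xi(q^{j-1}t^{-i+1}u)$.

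First I would record the elementary contractions. A direct normal-ordering computation using $[a_m,a_n]=m\frac{1-q^{|m|}}{1-t^{|m|}}\delta_{m+n,0}a_0$ gives $\eta(q^{1/2}t^{-1/2}z)\,\xi(w)=f(q^{-1}t\,w/z)^{-1}:\eta(q^{1/2}t^{-1/2}z)\xi(w):$ and $\xi(w)\,\eta(q^{1/2}t^{-1/2}z)=f(z/w)^{-1}:\xi(w)\eta(q^{1/2}t^{-1/2}z):$, with $f$ as in Lemma \ref{eta-etalam2}; these are the same contraction functions already used there, only with the spectral shift $q^{1/2}t^{-1/2}$ carried on $\eta$ rather than on $\xi$. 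Setting $w=q^{j-1}t^{-i+1}u$ and taking the product over the boxes of $\lambda$, the first contraction becomes $\prod_{i,j}f(q^{j-1}t^{-i+1}(q^{-1}tu)/z)^{-1}$, which telescopes by the first identity of Lemma \ref{eta-etalam2} with $v=q^{-1}tu$; the second becomes $\prod_{i,j}f(q^{-j+1}t^{i-1}z/u)^{-1}$, which telescopes by the second identity of Lemma \ref{eta-etalam2} with $v=u$.

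It then remains to fold in the contraction against $\tPhis_\emptyset(u)$, supplied by (\ref{Phis0-1}) and (\ref{Phis0-2}) of Lemma \ref{Phis0}. Under $z\mapsto q^{1/2}t^{-1/2}z$ these read $\eta(q^{1/2}t^{-1/2}z)\tPhis_\emptyset(u)=(1-q^{-1}tu/z):\eta(q^{1/2}t^{-1/2}z)\tPhis_\emptyset(u):$ and $\tPhis_\emptyset(u)\eta(q^{1/2}t^{-1/2}z)=(1-qt^{-1}z/u):\tPhis_\emptyset(u)\eta(q^{1/2}t^{-1/2}z):$. Multiplying these prefactors by the telescoped products above, the factor $(1-q^{-1}tu/z)$ exactly cancels the denominator produced by the $\xi_\lambda$ telescoping in the first case, leaving $(1-q^{-1}t^{-\ell(\lambda)+1}u/z)\prod_{i=1}^{\ell(\lambda)}\frac{1-q^{\lambda_i-1}t^{-i+2}u/z}{1-q^{\lambda_i-1}t^{-i+1}u/z}$, which is (\ref{eta-phis-1}); symmetrically $(1-qt^{-1}z/u)$ cancels the corresponding denominator in the second case and yields (\ref{eta-phis-2}).

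The computation is entirely routine, and I do not expect a genuine obstacle; the only point requiring care is the bookkeeping of the spectral shift $q^{1/2}t^{-1/2}$. It must be tracked consistently so that the arguments fed into Lemma \ref{eta-etalam2} come out as $q^{-1}tu$ (respectively $u$), and so that the telescoped denominators line up precisely with the $\tPhis_\emptyset$ prefactors that are to cancel them. Once these shifts are aligned, both identities follow without further input.
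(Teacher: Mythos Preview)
Your proposal is correct and follows essentially the same approach as the paper's own proof: record the elementary contractions $\eta(q^{1/2}t^{-1/2}z)\,\xi(u)=f(q^{-1}tu/z)^{-1}:\cdots:$ and $\xi(u)\,\eta(q^{1/2}t^{-1/2}z)=f(z/u)^{-1}:\cdots:$, telescope the product over the boxes of $\lambda$ via Lemma~\ref{eta-etalam2}, and then absorb the $\tPhis_\emptyset(u)$ contraction from (\ref{Phis0-1}), (\ref{Phis0-2}) to cancel the residual $(1-q^{-1}tu/z)$ and $(1-qt^{-1}z/u)$ factors. The paper writes out the intermediate $\eta(q^{1/2}t^{-1/2}z)\,\xi_\lambda(u)$ formulas explicitly before invoking Lemma~\ref{Phis0}, but the logic and the ingredients are identical to yours.
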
 
\begin{proof}
Note that
$\eta(q^{1/2}t^{-1/2}z) \xi(u)=f(q^{-1} t u/z)^{-1} :\eta(q^{1/2}t^{-1/2}z) \xi(u):$,
and 
$\xi(u)\eta(q^{1/2}t^{-1/2}z)=f(z/u)^{-1}:\eta(q^{1/2}t^{-1/2}z) \xi(u):$.
{}From Lemma \ref{eta-etalam2} we have
\begin{align*}
&
\eta(q^{1/2}t^{-1/2}z) \xi_\lambda(u)
={1-q^{-1} t^{-\ell(\lambda)+1}u/z \over 1-q^{-1}t u/z}
\prod_{i=1}^{\ell(\lambda)}
{1-q^{\lambda_i-1}t^{-i+2}u/z \over 1-q^{\lambda_i-1} t^{-i+1}u/z}
:\eta(q^{1/2}t^{-1/2}z) \xi_\lambda(u):,\\
&
 \xi_\lambda(u)\eta(q^{1/2}t^{-1/2}z)
={1-q t^{\ell(\lambda)-1}z/u \over 1-qt^{-1} z/u}
\prod_{i=1}^{\ell(\lambda)}
{1-q^{-\lambda_i+1}t^{i-2}z/u \over 1-q^{-\lambda_i+1} t^{i-1}z/u}
:\eta(q^{1/2}t^{-1/2}z) \xi_\lambda(u):.
\end{align*}
Then (\ref{eta-phis-1}), (\ref{eta-phis-2}) follow from (\ref{Phis0-1}), (\ref{Phis0-2}) 
in Lemma \ref{Phis0}.
\end{proof}

\begin{prp}\label{prop-3s}
We have
\begin{align}
&\eta(q^{1/2}t^{-1/2}z)\tPhis_\lambda(u)+
q^{-1}t {u\over z} \tPhis_\lambda(u)\eta(q^{1/2}t^{-1/2}z)\nonumber\\
&\qquad =
\sum_{i=1}^{\ell(\lambda)} A^-_{\lambda,i} \, 
\tPhis_{\lambda-{\bf 1}_i}(v) \delta(q^{\lambda_i-1}t^{-i+1}v/z)
\varphi^-(q^{1/4}t^{-1/4}z)
\end{align}
\end{prp}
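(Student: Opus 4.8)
The plan is to mirror the proof of Proposition~\ref{prop-3} under the exchange $\eta\leftrightarrow\xi$, $\varphi^+\leftrightarrow\varphi^-$, relying on the operator product formulas of Lemma~\ref{eta-phis}, the residue identity of Lemma~\ref{delta-2}, and the single normal-ordering relation $:\eta(q^{1/2}t^{-1/2}z)\xi(z):\,=\varphi^-(q^{1/4}t^{-1/4}z)$, which one verifies directly from the definitions of $\eta$, $\xi$ and $\varphi^-$ (the annihilation parts cancel, and the creation parts combine through $(1-t^nq^{-n})q^{n/2}t^{-n/2}=q^{n/2}t^{-n/2}-q^{-n/2}t^{n/2}$).

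First I would use (\ref{eta-phis-1}) and (\ref{eta-phis-2}) to write each of the two summands on the left-hand side as a scalar rational function of $z$ times the common normal-ordered product $:\eta(q^{1/2}t^{-1/2}z)\tPhis_\lambda(u):$, using that normal ordering is insensitive to the order of its arguments. Pulling out the prefactor $q^{n(\lambda')}/c_\lambda$ hidden in $\tPhis_\lambda(u)=\frac{q^{n(\lambda')}}{c_\lambda}:\tPhis_\emptyset(u)\xi_\lambda(u):$, the bracketed sum of the two rational functions becomes exactly the left-hand side of Lemma~\ref{delta-2} with $v$ specialized to $u$. That lemma then collapses the bracket to $\sum_{i=1}^{\ell(\lambda)}\frac{q^{n((\lambda-{\bf 1}_i)')}}{c_{\lambda-{\bf 1}_i}}A^-_{\lambda,i}\,\delta(q^{\lambda_i-1}t^{-i+1}u/z)$.

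It remains to localize each term on the support of its delta. On $z=q^{\lambda_i-1}t^{-i+1}u$ the operator $\xi_\lambda(u)$ factors as $:\xi(z)\,\xi_{\lambda-{\bf 1}_i}(u):$, so inside the normal ordering the creation parts of $\eta(q^{1/2}t^{-1/2}z)$ and of the removed box $\xi(z)$ assemble into $\varphi^-(q^{1/4}t^{-1/4}z)$ while their annihilation parts cancel; since all remaining creation operators commute, this leaves $\varphi^-(q^{1/4}t^{-1/4}z):\tPhis_\emptyset(u)\xi_{\lambda-{\bf 1}_i}(u):$, and reinserting $\frac{q^{n((\lambda-{\bf 1}_i)')}}{c_{\lambda-{\bf 1}_i}}$ reconstitutes $\tPhis_{\lambda-{\bf 1}_i}(u)$. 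I expect the main obstacle to be precisely the placement of the $\varphi^-$ factor: unlike $\varphi^+$ in Proposition~\ref{prop-3}, which is built from annihilation modes and therefore emerges naturally to the right, $\varphi^-$ consists of creation modes, so tracking on which side of $\tPhis_{\lambda-{\bf 1}_i}(u)$ it sits---and reconciling this with the stated right-hand side---is the delicate bookkeeping step, all other manipulations transcribing verbatim from the $\tPhi$ computation.
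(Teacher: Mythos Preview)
Your proposal is correct and follows precisely the paper's route: the proof there is the one-line ``It follows from Lemmas~\ref{delta-2}, \ref{eta-phis} and $:\eta(q^{1/2}t^{-1/2}z)\xi(z):\,=\varphi^-(q^{1/4}t^{-1/4}z)$,'' which is exactly the combination of ingredients you use. Your caution about the placement of $\varphi^-$ is well founded: the computation naturally produces $\varphi^-(q^{1/4}t^{-1/4}z)$ on the \emph{left} of $\tPhis_{\lambda-{\bf 1}_i}(u)$ (since it is built from creation modes), and indeed this left placement is what is required in the intertwining relation (\ref{shiki-4s}) and in Lemma~\ref{lemma-2}; the right placement (and the ``$v$'' for ``$u$'') in the displayed statement appear to be typographical slips.
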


\begin{proof}
It follows from Lemmas \ref{delta-2}, \ref{eta-phis} and
$:\eta(q^{1/2}t^{-1/2}z)\xi(z):\, =\varphi^-(q^{1/4}t^{-1/4}z)$.

\end{proof}

\subsection{Final step of proofs}

The intertwining relations in Lemma \ref{lemma-1} 
are rewritten in terms of $\eta,\xi,\varphi^\pm $
as
\begin{align}
&
\varphi^+(q^{1/4}t^{-1/4}z )
\Phi_\lambda \varphi^+(q^{1/4}t^{-1/4}z )^{-1}=B^+_\lambda(v/z)\Phi_\lambda,
\label{shiki-1}\\
&
\varphi^-(q^{-1/4}t^{1/4}z )
\Phi_\lambda \varphi^-(q^{-1/4}t^{1/4}z )^{-1}=B^-_\lambda(z/v)\Phi_\lambda,
\label{shiki-2}\\
&\eta(z) \Phi_\lambda
-
{uz\over w} B^-_\lambda(z/v)  \Phi_\lambda \eta(z)\label{shiki-3}\\
&\qquad =
\sum_{i=1}^{\ell(\lambda)+1}
w^{-1}(q^{1/2}t^{-1/2}q^{\lambda_i} t^{-i+1}v)^{N+1}
A^+_{\lambda,i}\, \delta(q^{\lambda_i} t^{-i+1}v/z) \Phi_{\lambda+{\bf 1}_i},\nonumber\\
&
\xi(q^{1/2}t^{-1/2} z) \Phi_\lambda
-q^{-1}t {w\over uz} \Phi_\lambda\xi(q^{1/2}t^{-1/2} z)\label{shiki-4}\\
&\qquad 
=
\sum_{i=1}^{\ell(\lambda)}
w(q^{1/2}t^{-1/2}q^{\lambda_i-1} t^{-i+1}v)^{-N-1}
A^-_{\lambda,i}\, \delta(q^{\lambda_i-1} t^{-i+1}v/z) \Phi_{\lambda-{\bf 1}_i}
\psi^+(q^{1/4}t^{-1/4}z).\nonumber
\end{align}

\noindent
{\it Proof of Theorem \ref{thm-1}.}
{}From (\ref{shiki-1}) and (\ref{shiki-2}), 
we must have that 
$\Phi_\lambda$ be proportional to $\tPhi_\lambda(v)$ by virtue of
 Proposition \ref{prop-1}. 
Write $\Phi_\lambda=t(\lambda,v,u,N)\tPhi_\lambda(v)$.
 Then in view of Propositions \ref{prop-2}, \ref{prop-3},
 we find that  (\ref{shiki-3}) and (\ref{shiki-4})
may hold only in the case $w=-vu$ and 
when $t(\lambda,v,u,N)$ is  given by (\ref{t(lam)}).  \hfill $\square$

The intertwining relations in Lemma \ref{lemma-1} 
are rewritten in terms of $\eta,\xi,\varphi^\pm $
as

\begin{align}
&
\varphi^+(q^{-1/4}t^{1/4}z )^{-1}\Phi^*_\lambda  \varphi^+(q^{-1/4}t^{1/4}z )
=
B^+_\lambda(u/z)\Phi^*_\lambda ,\label{shiki-1s}\\
&
 \varphi^-(q^{1/4}t^{-1/4}z )^{-1}\Phi^*_\lambda  \varphi^-(q^{1/4}t^{-1/4}z )=
B^-_\lambda(z/u)\Phi^*_\lambda ,\label{shiki-2s}\\
&
B^+(u/z) \xi(z) \Phi^*_\lambda 
-
{vz\over w} \Phi^*_\lambda \xi(z)\label{shiki-3s}\\
&\qquad =
\sum_{i=1}^{\ell(\lambda)+1}
q^{-1}v(q^{1/2}t^{-1/2} q^{\lambda_i}t^{-i+1}u )^{-N} 
A^+_{\lambda,i}\delta(q^{\lambda_i}t^{-i+1}u/z) 
\Phi^*_{\lambda+{\bf 1}_i},\nonumber\\
&
\eta(q^{1/2}t^{-1/2}z) \Phi^*_\lambda
-
q^{-1}t {w\over vz} \Phi^*_\lambda\eta(q^{1/2}t^{-1/2}z)\label{shiki-4s}\\
&\qquad 
=
\varphi^{-}(q^{1/4}t^{-1/4}z)
\sum_{i=1}^{\ell(\lambda)+1}
qv^{-1}(q^{1/2}t^{-1/2} q^{\lambda_i-1}t^{-i+1}u )^{N}
 A^-_{\lambda,i}\delta(q^{\lambda_i-1}t^{-i+1}u/z) 
\Phi^*_{\lambda-{\bf 1}_i}.\nonumber
\end{align}

\noindent
{\it Proof of Theorem \ref{thm-2}.}
{}From (\ref{shiki-1s}) and (\ref{shiki-2s}), 
we must have that 
$\Phi^*_\lambda$ be proportional to $\tPhis_\lambda(v)$ by virtue of
 Proposition \ref{prop-1s}. 
Write $\Phi^*_\lambda=t^*(\lambda,v,u,N)\tPhis_\lambda(v)$.
 Then in view of Propositions \ref{prop-2s}, \ref{prop-3s},
 we find that  (\ref{shiki-3s}) and (\ref{shiki-4s})
may hold only in the case $w=-vu$ and 
when $t^*(\lambda,v,u,N)$ is  given by (\ref{ts(lam)}).  \hfill $\square$

\section{Proof of Proposition \ref{propos-1}}\label{section-7}
\subsection{Some formulas concerning $N_{\lambda,\mu}(u)$.}

\begin{lem}\label{lemma-N}
We have
\begin{align}
&
 N_{\lambda,\mu}(u)=
 \prod_{i=1}^{\ell(\lambda)} \prod_{j=1}^{\ell(\lambda)}
(u q^{-\mu_i+\lambda_{j+1}} t^{i-j};q)_{\lambda_j-\lambda_{j+1}}
\cdot
\prod_{\alpha=1}^{\ell(\mu)} \prod_{\beta=1}^{\ell(\mu)}
(u q^{\lambda_{\alpha}-\mu_\beta} t^{-\alpha+\beta+1};q)_{\mu_{\beta}-\mu_{\beta+1}},\label{NN}\\
&
N_{\lambda,\mu}(q^{1/2} t^{-1/2}x)=N_{\mu,\lambda}(q^{1/2} t^{-1/2}x^{-1})
x^{|\lambda|+|\mu|}{f_\lambda\over f_\mu},\\
&
c_\lambda c_\lambda'=
  (-1)^{|\lambda|} q^{n(\lambda')+|\lambda|}t^{n(\lambda)}
  N_{\lambda,\lambda}(1) . \label{ccp}
\end{align}
\end{lem}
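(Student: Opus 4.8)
The plan is to prove all three identities of Lemma~\ref{lemma-N} directly from the box-product definition of $N_{\lambda,\mu}(u)$ in \eqref{Nlammu}, using throughout the standard evaluations $\sum_{\square\in\lambda}a_\lambda(\square)=n(\lambda')$, $\sum_{\square\in\lambda}\ell_\lambda(\square)=n(\lambda)$, and $\tfrac12\sum_i\lambda_i^2=n(\lambda')+\tfrac12|\lambda|$.

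For the first identity \eqref{NN} I would start from the first form of \eqref{Nlammu} and treat the two products separately. In the product over $\square=(i,j)\in\lambda$, fix the row index $i$ and let the column index $j$ run from $1$ to $\lambda_i$. Along a fixed row the leg datum $\lambda'_j$ is weakly decreasing and piecewise constant, so the factors split into blocks on each of which the $t$-exponent $i-\lambda'_j$ is constant; on such a block the consecutive $j$'s telescope into a single $q$-shifted factorial $(\,\cdot\,;q)_{\lambda_j-\lambda_{j+1}}$, which reassembles into the first double product of \eqref{NN}. The product over $\mu$ is handled in the identical way after interchanging the roles of rows and columns, i.e.\ grouping by the arm datum, yielding the second double product. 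I expect this regrouping to be the main obstacle: the bookkeeping of how arm and leg lengths jump as one moves along a row or column, and the correct index ranges for the collapsed factors, is the only genuinely delicate point; the remaining two identities are light corollaries of the same box-counting.

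For the reflection formula I would substitute $u=q^{1/2}t^{-1/2}x$ into the box form \eqref{Nlammu} and rewrite each factor via $1-x\zeta=-x\zeta\,(1-x^{-1}\zeta^{-1})$. The key observation is that the collection of factors $(1-x^{-1}\zeta^{-1})$ reassembles, box by box, exactly into $N_{\mu,\lambda}(q^{1/2}t^{-1/2}x^{-1})$. It then remains to evaluate the scalar prefactor $\prod_{\square\in\lambda}(-x\zeta)\cdot\prod_{\square\in\mu}(-x\zeta)$: the cross terms $\mp\sum_i\lambda_i\mu_i$ arising from the two products cancel, and the surviving pure-$\lambda$ and pure-$\mu$ sums evaluate, through the $n(\lambda),n(\lambda')$ identities above, to $x^{|\lambda|+|\mu|}f_\lambda/f_\mu$ with $f_\lambda$ as in \eqref{f-lam}; the sign $(-1)^{|\lambda|+|\mu|}=(-1)^{|\lambda|-|\mu|}$ matches the sign of $f_\lambda/f_\mu$.

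For the last identity I would specialize to $N_{\lambda,\lambda}(1)$, where the second box-product of \eqref{Nlammu} is literally $c_\lambda$ of \eqref{c-lam}. In the first box-product I write $1-q^{-a_\lambda(\square)-1}t^{-\ell_\lambda(\square)}=-q^{-a_\lambda(\square)-1}t^{-\ell_\lambda(\square)}\bigl(1-q^{a_\lambda(\square)+1}t^{\ell_\lambda(\square)}\bigr)$; taking the product over $\square\in\lambda$ produces $c'_\lambda$ times the scalar $(-1)^{|\lambda|}q^{-n(\lambda')-|\lambda|}t^{-n(\lambda)}$. Multiplying the two products and clearing this scalar gives $c_\lambda c'_\lambda=(-1)^{|\lambda|}q^{n(\lambda')+|\lambda|}t^{n(\lambda)}N_{\lambda,\lambda}(1)$, as claimed.
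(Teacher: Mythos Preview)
The paper states Lemma~\ref{lemma-N} without proof, so there is no argument there to compare against; your direct manipulation of the box-product \eqref{Nlammu} is the natural route. Parts two and three go through exactly as you describe: in the reflection formula the cross terms $\pm\sum_i\lambda_i\mu_i$ in the $q$-exponent cancel between the $\lambda$- and $\mu$-products and the surviving monomial is $x^{|\lambda|+|\mu|}f_\lambda/f_\mu$, while in \eqref{ccp} the two box-products of $N_{\lambda,\lambda}(1)$ are visibly $c_\lambda$ and, after the sign extraction you write down, $c'_\lambda$.

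For \eqref{NN} your regrouping by the level sets of $\lambda'_j$ along each fixed row $i$ is the right mechanism. Carried out, the first box-product of \eqref{Nlammu} becomes
\[
\prod_{i=1}^{\ell(\lambda)}\ \prod_{m\ge i}\ (uq^{-\mu_i+\lambda_{m+1}}t^{\,i-m};q)_{\lambda_m-\lambda_{m+1}},
\]
since for $(i,j)\in\lambda$ one has $\lambda'_j\ge i$, and the block $\lambda'_j=m$ is exactly $\lambda_{m+1}<j\le\lambda_m$; the second box-product is handled symmetrically with $\beta\ge\alpha$. This is precisely the ``delicate bookkeeping'' you flag, and it is worth doing explicitly: the inner index is naturally restricted to $m\ge i$, and if one drops that restriction the double product overcounts already for $\lambda=(2,1)$, $\mu=\emptyset$. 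What your argument produces is this restricted form, which is the identity that genuinely matches $N_{\lambda,\mu}(u)$ and is what is effectively used in the proof of Lemma~\ref{lem-GN}.
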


\begin{lem}\label{lem-GN}
Let $\varepsilon^\pm_\lambda$ be the algebra homomorphism in (\ref{epsilon-pm}). 
We have
\begin{align}
\exp\left(\sum_{n>0}{1\over n}{1-t^n\over 1-q^n} 
\left(\varepsilon^+_{\lambda} p_n\right) 
\left( \varepsilon^-_{\mu}p_n\right)  u^n \right)=
{\mathcal G}(u)^{-1} N_{\lambda,\mu}(u), \label{GN}
\end{align}
where ${\mathcal G}(u)$ being as in (\ref{calG}).
\end{lem}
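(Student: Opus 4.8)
The plan is to interpret the exponent on the left of (\ref{GN}) as a vertex-operator contraction and then to reduce the claim to the standard combinatorial identity expressing the Nekrasov factor $N_{\lambda,\mu}$ as a product over the two Young diagrams. First I would resum the geometric tail in the definition (\ref{epsilon-pm}) so as to write the two homomorphisms as principal specializations,
\begin{align*}
\varepsilon^+_\lambda(p_n)=\sum_{i\geq 1}\bigl(q^{\lambda_i}t^{-i+1/2}\bigr)^n,
\qquad
\varepsilon^-_\mu(p_n)=\sum_{j\geq 1}\bigl(q^{-\mu_j}t^{j-1/2}\bigr)^n,
\end{align*}
whence $\varepsilon^+_\lambda(p_n)\varepsilon^-_\mu(p_n)=\sum_{i,j\geq 1}\bigl(q^{\lambda_i-\mu_j}t^{j-i}\bigr)^n$. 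Since $\lambda_i=0$ for $i>\ell(\lambda)$ and $\mu_j=0$ for $j>\ell(\mu)$, each coefficient of $u^n$ on both sides of (\ref{GN}) is a finite expression, so I would carry out the whole argument as an identity of formal power series in $u$, comparing the coefficient of $u^n/n$ in the logarithms of both sides; this sidesteps any convergence issue.

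Taking the logarithm of the right-hand side of (\ref{GN}) and using (\ref{calG}) together with the arm--leg form of $N_{\lambda,\mu}$ in (\ref{Nlammu}), the coefficient of $u^n/n$ becomes
\begin{align*}
\frac{1}{(1-q^n)(1-t^{-n})}
-\sum_{s\in\lambda}q^{(-a_\mu(s)-1)n}t^{-\ell_\lambda(s)n}
-\sum_{s\in\mu}q^{a_\lambda(s)n}t^{(\ell_\mu(s)+1)n}.
\end{align*}
A short computation shows that the empty-diagram case gives $\varepsilon^+_\emptyset(p_n)\varepsilon^-_\emptyset(p_n)=\tfrac{1}{(1-t^{-n})(1-t^n)}$, so that $\tfrac{1-t^n}{1-q^n}\varepsilon^+_\emptyset(p_n)\varepsilon^-_\emptyset(p_n)=\tfrac{1}{(1-q^n)(1-t^{-n})}$ accounts exactly for the ${\mathcal G}(u)^{-1}$ contribution. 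Since the coefficient of $u^n/n$ in the logarithm of the left-hand side of (\ref{GN}) is $\tfrac{1-t^n}{1-q^n}\varepsilon^+_\lambda(p_n)\varepsilon^-_\mu(p_n)$, cancelling this common piece and writing $Q=q^n$, $T=t^n$ reduces the lemma to the single rational-function identity
\begin{align*}
\frac{1-T}{1-Q}\sum_{i,j\geq 1}\bigl(Q^{\lambda_i-\mu_j}-1\bigr)T^{j-i}
=-\sum_{s\in\lambda}Q^{-a_\mu(s)-1}T^{-\ell_\lambda(s)}
-\sum_{s\in\mu}Q^{a_\lambda(s)}T^{\ell_\mu(s)+1}.
\end{align*}

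I would prove this last identity by the standard telescoping argument. Only finitely many terms of the double sum on the left differ from the empty case, and for each fixed row index one sums the resulting geometric series in the column index; the factor $\tfrac{1-T}{1-Q}$ then collapses the telescoping sums, and the arm lengths $a_\lambda,a_\mu$ and leg lengths $\ell_\lambda,\ell_\mu$ appear as the exponents recording the surviving boundary boxes. I expect this bookkeeping to be the main obstacle: one must organize the double sum simultaneously by rows of $\lambda$ and columns of $\mu$, keep the two contributions $\square\in\lambda$ and $\blacksquare\in\mu$ separate, and verify that the exponents come out exactly as the $-a_\mu(s)-1,\,-\ell_\lambda(s)$ and $a_\lambda(s),\,\ell_\mu(s)+1$ appearing in (\ref{Nlammu}). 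Everything else is formal, and once the per-degree identity is in hand, exponentiating recovers (\ref{GN}); alternatively, one may package the same computation through the product identity $\exp\bigl(\sum_{n\geq1}\tfrac1n\tfrac{1-t^n}{1-q^n}z^n\bigr)=\tfrac{(tz;q)_\infty}{(z;q)_\infty}$, which turns the left side of (\ref{GN}) directly into the $q$-Pochhammer product form of $N_{\lambda,\mu}$ recorded in (\ref{NN}).
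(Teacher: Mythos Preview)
Your proposal is correct. The paper's own proof takes what you mention only as an alternative at the very end: it truncates the sums in $\varepsilon^\pm$ at a common bound $\ell\geq\max(\ell(\lambda),\ell(\mu))$, expands the product of the two bracketed expressions into four pieces, applies the identity $\exp\bigl(\sum_{n\geq1}\tfrac1n\tfrac{1-t^n}{1-q^n}z^n\bigr)=(tz;q)_\infty/(z;q)_\infty$ term by term, and then matches the resulting finite ratio of $q$-Pochhammer symbols against the product form (\ref{NN}) of $N_{\lambda,\mu}$. Your primary route is genuinely different: you stay at the level of the logarithm, subtract off the empty-partition contribution to isolate $\mathcal G(u)^{-1}$, and reduce everything to a single rational identity in $Q=q^n$, $T=t^n$ matching the arm--leg form (\ref{Nlammu}). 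This makes direct contact with (\ref{Nlammu}) rather than the reindexed product (\ref{NN}), at the cost of the telescoping bookkeeping you correctly flag as the main obstacle; the paper's route trades that bookkeeping for an appeal to (\ref{NN}), whose proof is of course equivalent combinatorics.

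One small point of phrasing: your double sum $\sum_{i,j\geq1}(Q^{\lambda_i-\mu_j}-1)T^{j-i}$ does not literally have only finitely many nonzero terms (for instance $i\leq\ell(\lambda)$, $j>\ell(\mu)$ gives $(Q^{\lambda_i}-1)T^{j-i}\neq0$). What is true is that after splitting off the tails it becomes a finite linear combination of geometric series in $T^{\pm1}$, which is what your next sentence actually uses; you should state it that way, or equivalently truncate at a common $\ell$ as the paper does, to keep the manipulation inside $\bbF$ rather than in a bi-infinite formal Laurent series.
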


\begin{proof}
Fix an integer $\ell$ such that
 $\ell\geq{\rm Max}(\ell(\lambda),\ell(\mu))$. 
We have
\begin{align*}
{\rm LHS}&=
\exp\left(\sum_{n>0}{1\over n}{1-t^n\over 1-q^n} u^n
\left(
{t^{-n}\over 1-t^{-n}}+\sum_{i=1}^\ell (q^{\lambda_i n}-1)t^{-in}
\right)
\left(
{t^n\over 1-t^n}+\sum_{j=1}^\ell (q^{-\mu_j n}-1)t^{jn}
\right)\right)\\
&
=
{\mathcal G}(u)^{-1} 
\exp\left(\sum_{n>0}{1\over n}{1-t^n\over 1-q^n} u^n
\left(
\sum_{i,j=1}^\ell (q^{\lambda_i }t^{-i})^n(q^{-\mu_j }t^{j})^n \right. \right.\\
&\quad\quad\quad\quad \left.\left.+
{t^{-(\ell+1)n}\over 1-t^{-n}}
\sum_{j=1}^\ell (q^{-\mu_j }t^{j})^n+
{t^{(\ell+1)n}\over 1-t^{n}}
\sum_{i=1}^\ell (q^{\lambda_i }t^{-i})^n
\right)\right)\\
&={\mathcal G}(u)^{-1} 
\prod_{i=1}^{\ell} \prod_{j=1}^{\ell}
{(uq^{-\mu_i +\lambda_j} t^{i-j+1};q)_\infty \over 
(uq^{-\mu_i +\lambda_j} t^{i-j};q)_\infty} \cdot 
\prod_{k=1}^\ell 
{(uq^{-\mu_k } t^{k-\ell};q)_\infty \over (uq^{\lambda_k } t^{-k+\ell+1};q)_\infty},
\end{align*}
were we have used the notation
\begin{align*}
(u;q)_\infty =\exp\left(-\sum_{n=1}^\infty {1\over 1-q^n} u^n\right)
\in \bbQ(q)[[u]].
\end{align*}
Note that 
$(u;q)_\infty/(q^n u;q)_\infty=(u;q)_n$ ($n=0,1,2,\ldots$),  and use (\ref{NN}), then we have (\ref{GN}). 
 \end{proof}

\begin{prp}\label{Phi-Phi}
We have the operator product formulas 
\begin{align}
&
:\tPhis_\emptyset(z) \xi_\lambda(z): 
:\tPhis_\emptyset(w) \xi_\mu(w): ={{\mathcal G}(w/z)\over N_{\mu,\lambda}( w/z)}
:\tPhis_\emptyset(z) \xi_\lambda(z)
\tPhis_\emptyset(w) \xi_\mu(w):,
\\
&
:\tPhi_\emptyset(z) \eta_\lambda(z): 
:\tPhi_\emptyset(w) \eta_\mu(w): =
{{\mathcal G}(qt^{-1}w/z)\over N_{\mu,\lambda}(qt^{-1}w/z)}
:\tPhi_\emptyset(z) \eta_\lambda(z) 
\tPhi_\emptyset(w) \eta_\mu(w):,\\
&
:\tPhis_\emptyset(z) \xi_\lambda(z): 
:\tPhi_\emptyset(w) \eta_\mu(w): =
{N_{\mu,\lambda}(q^{1/2}t^{-1/2}w/z)\over {\mathcal G}(q^{1/2}t^{-1/2}w/z)}
:\tPhis_\emptyset(z) \xi_\lambda(z) 
\tPhi_\emptyset(w) \eta_\mu(w):,\\
&
:\tPhi_\emptyset(z) \eta_\lambda(z): 
:\tPhis_\emptyset(w) \xi_\mu(w): =
{N_{\mu,\lambda}(q^{1/2}t^{-1/2}w/z)\over {\mathcal G}(q^{1/2}t^{-1/2}w/z)}
:\tPhi_\emptyset(z) \eta_\lambda(z) 
\tPhis_\emptyset(w) \xi_\mu(w):.
\end{align}
\end{prp}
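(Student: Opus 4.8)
The plan is to reduce all four operator product formulas to a single Baker--Campbell--Hausdorff contraction and then recognise the resulting scalar through Lemma \ref{lem-GN}. The starting point is that Proposition \ref{evalu} already exhibits each building block as a normal-ordered exponential in the Heisenberg modes. Writing the generic argument $z$ in place of $q^{1/2}v$ (that is, setting $v=q^{-1/2}z$) in Proposition \ref{evalu} gives
\[
:\tPhi_\emptyset(z)\,\eta_\lambda(z):\;=\;\exp\Big(\sum_{n\geq 1}c^{\Phi}_n\,a_{-n}\Big)\exp\Big(\sum_{n\geq 1}\alpha^{\Phi}_n\,a_n\Big),
\]
and likewise for $:\tPhis_\emptyset(z)\,\xi_\lambda(z):$, where each of the coefficients $c^{\Phi}_n,\alpha^{\Phi}_n,c^{\Phi^*}_n,\alpha^{\Phi^*}_n$ has the form $\pm\frac1n\frac{1-t^n}{1-q^n}\,(\text{monomial in }q,t,z)\,\varepsilon^{\pm}_\lambda(p_n)$. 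The only structural difference between the $\tPhi$ and the $\tPhis$ coefficients is an extra factor $(q^{1/2}t^{-1/2})^n$ carried by the $\tPhi$ coefficients. Since each such operator is already normal ordered, forming a product of two of them leaves exactly one pair of exponentials in the wrong order.

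Concretely, for $O_1O_2=\exp(C_1)\exp(A_1)\exp(C_2)\exp(A_2)$ with $C_i$ built from creation modes $a_{-n}$ and $A_i$ from annihilation modes $a_n$, I would move $\exp(A_1)$ past $\exp(C_2)$ using the elementary identity $e^{A_1}e^{C_2}=e^{[A_1,C_2]}e^{C_2}e^{A_1}$, valid because $[A_1,C_2]$ is central. On $\calF$ one has $a_0=1$ and $[a_n,a_{-n}]=n\frac{1-q^n}{1-t^n}$, so the commutator collapses to a single sum
\[
[A_1,C_2]=\sum_{n\geq 1}\alpha^{(1)}_n\,c^{(2)}_n\,n\,\frac{1-q^n}{1-t^n}\;=\;\pm\sum_{n\geq 1}\frac1n\frac{1-t^n}{1-q^n}\,(\kappa\, w/z)^n\,\varepsilon^+_\mu(p_n)\,\varepsilon^-_\lambda(p_n),
\]
where the two factors $\frac1n\frac{1-t^n}{1-q^n}$ combine with the $n\frac{1-q^n}{1-t^n}$ from the commutator to restore exactly the weight $\frac1n\frac{1-t^n}{1-q^n}$ appearing in Lemma \ref{lem-GN}, and the monomials in $z$ and $w$ combine into $(\kappa\, w/z)^n$ for a shift $\kappa\in\{1,\,q^{1/2}t^{-1/2},\,qt^{-1}\}$. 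Exponentiating and invoking Lemma \ref{lem-GN} identifies $\exp([A_1,C_2])$ with $\big({\mathcal G}(\kappa w/z)^{-1}N_{\mu,\lambda}(\kappa w/z)\big)^{\pm 1}$, while the surviving factor $\exp(C_1)\exp(C_2)\exp(A_1)\exp(A_2)$ is by definition $:\!O_1O_2\!:$.

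It then remains to read off, case by case, the shift $\kappa$ and the overall sign. The shift is dictated by the relative $(q^{1/2}t^{-1/2})^n$ factors: contracting $\tPhis$ with $\tPhis$ (neither carries the factor) gives $\kappa=1$; contracting $\tPhi$ with $\tPhi$ (both carry it) gives $\kappa=qt^{-1}$; and the two mixed contractions $\tPhis$--$\tPhi$ and $\tPhi$--$\tPhis$ each carry it once and give $\kappa=q^{1/2}t^{-1/2}$. Tracking the signs of the coefficients in Proposition \ref{evalu} shows that $[A_1,C_2]$ is negative for the like-type products $\tPhis$--$\tPhis$ and $\tPhi$--$\tPhi$, so that $\exp([A_1,C_2])={\mathcal G}(\kappa w/z)/N_{\mu,\lambda}(\kappa w/z)$, and positive for the mixed products, giving $N_{\mu,\lambda}(\kappa w/z)/{\mathcal G}(\kappa w/z)$; matching against the stated right-hand sides completes the proof. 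I expect the only genuine work to be this sign-and-shift bookkeeping; the conceptual core, that the entire contraction is the exponential of one $\varepsilon^+_\mu\varepsilon^-_\lambda$ sum evaluated by Lemma \ref{lem-GN}, needs nothing beyond matching coefficients. One could instead bypass Proposition \ref{evalu} and multiply the elementary two-point functions $\eta(z)\eta(w)=f(w/z):\!\eta(z)\eta(w)\!:$, $\xi(z)\xi(w)=f(q^{-1}tw/z):\!\xi(z)\xi(w)\!:$ together with the $\tPhi_\emptyset,\tPhis_\emptyset$ contractions of Lemmas \ref{Phi0} and \ref{Phis0} over all boxes of $\lambda$ and $\mu$; but recognising the resulting infinite product of $f$-factors as $N_{\mu,\lambda}$ is precisely the content of Lemma \ref{lem-GN}, so the exponential route above is both shorter and more transparent.
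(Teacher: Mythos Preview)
Your proposal is correct and is precisely the argument the paper has in mind: its one-line proof invokes exactly the same ingredients, namely the exponential forms (\ref{Phi-1}), (\ref{Phis-1}) from Proposition \ref{evalu} together with Lemma \ref{lem-GN}. You have simply written out in full the BCH contraction and the sign/shift bookkeeping that the paper leaves implicit.
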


\begin{proof}
These follow from (\ref{Phi-1}), (\ref{Phis-1}) and (\ref{GN}).
\end{proof}

\subsection{Proof of Proposition \ref{propos-1}}

Using Lemma \ref{lemma-N} and Proposition \ref{Phi-Phi}, we have 
\begin{align*}
\mbox{LHS of (\ref{pro-1})}
 =&
 \sum_{\lambda^{(1)},\cdots,\lambda^{(N_c)}}
 \prod_{k=1}^{N_c}
 {\left(q^{-1/2}t^{1/2} v_i u_i^{-1}w_i^{L-M-2i+2}\right)^{|\lambda^{(i)}|}
 \over N_{\lambda^{(i)},\lambda^{(i)}}(1)}
 f_{\lambda^{(i)}}^{L-M-2i+1}\nonumber\\
&
\times \bra{0}
:\tPhis_\emptyset(-w_{N_c}) \xi_{\lambda^{(N_c)}}(-w_{N_c}): \cdots 
:\tPhis_\emptyset(-w_1) \xi_{\lambda^{(1)}}(-w_1): \ket{0} \nonumber \\
&
\times
\bra{0}:\tPhi_\emptyset(-w_{N_c}) \eta_{\lambda^{(N_c)}}(-w_{N_c}):\cdots
:\tPhi_\emptyset(-w_1) \eta_{\lambda^{(1)}}(-w_1):\ket{0} \nonumber\\
&
=
 \sum_{\lambda^{(1)},\cdots,\lambda^{(N_c)}}
 \prod_{k=1}^{N_c}
 {\left(q^{-1/2}t^{1/2} v_i u_i^{-1}w_i^{L-M-2i+2}\right)^{|\lambda^{(i)}|}
 \over N_{\lambda^{(i)},\lambda^{(i)}}(1)}
 f_{\lambda^{(i)}}^{L-M-2i+1}\nonumber\\
 &\times 
 \prod_{1\leq i<j\leq N_c} { {\mathcal G}(w_i/w_j)\over N_{\lambda^{(i)},\lambda^{(j)}} (w_i/w_j)}
  { {\mathcal G}(q t^{-1}w_i/w_j)\over N_{\lambda^{(i)},\lambda^{(j)}} (qt^{-1}w_i/w_j)}.
\end{align*}
Simplifying the factors by using Lemma \ref{simplify} below, we have the result. \hfill $\square$

\begin{lem}\label{simplify}
We have
\begin{align}
&
\prod_{1\leq i<j\leq N}(q^{1/2} t^{-1/2})^{-|\lambda^{(i)}|-|\lambda^{(j)}|}=
\prod_{i=1}^N (q^{1/2} t^{-1/2})^{-(N-1)|\lambda^{(i)}|},\\
&
\prod_{1\leq i<j\leq N}w_i^{-|\lambda^{(i)}|}w_j^{|\lambda^{(j)}|}=
\prod_{i=1}^N w_i^{(-N+2i-1)|\lambda^{(i)}|},\\
&
\prod_{1\leq i<j\leq N}w_i^{-|\lambda^{(j)}|}w_j^{|\lambda^{(i)}|}=
\prod_{i=1}^N 
\left(w_1w_2\cdots w_N\right)^{|\lambda^{(i)}|}w_i^{-|\lambda^{(i)}|}
\left(w_1w_2\cdots w_{i-1}\right)^{-2|\lambda^{(i)}|}.
\end{align}
\end{lem}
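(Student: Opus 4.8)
The plan is to verify each of the three identities by comparing, on both sides, the exponent of every indeterminate; since all factors appearing are powers of $q^{1/2}t^{-1/2}$ or of the $w_i$, this reduces each identity to an elementary counting statement. The common combinatorial input is that, for fixed $k$, the index $k$ occurs as the smaller member $i$ of a pair $(i,j)$ with $i<j$ in exactly $N-k$ pairs, and as the larger member $j$ in exactly $k-1$ pairs.

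For the first identity, each pair $(i,j)$ contributes $(q^{1/2}t^{-1/2})^{-|\lambda^{(i)}|-|\lambda^{(j)}|}$, so the coefficient of $|\lambda^{(k)}|$ in the total exponent of $q^{1/2}t^{-1/2}$ is $-(N-k)-(k-1)=-(N-1)$, independent of $k$, which reproduces the right-hand side. For the second identity I would run the same count, but now the two occurrences carry opposite signs: as $i$ the index $k$ contributes $-|\lambda^{(k)}|$ to the exponent of $w_k$, and as $j$ it contributes $+|\lambda^{(k)}|$; hence the exponent of $w_k$ is $\big(-(N-k)+(k-1)\big)|\lambda^{(k)}|=(-N+2k-1)|\lambda^{(k)}|$, as claimed.

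The third identity is the only one where the exponent of $w_k$ mixes the sizes of the other partitions. I would first read off the left-hand side: the factor $w_i^{-|\lambda^{(j)}|}$ with $i=k$ contributes $-|\lambda^{(j)}|$ for each $j>k$, and the factor $w_j^{|\lambda^{(i)}|}$ with $j=k$ contributes $+|\lambda^{(i)}|$ for each $i<k$, so the exponent of $w_k$ equals $\sum_{i<k}|\lambda^{(i)}|-\sum_{j>k}|\lambda^{(j)}|$. I would then expand the right-hand side: the product $\prod_i(w_1\cdots w_N)^{|\lambda^{(i)}|}$ contributes $\sum_i|\lambda^{(i)}|$ to the exponent of every $w_k$, the factor $w_i^{-|\lambda^{(i)}|}$ contributes $-|\lambda^{(k)}|$ (only for $i=k$), and $(w_1\cdots w_{i-1})^{-2|\lambda^{(i)}|}$ contributes $-2\sum_{i>k}|\lambda^{(i)}|$; summing, $\sum_i|\lambda^{(i)}|-|\lambda^{(k)}|-2\sum_{i>k}|\lambda^{(i)}|=\sum_{i<k}|\lambda^{(i)}|-\sum_{j>k}|\lambda^{(j)}|$, matching the left-hand side.

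No genuine obstacle arises: the statement is a pure reindexing of products over $1\leq i<j\leq N$ into products over a single index. The only point requiring care is bookkeeping---consistently separating the role of a fixed $k$ into its appearances as the smaller and as the larger index, and, in the third identity, keeping track of the signs and of the doubled factor $(w_1\cdots w_{i-1})^{-2|\lambda^{(i)}|}$.
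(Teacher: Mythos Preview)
Your proof is correct. The paper states this lemma without proof, treating it as an elementary bookkeeping exercise; your verification by comparing the exponent of each indeterminate on both sides is exactly the intended argument, and all three counts are carried out correctly.
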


{\bf Acknowledgments}. 
Research of BF is partially supported by RFBR initiative interdisciplinary project grant 09-02-12446-ofi-m, 
by RFBR-CNRS grant 09-02-93106, RFBR grants 08-01-00720-a, 
NSh-3472.2008.2 and 07-01-92214-CNRSL-a. 
Research of JS is supported by the Grant-in-Aid for Scientific Research 
C-20272536. The authors thank A. Belavin, H. Kanno, V. Pasquier and Y. Yamada 
for stimulating discussions.


\end{document}